\newtheorem{theorem}{Theorem}
\newtheorem{Corollary}{Corollary}
\newtheorem{Definition}{Definition}
\newtheorem{Lemma}{Lemma}
\newtheorem{Proposition}{Proposition}
\crefname{Lemma}{Lemma}{Lemma}
\crefname{theorem}{Theorem}{Theorem}
\crefname{Definition}{Definition}{Definition}
\crefname{Corollary}{Corollary}{Corollary}
\crefname{Proposition}{Proposition}{Proposition}
\let\myTOC\tableofcontents
\renewcommand\tableofcontents{  \frontmatter
  \pdfbookmark[1]{\contentsname}{}
  \myTOC
  \mainmatter }
\begin{document}

		\title{Tilting Approximate Models}
		\author{Andreas Tryphonides\\ University of Cyprus}
		
	\email{tryfonidis.antreas@ucy.ac.cy}
	\date{\today}


		\thanks{\tiny{Parts of this paper draw from chapter 3 of my PhD thesis (EUI). I thank Fabio Canova, Peter Reinhard Hansen, Giuseppe Ragusa and Frank Schorfheide for useful comments and suggestions. Earlier versions of this paper (circulated with different titles) greatly benefited from discussions with Raffaella Giacomini, George Tauchen and comments from  the participants at the 23rd MEG (Bloomington), the 1st IAAE (London), the 68th ESEM (Toulouse), the 4th International Conference in memory of Carlo Giannini (Pavia), the Econometrics Study Group (Bristol), SofiE(2022), the Economic Risk seminar (Humboldt U), the Econometrics seminar at the Tinbergen Institute, the EUI Econometrics Working Group and the University of Wisconsin Madison lunch seminar. I also acknowledge financial support from the UCY starting grant. \\\\\textit{Email}: tryfonidis.antreas@ucy.ac.cy\\\textit{Address}: Panepistimiou 1, 2109, Nicosia, Cyprus. }}

	\setcounter{page}{1}
	\pagenumbering{arabic}
	\pagestyle{plain}
	\thispagestyle{empty}

\onehalfspacing

\begin{abstract}
Model approximations are common practice when estimating structural or quasi-structural models. The paper proposes utilizing projections to re-impose information about the exact model in the form of conditional moments. The resulting estimator efficiently combines the information provided by the approximate model and the  moment conditions. The paper develops the corresponding asymptotic theory and provides simulation evidence that tilting substantially reduces the mean squared error for parameter estimates. It applies the methodology to pricing long-run risks in aggregate consumption in the US, whereas the model is solved using the \citet{doi:10.1111/j.1540-6261.1988.tb04598.x}  approximation. Tilting improves empirical fit and results suggest that approximation error is a source of upward bias in estimates of risk aversion and downward bias in the elasticity of intertemporal substitution.
	    \newline	\newline
	Keywords: Information
		Projections, Approximation Bias, Non-linearity, Asset Pricing.\newline JEL Classification: C10, C51, E44 
\end{abstract}

\maketitle

\normalsize

\onehalfspacing
\newpage
\section{Introduction} Model approximations are quite common in structural estimation as they are regularly used to obtain the equilibrium law of motion, or more generally, the reduced form of a model. It is well documented in the literature that alternative methods deliver approximation errors that may not be uniform across the state space, while reducing the error can be quite costly in terms of running time (see e.g. \citet{FERNANDEZVILLAVERDE2016527} for structural macroeconomic models). The quality of approximation employed for estimation can induce  a form of misspecification that can be consequential for  judging the  importance of different mechanisms and structures. For example, \citet{JOFI:JOFI12615}   show that for asset pricing models with long run risks, the error in moments when using a log-linear approximation can exceed $70\%$, while the equity risk premium can be overestimated by 100 basis points. \citet{doi:10.3982/ECTA12791} find that errors from the first and second order perturbation solutions of the New Keynesian model can exceed $100\%$.\footnote{Approximations also matter for parameter identification, as they can effectively limit the model's empirical content. For example, \citet{Canova2009431} shows that in log-linearized DSGE models the mapping between the structural parameters and  the coefficients of the solution (which map to first and second moments in the data) is ill-behaved.} 

The trade-off between tractability and misspecification becomes very relevant to parameter estimation, as the solution must be repeatedly evaluated at alternative points in the parameter space.   This paper considers the use of conditional density projections (see \cite{Komunjer_Ragusa}  and \cite{Czizar} for the unconditional case) and shows that when applied to structural models, they  effectively combine the information  produced by the approximate model with the one contained in the exact equilibrium conditions. 
The approximate model is represented by a
conditional probability measure with density $f(X|Z,\varphi)$, indexed by parameter vector $\varphi$  and conditioned on $Z$. The projection tilts the approximate model by obtaining a probability distribution that is as close
as possible to the base measure and satisfies the
conditional moments of the exact model, that is $\mathbb{E}(m(X,\vartheta )|Z)=0$, where $\vartheta$ is the structural parameter of interest. For example, the approximate density can be the one generated by a complete log-linearized dynamic equilibrium model for $X$ given $Z$ (which includes lagged values of $X$), while the exact moment could be the non-linear Euler equation for consumption.\footnote{Potential applications extend beyond macroeconomics. Most structural microeconomic models that are based on optimizing agents have to satisfy analogous conditions. Other examples include models of network formation, where the conditional density of an exponential random graph model is usually approximated, and the pseudo-likelihood may not correspond to a complete set of sufficient statistics (see e.g. \citet{paula_2017,10.1214/13-AOS1155} and references therein). The information projection can potentially re-impose this information on the pseudo-likelihood.)} 
   
   More intuition can be built by previewing one of the  results of this paper. Under a sequence of approximate models local to the data generating process, maximizing the likelihood of the tilted model is equivalent to exploiting the following moment for estimation, which automatically combines the efficient GMM first order condition  with the (orthogonalized) score of the approximate model:\footnote{$(M_{z},V_{m,z})$ signify the Jacobian and Variance of the conditional moments.}
\begin{eqnarray}\underset{\text{\quad  \quad Efficient GMM}\quad  \quad\text{Orthogonalized Score of Approximate model} }{\mathbb{E}\left({{M}_{Z}'V^{-1}_{m,Z}\mathbb{E}(m|Z)}+ \frac{d\varphi}{d\vartheta}'\left(\mathfrak{s}- \mathbb{E}(\mathfrak{s}m'|Z)V^{-1}_{m,Z}\mathbb{E}(m|Z)\right)\right)}\approx0\label{eq:intro_mom}
\end{eqnarray}By construction, the implied estimate for $\vartheta$ is going to be more efficient than the one that only exploits  $\mathbb{E}(m(X,\vartheta )|Z)=0$ for estimation. Symmetrically, the bias arising from employing the approximate model will be much lower once it is combined with the exact moment condition. The result is a substantial reduction in the mean squared error for estimating $\vartheta$.\footnote{In a very different  setting, \citet{DITRAGLIA2016187} proposes using invalid instruments on purpose in an IV framework. In this paper the moment conditions that are based on "invalid instrument" are the scores of the approximate model. The difference is that the additional moment conditions are not arbitrary, but pinned down by the approximate model, which achieves the best possible efficiency in the correct specification limit. }

Information projections have been considered in different contexts, such as forecasting (see e.g. \cite{Giacomini2014145, 10.2307/3839160}) and parameter estimation using conditional or unconditional moment restrictions (see e.g. the use of exponential tilting in \cite{SusanneSchennah,KitamuraStutzer, 10.2307/2998561}, 
and Generalized Empirical Likelihood  (GEL)  i.e. \cite{ECTA:ECTA482}). Within the latter literature, a non-parametric model of the (joint) distribution of the data is forced to satisfy a set of (conditional) moment restrictions in the first step, giving rise to a non-parametric likelihood function which can be used to estimate and do inference on the parameters indexing the moment condition.
 This paper departs from this strand of the literature, and in particular \cite{SusanneSchennah}, by considering a generalized version of exponential
tilting in the first step, where the form of $f(X|Z,\varphi)$ is parametrically specified. 
This expands the range of probability measures that can be considered, as it allows us to investigate the performance of estimators based on tilting structural or quasi-structural models. Within the macroeconomic literature, information projections have been considered by \citet{10.3982/QE737} as a way to improve the numerical accuracy of discretization of Markov processes such as income, which are then used to price disaster risk. This paper considers the econometric properties of employing projections to restore information lost by approximating the law of motion in the generic case.\footnote{Similar in spirit, \citet{KRISTENSEN2017189} consider Newton-Raphson adjustments of approximate estimators to remove the first-order bias that arises from stochastic and non-stochastic approximations. In a broadly related paper, \citet{https://doi.org/10.3982/QE1413} propose composite estimators to deal with misspecification in macroeconomic models. }

The paper provides consistency and asymptotic distribution results for a generic $(\vartheta,\varphi)$, where $\varphi$ is finite dimensional. The main results 
concern a density which is induced by a structural approximation, such as in models with optimizing agents,  and hence there is a mapping from $\varphi$ to $\vartheta$. When misspecification vanishes at a fast enough rate, $\hat{\vartheta}$ attains the parametric lower bound (Cramer-Rao bound).  Analogous rate conditions have been considered by \citet{Geweke_Hahn_Ackerberg} when estimating approximated equilibrium models using likelihood methods. 
The paper also derives the asymptotic distribution along drifting parameter sequences that are local to the true model, where misspecification is in the form of improper finite dimensional restrictions to the true density and the error vanishes at a root-n rate. The pseudo maximum likelihood estimator (PMLE) of the tilted approximate model can dominate in terms of asymptotic risk the MLE of the exact model. This dominance is the result of both higher efficiency and low bias of the PMLE based on the tilted approximate model. Simulation evidence shows that tilting significantly lowers the MSE as compared to the "naive" MLE based on the approximate model, which has a higher bias.    
  For completeness and ease of exposition, the paper first provides results when the base density is a non-structural model, and hence $\varphi$ and $\vartheta$ are independent.   When misspecification vanishes at a fast enough rate, $\hat{\vartheta}$ attains the semi-parametric lower bound (SLB) (see \cite{CHAMBERLAIN1987305}).\footnote{This is the same bound achieved by other semi-parametric estimators such as GMM and GEL and its variants. 
 	In simulation evidence based on the Mean Squared Error (MSE), the estimator is compared to the Continuously Updated GMM estimator (CU-GMM) and the Empirical Likelihood (EL) estimator in the conditional moment case.  Additional evidence is provided regarding CU-GMM and the Exponentially Tilted EL (ETEL) in the unconditional moment case as well. In all cases, the estimator behaves favorably in finite samples.  }

The estimation approach proposed in the paper can be also casted as an adversarial estimator. Recent work has highlighted the connections between adversarial estimation and GEL estimation (see e.g. \citet{https://doi.org/10.48550/arxiv.2007.06169,https://doi.org/10.48550/arxiv.2204.10495,Gigliutti}) while in the machine learning literature this problem has been casted as a zero sum game between the adversary (generator) and a discriminator. In Appendix \hyperref[AppA]{A} , I discuss how the estimator can be formulated as the outcome of a sequential, non-zero sum game.

The empirical application contributes to the literature that is related to non-linearities in macro-finance models with long run risk, and the misspecification that arises from linearization techniques, as the latter ignore the level effects of consumption growth on risk premia (see i.e. \citet{JOFI:JOFI12615} and references there in.). Information projections are applied to pricing long run risks in aggregate consumption \citep{doi:10.1111/j.1540-6261.2004.00670.x}, where the base density is generated using the \citet{doi:10.1111/j.1540-6261.1988.tb04598.x} approximation. 
Using US data, the paper investigates the empirical importance of non-linearity by re-imposing the non-linear equilibrium condition.  The tilted model is strongly preferred by the data. Parameter estimates suggest that the quality of the approximation is yet another reason for the downward bias to estimates of the intertemporal elasticity of substitution and the upward bias in risk aversion, which have been puzzling in the literature, as they imply that asset prices are increasing in uncertainty.





The rest of the paper is organized as follows. Section 2 introduces information projections and provides an asset pricing example.
Section 3 presents the econometric properties, computational aspects and supportive simulation evidence. Section 4 applies the methodology to the long run risk model and provides additional simulation evidence.  Section 5 concludes. Appendix \hyperref[AppA]{A}  contains the main proofs and derivations. Appendix \hyperref[AppB]{B}  (intended for online publication) contains supplementary results. 

Finally, a word on notation.  The set of parameters
$\psi$ is decomposed in $\vartheta\in\Theta$, the set of structural
(economic) parameters, and $\varphi$ the parameters indexing the density
$f(X|Z,\varphi)$. Let $N$ denote the length of the
data and $N_{s}$ the length of simulated series. $X$ is an $n_{x}\times1$
vector of the variables of interest while $Z$ is an $n_{z}\times1$ vector
of conditioning variables. Both $X$ and $Z$ induce a probability
space $(\Omega,\mbox{\ensuremath{\mathcal{F}}},\mathbb{P})$. 
Three different probability measures are used, the true measure $\mathbb{P}$ (with $\mathbb{P}_{N}$ the corresponding empirical measure),
the base measure $F_{\varphi}$ which is indexed by parameters $\varphi$
and the ${H}_{(\varphi,\vartheta)}$ measure which is obtained after
the information projection. These measures are absolutely continuous
with respect to a dominating measure $v$, where $v$ is the Lebesgue measure, and possess the corresponding density functions $p,f$
and $h$.  Conditional measures and densities are denoted by an additional index that specifies the conditioning variable, i.e. $F_{\varphi,z}$. 
  $\mathbb{E}_{P} $ is the mathematical expectations
operator with respect to measure $P$; $\mathbb{E}_{P_{z}}$ is shorthand for conditional expectations respectively. 
$q^{l}(X,Z,\psi)$
is a general $X\otimes Z$ measurable function and ${q}(X,Z,\psi)$ is an $n_{q}\times1$
vector containing these functions. Moreover, $q_{\psi}$ abbreviates the Jacobian matrix of $q$ and $q_{\psi\psi'}$ the Hessian with respect to $\psi$. Unless otherwise stated, $||.||$ signifies the Euclidean norm.  For any
(matrix) function the subscript $i$ denotes the evaluation
at datum $(x_{i},z_{i})$. Subscript $j$ is for simulated data using the base density. The operator $\to_{p}$ signifies convergence
in probability and $\to_{d}$ convergence in distribution; $\mathcal{N}(.,.)$
is the Normal distribution. For any $q$, we may abbreviate its mathematical expectation given measure $P$ by $q_{P}$, that is, $q_{P}\equiv \mathbb{E}_{P}q$ while $V_{P,q}$ is the corresponding variance. Conditional variance is denoted by $V_{P_{z},q}$. If $P\equiv\mathbb{P}$ then $V_{\mathbb{P}}\equiv\mathbb{V}$. $V_{ll'}$ signifies the $(l,l')$ component of a matrix $V$. 

\section{Tilting the Approximate Model}
The main idea can be illustrated as follows. Economic theory implies a set of restrictions:\[\int m(X,Z,\vartheta)d\mathbb{P}(X|Z,\vartheta)=0 \] where $m(X,Z,\theta)$ is a set of conditional moment functions characterized by $\vartheta$, which includes all the economically relevant parameters. The researcher is able to obtain an approximation to $\mathbb{P}(X|Z,\vartheta)$ e.g. the density implied by the solution to the log-linearized model, $f(X|Z,\varphi)$, which by construction does not satisfy the original moment condition: $\int m(X,Z,\vartheta)f(X|Z,\varphi)dX\neq 0$.  This implies  that matching the approximated model to the data violates the main economic implications of the original specification, and this violation is state (time) and parameter dependent. 

This paper proposes to use a conditional information projection, that is, to obtain a density that is as close as possible to the approximate density $f(X|Z,\varphi)$ but satisfies the original moment conditions. This is equivalent to solving the following infinite dimensional optimization problem:\footnote{One could also consider optimizing with respect to $h(X|Z)g(Z)$ as $g(Z)$ is also unknown, yet $h^{\star}$ is the same as the moment restrictions hold for any $Z$, and therefore any $g(Z)$.} \small
\begin{eqnarray}
\min_{h(X|Z)}\int h(X|Z)log\left(\frac{h(X|Z)}{f(X|Z,\varphi)}\right)g(Z)d(X,Z) \label{principal}
\end{eqnarray}
\begin{eqnarray*}
&s.t.&\int h(X|Z)m(X,Z,\vartheta)dX=0,\int h(X|Z)dX=1,Z - a.e.
\end{eqnarray*}\normalsize where $g(Z)$ is the marginal distribution of $Z$.
In the information
projections literature  minimization  \eqref{principal} is called exponential tilting it minimizes the Kullback-Leibler ($\mathbf{KL}$) distance, whose convex conjugate has an exponential form. The solution to the above problem, if it exists\footnote{ See \cite{Komunjer_Ragusa} and section 3 where I relate my assumptions to theirs.}, is given by
\begin{equation}
h^{\star}(X|Z,\psi)=f(X|Z,\varphi)\exp\left(\lambda(Z,\psi)+\mu(Z,\psi)'m(X,Z,\vartheta)\right) \label{eq:h}
\end{equation} where $\psi=(\vartheta',\varphi')'$,  $\mu$ is the vector of the Lagrange multiplier functions enforcing
the conditional moment conditions on $f(X|Z,\varphi)$ and $\lambda$
is a scaling function. Both are pinned down by:
\begin{eqnarray}
	&\mu(Z,\psi)=\underset{\mu\in\mathbb{R}^{n_{m}}}{\arg\min} \int f(X|Z,\varphi)\exp(\mu'm(X,Z,\vartheta)dX \label{eq:mudef}\\ 
	&\lambda(Z,\psi)=-\log\left(\int f(X|Z,\varphi)\exp(\mu(Z,\psi)'m(X,Z,\vartheta)dX\right) &   \label{eq:lambdadef}
\end{eqnarray}
Given \eqref{eq:h}, $\psi$ is estimated using the (limited information) log likelihood function:
\begin{eqnarray}
\psi^{\star}&=\underset{{\psi\in \Psi}}{\arg\max} \int \log(h^{\star}(X|Z,\psi))d\mathbb{P}(X,Z)& \label{eq:agent}
\end{eqnarray}
Had we used an alternative objective function to \eqref{principal}, e.g. another particular case from the general family of divergences in \cite{cressie_read}, this would result to a different form for $h^{\star}(X|Z,\psi)$.\footnote{For brevity, in the rest of the paper I drop the $\star$ notation on $h$.} Under correct specification for $f(X|Z,\varphi)$, this choice does not matter asymptotically, while it matters in finite samples. Exponential tilting nevertheless ensures a positive density function $h$.

\subsection{Relation to other methods}

The projection in the first step (i.e. imposing the conditional moment conditions) is done using a different divergence measure than the estimation objective. A consequence is that the first order conditions of the estimator are mathematically different than a constrained estimator such as the one employed by \citet{1989}.\footnote{In other words, the primal problem of maximum tilted likelihood is not equivalent to the primal problem of constrained maximum likelihood estimation. In terms of implementation, tilting replaces the nonlinear constraint on the coefficients of the approximating density with a set of convex optimizations.} Furthermore, the projection results in exponential tilting of the base density, which  preserves its properties as a density.\footnote{This is in contrast to using other divergences which might result in the concentrated objective not having a density interpretation such as empirical likelihood (EL), i.e. the EL weights can be negative, do not satisfy the restrictions for arbitrary parameters and do not provide an estimated density \citep{2007ch}.}
The closest estimation method is  that of \cite{SusanneSchennah}. In the case of unconditional moment restrictions and when the base measure is entirely uninformative, that is $F(X;\varphi)$ is the empirical cdf, the proposed estimator  collapses to the ETEL estimator as follows:
\begin{eqnarray}
	&\max_{\vartheta\in \Theta} \sum_{i=1..N} \log(\hat{w}_{i})& 
\end{eqnarray} \normalsize
with $\hat{w}_{i}=\underset{w_{i}}{\arg\min} \sum_{i=1..N} w_{i}log\left(Nw_{i}\right)$, s.t. $\sum_{i=1..N}w_{i}m(X_{i};\vartheta)=0,\sum_{i=1..N}w_{i}=1$. \normalsize

 \citet{SusanneSchennah} argues that ETEL behaves favorably as it combines the lower higher order bias of EL and the robustness of ET to moment misspecification, as implied probabilities tend to distribute weights more evenly. In this paper, 
the tilting in the first step makes sure that $h$ has a density interpretation (similar to EL weights). Moreover, the moment conditions are not satisfied with respect to the approximate model $(f)$. Compared to other divergences, the KL distance $\int h ln(\frac{h}{f})dv$ is expected to perform better under this type of misspecification (similar to the robustness of ET).\footnote{In particular,  there is no need to have extreme values of $h$ on sample points at which the moment function $(m)$ is close to zero, because $h$ can be close to zero at points that have large values of $m$, even if $f>0$. This would not be possible with the reverse KL distance, which would require $f$ to be zero when $h=0$, making computation cumbersome.} 

Moreover, the parametric nature of the base density employed in this paper allows us to consider a wider class of probability measures that may be derived from structural models, expanding therefore the applicability of information projections to this class of models. 
As already mentioned, the base density can be generated by an explicit approximate solution to a set of optimality and equilibrium conditions, which could arise from both macroeconomic and microeconomic models.\footnote{When the approximate density is non-structural, then its choice can be informed by utilizing possible knowledge of the reduced form of the structural model and directly using such a form in constructing the base density without explicitly obtaining an approximate solution. In the (log) linearized DSGE case, for example, this corresponds to using a $VAR(p)$ or $VARMA(p,q)$ where $(p,q)$ can increase with the sample size, or a state space model in general. }
It is also worth commenting on why tilting using a conditional density (and a conditional moment restriction) might be preferable. In some potential applications, a time varying conditional density is needed to accommodate possible exogenous structural shifts in the economy. Correspondingly, the conditional moment itself can be time varying e.g. when agents display a time preference shift, which also implies endogenous time variation in the underlying density. In order to estimate the parameters governing the structural change jointly with the rest of the parameters, the tilted likelihood function has to be constructed sequentially. 

Finally, in the case in which $f(X|Z,\varphi)$ belongs to the exponential family and the moment conditions are linear, exponential tilting is a convenient choice as theoretical conditions
imposes additional structure to the moments of the base density. I present below an illustrative example of projecting on densities that satisfy moment 
conditions that arise from economic theory. Due to linearity, the resulting distribution
after the change of measure is conjugate to the base measure. 


\subsection{An Example from Asset Pricing} 
The consumption - savings decision of the representative household implies an Euler equation restriction on the joint stochastic process of consumption, $C_{t}$, and gross interest rate, $R_{t}$, where  $\mathcal{F}_{t}$ is the information set of the agent at time $t$ and $\mathbb{E}_{\mathbb{P}}$ signifies rational expectations :\vspace{-0.05 in} 
\begin{equation*}
\mathbb{E}_{\mathbb{P}}(\beta R_{t+1}U_{c}(C_{t+1})-U_{c}(C_{t})|\mathcal{F}_{t})=0
\end{equation*}
Suppose that the base model is a bivariate VAR for consumption and the
interest rate which, for analytical tractability, are not correlated. Their joint
density conditional on $\mathcal{F}_{t}$ is therefore: \small
\[
\left(\begin{array}{c}
C_{t+1}\\
R_{t+1}
\end{array}\mid\mathcal{F}_{t}\right)\sim N\left(\left(\begin{array}{c}
\rho_{c}C_{t}\\
0
\end{array}\right),\left(\begin{array}{cc}
1 & 0\\
0 & 1
\end{array}\right)\right)
\]
\normalsize
where I have set $\rho_{R}=0$ for analytical convenience.\footnote{Allowing for a non-zero mean in the interest rate is entirely possible but unnecessary. Importantly, the Euler equation has no information on  $\mathbb{E}_{t}R_{t+1}$. This is an example of an otherwise testable restriction on $\varphi$ that would not be undone by the projection.} For a quadratic utility function, that is $U(C_{t})=C_{t}-\frac{\gamma}{2} C_{t}^{2}$, the Euler equation implies that   $Cov(R_{t+1},C_{t+1}|\mathcal{F}_{t})=\frac{1}{\beta\gamma}(\gamma C_{t}-1)$.  The set of structural parameters is $\vartheta:=(\beta,\gamma)$.
Since the two variables are now correlated, the  distorted density $h(C_{t+1},R_{t+1}|\mathcal{F}_{t})$  will feature a new covariance structure: 
\[\left(\begin{array}{cc} Var_{t}(C_{t+1}) & \frac{1}{\gamma\beta}(\gamma C_{t}-1)\\
* & Var_{t}(R_{t+1})
\end{array}\right)\] where $Var_{t}(C_{t+1})=Var_{t}(R_{t+1})=\frac{1}{1-\mu_{t}^2}$, and ${\mu_{t}}$ has the interpretation of  the coefficient of projecting consumption on the interest rate, $Var^{-1}_{t}(R_{t+1})Cov(R_{t+1},C_{t+1}|\mathcal{F}_{t})$.
\normalsize
In Appendix \hyperref[AppA]{A}  I illustrate how the same expression can be obtained formally using a conditional density projection\footnote{More precisely, what is obtained is the density conditional on $Z=z$.}, that is, solving \eqref{principal}.  Notice that the projection is not simply a restriction on the reduced form parameters as it adds information. This is evident from a change in the conditional covariance matrix of $(C_{t+1},R_{t+1})$ from zero to a function of $(\mathcal{F}_{t},\vartheta,\varphi)$.  The only feature of the base model  that changes after the projection is the one that does not agree with the moment conditions e.g. the covariance matrix. Other features, such as the conditional means, do not change after the projection as they are consistent with the moment condition. 
Moreover, the fact that in this example
the Euler equation is a direct transformation of the parameters of the
base density is an artifact of the form of the utility function assumed,
and is therefore a special case. In general, an analytical
solution cannot be easily obtained and we therefore resort to solving a simulated version of \eqref{principal}.  \label{ex}

\subsection{Computational Details} $ $
With conditional moment restrictions, the projection involves computing Lagrange multipliers which are functions defined on $\Psi\times Z$. Thereby, the projection has to be implemented at each point $z_{i}$ when the tilted-likelihood function is constructed.

The general algorithm for the inner loop is therefore as follows:\footnote{In estimation, the tilted likelihood function is constructed at every proposal for the vector $\psi$. I experimented with pre-estimating the unknown functions $\mu(Z,\psi)$ and $\lambda(Z,\psi)$ by simulating at different points of the support of $Z$ and $\psi$ and using function approximation methods i.e. splines to compute multipliers at the sample points $z_{i}$ and proposals for parameter vector $\psi$ but I found this to be an inefficient way of implementing the algorithm; it is far better to compute the projection "online".}
\begin{enumerate}
	\item Given proposal for $(\varphi,\vartheta)$ and datum $z_{i}$, simulate $N_{s}$ observations
	from $F(x;z_{i},\varphi)$
	\item Compute:
	\begin{itemize}
		\item $\hat{\mu}(x;z_{i},\vartheta)=\arg\min\frac{1}{N_{s}}\sum_{j=1:N_s}\exp(\mu(z_{i},\psi)'m(x_{j};z_{i},\vartheta))$ and
		\item $\hat{\lambda}(x;z_{i},\vartheta)=-\log(\frac{1}{N_{s}}\sum_{j=1:N_{s}}\exp(\mu(z_{i},\psi)'m(x_{j};z_{i},\vartheta)))$ 
	\end{itemize}
	\item 	 Evaluate $log\left(h(x,z_{i};\psi)\right)$.
\end{enumerate}

The pseudo log-likelihood function can be constructed using $\sum_{i=1..N}log\left(h(x_{i},z_{i};\psi)\right)$. To give a sense of how much additional time it takes to compute a tilted likelihood function, for every likelihood point, with a single moment condition, computation time increases by an average of $1\%$ of a second in the simulation experiments I will refer to later on in the paper.\footnote{All computing times reported in the paper correspond to an Intel  
i5-8365U, $1.6$ Ghz processor.}

In the next section I analyze the frequentist properties of using the tilted density to estimate $\psi\equiv(\vartheta,\varphi)$. The main challenge is the fact that we project on a possibly misspecified density. Explicitly acknowledging for estimating the parameters of the density yields some useful insight to the 
behavior of the estimator. Moreover, although tilting delivers a density, as we will see the information contained in the resulting estimating equations for the structural parameter of interest is of semi-parametric nature and does not always lend itself to proper Bayesian analysis. The tilted density is thus best viewed as a statistical criterion function with an analog interpretation. When there is  an explicit mapping between the structural parameters and the parameters of the base density, then the scores of the latter do provide additional information. In this sense the statistical criterion function nests the conventional parametric likelihood function as a special case.
 Monte Carlo simulation techniques are nevertheless applicable in all cases, as in \cite{Chernozhukov2003293} and \cite{doi:10.3982/ECTA14525}. The frequentist results presented below determine the limiting behavior of this criterion function and are typically assumed to hold for the monte carlo simulation techniques, which can involve prior parameter distributions as well. 
\section{Setup and Assumptions}
Recall that we maximize the empirical analogue to \eqref{eq:agent}, which, abstracting from simulation error that comes from computing $(\mu,\lambda)$, is equivalent to choosing $\hat{\psi}$ to solve the following program: 
\begin{eqnarray*}
	&\underset{{(\vartheta,\varphi)\in\Theta\times\Phi} }{\max} Q_{N}(\vartheta,\varphi)\equiv\frac{1}{N}\sum_{i=1..N}\log\left(f(x_{i}|z_{i},\varphi)\exp(\mu_{i}'m(x_{i},z_{i},\vartheta)+\lambda_{i})\right)&
\end{eqnarray*}\small \begin{eqnarray*} 
 s.t. \quad	\forall i=1..N, & \mu_{i}: & \quad \int f(X|z_{i},\varphi)\exp(\mu_{i}'m(X,z_{i},\vartheta))m(X,z_{i},\vartheta)dX=0\\
	& \lambda_{i}:  & \int f(X|z_{i},\varphi)\exp(\mu_{i}'m(X,z_{i},\vartheta)+\lambda_{i})dX=1
\end{eqnarray*}
\normalsize           
where for notational brevity I substituted  $Z=z_{i}$ for $z_{i}$ and set $\mu_{i}=\mu(z_{i}),\lambda_{i}=\lambda(z_{i})$.
Denoting the Jacobian of the moment conditions by ${M}$, the first order conditions are the following, where $\mu_{\psi}(z_{i}),\lambda_{\psi}(z_{i})$ denote derivatives with respect to the corresponding parameter vector:
\begin{eqnarray}
	\vartheta: &\frac{1}{N}\underset{i}{\sum}\left({{M}(x_i,z_i,\vartheta)'\mu}(z_{i})+\mu_{\vartheta}(z_i)'{m}(x_i,z_i,\vartheta)+\lambda_\vartheta(z_i)\right)&=0\\
	\varphi: &\frac{1}{N}\underset{i}{\sum}\left(\mathfrak{s}(x_{i},z_{i},\varphi)+\mu_{\varphi}(z_i)'{m}(x_i,z_i,\vartheta)+\lambda_{\varphi}(z_i)\right)&=0  \label{FOC}
\end{eqnarray}
where $\mathfrak{s}(.)\equiv\frac{\partial}{\partial \varphi}log( f(X|z_{i},\varphi))$ is the score function of the base density and  
\begin{eqnarray}
	&\mu(z_i)=\underset{\mu\in\mathbb{R}^{n_{m}}}{\arg\min} \int f(X|z_i,\varphi)\exp(\mu'm(X,z_i,\vartheta)dX \label{eq:mudef}\\ 
	&\lambda(z_i)=-\log\left(\int f(X|z_{i},\varphi)\exp(\mu(z_i)'m(X,z_{i},\vartheta)dX\right) &   
\end{eqnarray}

\subsection*{ASSUMPTIONS I}

Assume a stationary
ergodic sequence of vectors of random variables $\{x_{i}',z_{i}',\}_{i=1}^{N}$ with summable autocovariances  together in an appropriate filtration $\mathcal{F}_{i}$ so that $(x_{i}',z_{i}')$ are measurable with respect to $(\mathcal{F}_{i},\mathcal{F}_{i-1})$ respectively. 
Furthermore, define innovations $\{\sigma^{l}_{i-1}\epsilon^{l}_{i}\}_{l=1..n_{m}}$ where
 $\epsilon^{l}_{i}$ a martingale difference sequence, $\sigma^{l}_{i-1}$ is $\mathcal{F}_{i-1}$-measurable and $\mathbb{E}_{P_{z}}\epsilon^{{l}^{2}}_{i}=1$.
 Moreover:
\begin{enumerate}[label=\roman*.]
	\item (\textbf{COMP)} $\Theta\subset\mathbb{R}^{n_{\vartheta}},\Phi\subset\mathbb{R}^{n_{\varphi}}$
	are compact. 
	\item (\textbf{ID)}$\exists ! \psi^{\star}\in int(\Psi):\psi^{\star}=\arg\underset{\psi\in\Psi}{\max}\mathbb{E}\log h(X|Z,\psi)$
	\item (\textbf{BD-1}a)$\forall l \in {1..n_{m}}\text{ and for \ensuremath{d\geq4}}, P\in\left\{F_{\varphi},\mathbb{P}\right\}: \\ \mathbb{E}_{P_{z}}\sup_{\psi}\|m^{l}(x,z,\vartheta)\|^{d}, \mathbb{E}_{P_{z}}\sup_{\psi}\|m^{l}_{\vartheta}(x,z,\vartheta)\|^{d} \mbox{\text{}} \text{and } \mathbb{E}_{P_{z}}\sup_{\psi}\|m^{l}_{\vartheta\vartheta}(x,z,\vartheta)\|^{d}  $ are finite, $\mathbb{P}({z})-a.s$\footnote{Here, the subtlety is that it has to
		hold for the base measure and the true measure.  Given absolute continuity of $d\mathbb{P}(X|Z)$ with respect to $dF(X|Z)$, the existence of moments under $\mathbb{P}(X|Z)$ is sufficient for the existence of moments under $F(X|Z)$.}. 
	\item (\textbf{BD-1}b)$\sup_{\psi}\mathbb{E}_{\mathbb{P}_{z}}\|e^{\mu(z)'|m(x,z,\vartheta)|}\|^{2+\delta}<\infty$
	for $\delta>0$, $\forall \mu(z)>0, \mathbb{P}({z})-a.s. $ \footnote{
		Note that \textbf{BD-1a} and \textbf{BD-1b} imply that 
		$\sup_{\psi}\mathbb{E}_{\mathbb{P}_{z}}\|e^{\mu(z)'m(x,z,\vartheta)+\lambda(z,\vartheta)}m(x,z,\vartheta_{0})\|^{1+\delta}<\infty$
		for $\delta>0$ and $\forall z$.}
	\item (\textbf{BD-2})$\mathbb{E}_{\mathbb{P}}sup_{\psi}\mid\log f(x|z,\psi)\mid^{2+\tilde{\delta}}<\infty$
	where $\tilde{\delta}>0$.
	\item \textbf{(PD-1) }For any non zero vector $\xi$ and closed $\mathcal{B}_{\delta}(\psi^{\star})$
	, $\delta>0$ and for all $P_{z}$ such that $\mathbf{KL}(P_{z},F_{\varphi,{z}})\leq \mathbb{E}_{P_{z}}e^{\mu(z)'m(x,z,\vartheta)}$, the following holds, $\mathbb{P}({z})-a.s$: \\ $0<\inf_{\xi\times\mathcal{B}_{\delta}(\psi^{\star})}\xi'\mathbb{E}_{P_{z}}{m}(x,\vartheta){m}(x,z,\vartheta)'\xi<\sup_{\xi\times\mathcal{B}_{\delta}(\psi^{\star})}\xi'\mathbb{E}_{P_{z}}{m}(x,z,\vartheta){m}(x,\vartheta)'\xi<\infty$
\end{enumerate}

   \normalsize        
Assumptions (i)-(ii) correspond to typical compactness
and identification assumptions found in \cite{Newey19942111} where (ii) is assumed to also hold for the objective function in the case in which $h(X|Z)$ does not collapse to $f(X|Z)$ (thus $\psi^{\star}$ is the pseudo-true value\footnote{It is a theoretical possibility that misspecification can push pseudo-true values to the boundary but catering for this in the econometric analysis goes beyond the scope of this paper.}) while
(iii) assumes uniform boundedness of conditional moments and their first and second derivatives, up to a set
of measure zero. Assumption (iv) assumes existence of exponential absolute $2+\delta$ moments and   (v) assumes the existence of $2+\delta$ absolute moments of the log-likelihood based on the base density.\footnote{Since $log(h(x|z,\psi))=\mu(z,\psi)'m(x,z,\vartheta)+\lambda(z,\psi)+\log f(x|z,\psi)$, existence of $\mu(z,\psi)$ and $\lambda(z,\psi)$, together with \textbf{BD-1a,b} and \textbf{BD-2} imply that $\mathbb{E}_{\mathbb{P}}\sup_{\psi}\mid\log h(x|z,\psi)\mid^{2+\tilde{\delta}}<\infty$.}  Finally, (vi) assumes
away pathological cases of perfect correlation between moment conditions under any measure $P_{z}$ that is contiguous and within a certain distance to $F_{\varphi,{z}}$.

\subsection{Existence and feasibility of the projection at $z_{i}$}
 \cite{Komunjer_Ragusa} provide primitive conditions for the case of projecting using a divergence that belongs to the $\phi-$ divergence class and moment restrictions that have unbounded moment functions. Assumptions \textbf{BD-1a} and \textbf{BD-1b} are sufficient for their primitive conditions (Theorem 3). More particularly, the feasibility assumption of \cite{Komunjer_Ragusa} i.e. there exists at least one measure $H\in \mathcal{H}$ such that $\int log\left(\frac{dH}{dF}\right)dH<\infty$ is trivially satisfied by $H=\mathbb{P}$, the true probability measure, as long as the Kullback Leibler distance of the  approximate measure $F$ to the true measure $\mathbb{P}$ is well defined. In this case, the Kullback - Leibler distance minimized at each projection step becomes as follows: 
\[\int \log\left(\frac{dH(.;z_{i},\vartheta)}{dF(.;z_{i},\varphi)}\right)dH(.;z_{i},\vartheta) = \int \log\left(\frac{d\mathbb{P}(.;z_{i},\vartheta)}{dF(.;z_{i},\varphi)}\right)d\mathbb{P}_{z_{i}}\]
This distance will be finite as long as the base density has a finite \textbf{KL} distance with respect to $\mathbb{P}(.)$ for some $\varphi\in \Phi$.  
If $F(.;z_{i},\varphi)$ is induced by a structural approximation, the distance will be finite if the approximation error is uniformly controlled  over $z$. 
To see why, consider the density $f(x_{t};z_{t},\hat{\gamma}(z;\vartheta),\vartheta)$ induced by a set of approximations $\hat{\mathbf{\gamma}}(z;\varphi)$ to functions $\gamma(z)$, such as policy functions or value functions. The latter induces the true density, ${p}(x_{t};z_{t},{\gamma}(z),\vartheta)$. Then, using a functional Taylor expansion 
\begin{eqnarray*}
	\log\left(\frac{p(x;z,\gamma(z),\vartheta)}{f(x;z,\hat{\gamma}(z;\vartheta),\vartheta)}\right)&=& \sum_{k=1:K}\frac{1}{k!}\nabla^{k}{\log f }_{\gamma}(x;z,\hat{\gamma}(z;\vartheta),\vartheta)[\hat{\gamma}-\gamma] + O_{p}(\mid\mid\hat{\gamma}-\gamma\mid\mid^{K+1})
\end{eqnarray*} Hence, if the (log) density $f$ is $K$ times pathwise differentiable and the approximation method controls the error uniformly in $z_{i}$ i.e. $\sup_{z}||\hat{\gamma}-\gamma||<\infty$, then all of the terms on the RHS are bounded. Existing work has established uniform upper bounds on errors in value function iteration \citep{10.2307/2998564} and policy function iteration \citep{doi:10.1137/S0363012902399824}, as well as bounds on invariant distributions of simulations of approximated models \citep{RePEc:ecm:emetrp:v:73:y:2005:i:6:p:1939-1976}.

In practice, at each parameter proposal by the estimation algorithm it is easy to check whether the projection exists, as the Lagrange multiplier vector $\mu$ takes implausibly high absolute values, and hence  the  parameter draw is rejected. To see this, notice that in \eqref{eq:mudef}
if $m(X,Z,\vartheta)$ is uniformly positive or negative across $X$ for any $Z$, which implies that the moment condition cannot be satisfied, then optimal $\mu$ is either positive or negative infinity. 

\section{Asymptotic Properties}
This section contains the consistency and asymptotic distribution results for $\hat{\psi}$. Appendix \hyperref[AppA]{A}  (\cref{muder}) provides expressions for the first and second order derivatives of  $(\mu(z_i),\lambda(z_i))$ which determine the behavior of $\hat\psi$ in the neighborhood of $\psi^{\star}_0$. These expressions will be useful for the characterization of the properties of the estimator.
I first outline a result which is useful in understanding the properties of the estimator.   \vspace{-0.15 in}
\begin{Lemma}
		For a base conditional density $F_{z}$ with distance to $\mathbb{P}_{z_{i}}$ equal to $\chi^{2}(F_{z_{i}},\mathbb{P}_{z_{i}})$:  \\
		(a) $\mu_{i}=O_{p_{z}}(\chi^{2}(F_{z_{i}},\mathbb{P}_{z_{i}})^{\frac{1}{2}})$\\
		(b)  $ \max_{i}\sup_{\vartheta}|\mu_{i}'m(\vartheta,x_{i})|=O_{p_{z}}(\max_{i}\chi^{2}(F_{z_{i}},\mathbb{P}_{z_{i}})^{\frac{1}{2}}N^{\frac{1}{d}})   \label{mu}
	$\end{Lemma}\vspace{-0.2 in}
\begin{proof}
	See Appendix \hyperref[Proofs]{A}. 
\end{proof}  \vspace{-0.2 in}
When $\sup_{i}\chi^{2}(F_{z_{i}},\mathbb{P}_{z_{i}}) \sim N^{-\xi}$, $\mu_{i}=o_{p_{z}}(1)$. If  $\frac{2}{d}<\xi$, $\max_{i}\sup_{\vartheta}|\mu_{i}'m(\vartheta,x_{i})|=o_{p}(1)$. The Lagrange multipliers from imposing the equilibrium conditions will converge to zero under vanishing approximation error.

\subsection{Consistency}
The uniform consistency of the estimator is shown by first proving pointwise consistency and then stochastic equicontinuity of the objective function. Details of the proof are in Appendix \hyperref[AppA]{A}. Under fixed misspecification, the estimator is consistent for $\psi_{0}^{\star}$, defined below.  
\begin{Definition}
  The true value is denoted by $\psi_{0}$. The pseudo-true value $\psi_{0}^{\star}$ is the $\psi\in\Psi$ that minimizes the Kullback-Leibler $(\mathbf{KL})$ distance between $H(X,Z,\psi)$ and $\mathbb{P}(X,Z)$,  decomposed as follows:
  \small
  \begin{eqnarray} 
  0&\leq& \mathbb{E}_{\mathbb{P}}\log\left(\frac{d\mathbb{P}_{z}}{dF_{(\varphi,z)}}\right) - \mathbb{E}_{\mathbb{P}}\left[\mu'({\psi,z})\mathbb{E}_{\mathbb{P}_{z}}m(X,\vartheta)\right]+\mathbb{E}_{\mathbb{P}}\left[\log\left(\mathbb{E}_{F_{(\varphi,z)}}\exp(\mu'({\psi,z})m(X,\vartheta))\right)\right]\label{eq:def1}
  \end{eqnarray}\normalsize  
 Correspondingly, since $\mu({\psi,z}):=\underset{\mu}{\arg\min}\left(\mathbb{E}_{F_{(\varphi,z)}}\exp(\mu'm(X,\vartheta))\right)$,  $\varphi^{\star}_{0}$ is the value of $\varphi\in\Phi$ such that $F(X|Z,\varphi)$ is as close as possible (in $\mathbf{KL}$ units) to $\mathbb{P}(X|Z,\varphi)$ {and} satisfies \[\mathbb{E}_{F_{(\varphi,z)}}\exp(\mu'({\psi,z})m(X,\vartheta^{\star}_{0}))m(X,\vartheta^{\star}_{0})=0\] where $\vartheta^{\star}_{0}$ is the value of $\vartheta\in\Theta$ such that $\mathbb{E}_{F_{(\varphi^{\star},z)}}\exp(\mu'({\psi,z})m(X,\vartheta))m(X,\vartheta)=0  \label{def_psistar}$. 
\end{Definition} 
At the pseudo-true value, $H_{(\psi_{0}^{\star},z)}$ is the closest parametric distribution to $\mathbb{P}_{z}$, while both distributions satisfy a \textit{common} moment restriction, $\mathbb{E}_{H_{(\psi^{\star},z)}}m(X,\vartheta^{\star})=\mathbb{E}_{\mathbb{P}}m(X,\vartheta_{0})=0$. The tilted distribution will satisfy the exact moment conditions, and will be -by construction- closer to the distribution implied by the economic model. The smaller $\mathbf{KL}(F,\mathbb{P})$ is, the closer  to zero are the last two terms in \eqref{eq:def1}.\footnote{Note that there is a similarity between our definition of $\psi_{0}^{\star}$ to the definition of \citet{10.2307/2646780}, but in our case the moment restriction is satisfied by the sampled population, asymptotically.}  The interpretation of $\psi_{0}^{\star}$ can be qualified further by examining the profiled $\mathbf{KL}$ distance.

\begin{Lemma}
The profiled (over $\mu_{z}$) distance $\mathbf{KL}(H_{\psi},\mathbb{P})$ in  \eqref{eq:def1} is approximately equivalent to
 \small
 \begin{eqnarray*} 
 \mathbb{E}_{\mathbb{P}}\log\left(\frac{d\mathbb{P}_{z}}{dF_{(\varphi,z)}}\right)+  \mathbb{E}_{\mathbb{P}}\left[\mathbb{E}_{F_{(\varphi,z)}}m(X,\vartheta)V^{-1}_{F_{(\varphi,z)}}\left(\mathbb{E}_{\mathbb{P}_{z}}m(X,\vartheta)-\mathbb{E}_{F_{(\varphi,z)}}m(X,\vartheta) \right)\right]  \label{pseudo_true}
 \end{eqnarray*}\normalsize    
	\end{Lemma}
\begin{proof}
	Follows from \cref{def_psistar}, the implicit map of $\mu$ in the proof of \cref{mu} and a first order approximation of the third term  in \eqref{eq:def1}.
\end{proof}
If $(\varphi,\vartheta)$ are independent, then minimizing this criterion is equivalent to finding the value of $\varphi$ that makes the approximate model as close as possible to the true data generating process for any level of $\vartheta$, which simultaneously minimizes both the first and second term, while for a fixed value of $\varphi$,  minimizing over $\vartheta$ seeks to bring $\mathbb{E}_{F_{(\varphi,z)}}m(X,\vartheta)$ as close as possible to zero, that is, the approximate model satisfying the moment condition. 
A similar interpretation holds when there is a mapping from the reduced form to the structural parameters, which arises when we consider approximate equilibrium models.  As mentioned in the introduction (see \eqref{eq:intro_mom}) and explicitly shown later in the paper, the first order condition will be equivalent to setting to zero the orthogonalized combinations of the scores of the approximate model and the efficient GMM first order condition.


Having defined the pseudotrue values, I present below the asymptotic results.

 \vspace{-0.1 in}
\begin{theorem}
	Consistency for $\psi_{0}^{\star}$\\
	Under Assumptions I :
	$
	(\hat{\vartheta},\hat{\varphi})\underset{p}{\rightarrow}(\vartheta_{0}^{\star},\varphi_{0}^{\star})   \label{consstar}
	$
\end{theorem}\vspace{-0.2 in}
\begin{proof}
	See Appendix \hyperref[Proofs]{A}. 
\end{proof}\vspace{-0.2 in}
\begin{Definition}{Vanishing Approximation Error.}$ $\\
Assuming that densities are twice differentiable, define the metric \[\Delta(F,P)=\sup_{k\in\{0,1,2\}}\mathbb{E}_{{P}}\left( D^{k}\left(\frac{f}{p}-1\right)\right)^{2}\] which measures how well density $f$ approximates $p$ up to the second derivative.\footnote{Define $D^{1}:=\sup_{l}\mid\frac{\partial (.)}{\partial x_{l}}\mid$ and $D^{2}:=\sup_{ll'}\mid\frac{\partial^{2} (.)}{\partial x_{l}\partial x_{l'}}\mid$. Uniform convergence of $\Delta$ to zero implies that the approximation error vanishes up to the second derivative, which ensures uniform convergence of the scores and the Hessian (elementwise). Note that for $k=0$, $\mathbb{E}_{{P}}\left( D^{0}\left(\frac{f}{p}-1\right)\right)^{2}$ is the $\chi^{2}$ distance between $f$ and $p$. }
An asymptotically correct density is defined as the density $f(X|Z,\varphi)$ that converges uniformly to the true conditional density, such that $\sup_{\varphi}\sup_{i}\Delta(F_{z_{i}},\mathbb{P}_{z_{i}})\to 0$.\label{def2}
\end{Definition}  
\vspace{-0.2 in}
Therefore, under vanishing approximation error, consistency is for $\vartheta_0$:\vspace{-0.1 in}
\begin{Corollary}
	If $\sup_{\phi}\sup_{i}\Delta(F_{z_{i}},\mathbb{P}_{z_{i}})\to 0$, then   $\vartheta_{0}^{\star}=\vartheta_0$.  \label{cons0}
\end{Corollary}\vspace{-0.2 in}
\begin{proof}
	See Appendix \hyperref[Proofs]{A}. 
\end{proof} \vspace{-0.1 in}

\subsection{Asymptotic Distribution}
The limiting distribution
of the estimator is derived using a first order approximation around the pseudotrue parameter $\psi_{0}^{\star}$. The results that follow hold for a generic base density $f(X|Z,\varphi)$, where I distinguish between non-structural approximations where $\varphi$ and $\vartheta$ are not functionally related, and structural approximations, where the two are related. Despite that the main focus in the case of a structural approximation, in each set of results it is instructive first to deal with the case of a non-structural approximation.
\subsubsection{Non-Structural Approximations}
The estimator for $\vartheta$ that results from a non-structural approximation of the density converges to the same distribution as estimators employed in the semi-parametric literature, where the density is treated as a nuisance parameter, estimated using  a (semi)non-parametric approximation. 
The next set of results demonstrate asymptotic Normality while semi-parametric efficiency for $\vartheta$ is retained if $N^{-\frac{1}{2}}\Delta\to 0$.
\begin{theorem}
	Asymptotic Distribution when $\Psi:=(\Phi,\Theta)\in \mathbb{R}^{dim \varphi}\times \mathbb{R}^{dim \vartheta}$\\
	Under Assumption I and $N_{s}$,$N\rightarrow\infty$ such that
	$\frac{N}{N_{s}}{\to} 0$ 
	\begin{eqnarray*}
	N^{\frac{1}{2}}(\psi-\psi_{0}^\star)\underset{d}{\rightarrow} \mathcal{N}\left(0,\bar{V}(\psi_0^\star)\right)
	\end{eqnarray*} \label{Norm}
\end{theorem}  \vspace{-0.6 in}
\begin{proof}
	See proof of Theorem \ref{Norm} in Appendix \hyperref[Proofs]{A}  .
\end{proof}

The assumption 	$\frac{N}{N_{s}}{\to} 0$ obviates the need to account for simulation error in the variance covariance matrix. This can be relaxed i.e.	it can converge to a positive number but it adds no new insights. 
In Appendix \hyperref[AppA]{A}  I derive the exact form of the covariance matrix of the estimator. 
Under (asymptotic) correct specification  $\sup_{\varphi}\sup_{i}\Delta(F_{z_{i}},\mathbb{P}_{z_{i}})\equiv \kappa^{-1}_{n}\sim N^{-\xi}$, $\mathfrak{s}_{i}\to s_{i}$ where $s_{i}$ is the score of the true density. For $\xi>1$, $ G(\psi)\equiv -\mathbb{V}_{g}(\psi)$,  $\bar{V}(\psi_0^\star)=\bar{V}(\psi_0)$. 
 In particular, the variance is block diagonal (I drop dependence on $(\vartheta,\varphi)$): 
\begin{eqnarray*}
	\bar{V}(\psi_0)& \equiv & \left(\begin{array}{cc}
		\left(\mathbb{E}(\mathbb{E}({M}|z_{i})'\mathbb{V}^{-1}_{m,i}\mathbb{E}({M}|z_{i}))\right)^{-1} & 0 \\
		0  & \left(\mathbb{E}{s}_{i}{s}_{i}'-\mathbb{E}\mathfrak{B}_{i}\mathbb{V}_{m,i}\mathfrak{B}_{i}'\right)^{-1}
	\end{array}\right)
\end{eqnarray*} \normalsize
where $\mathfrak{B}_{i}'=\mathbb{V}^{-1}_{m,i}\mathbb{E}(m{s}'|z_{i})$ is the coefficient of projecting the scores on the moment conditions, where I abbreviate $V_{\mathbb{P}_{z},m}\equiv\mathbb{V}_{m,i}$.
The upper left component of $\bar{V}(\psi_0)$ is the same as the (inverse) information matrix corresponding to $\vartheta$ when the conventional optimally weighted GMM criterion is employed. 
What this implies is that lack of  knowledge of $\varphi$ does not affect the efficiency of estimating the structural parameters $\vartheta$, at least asymptotically. Also, note that $\hat\vartheta$ attains the SLB even if $\xi>\frac{1}{2}$, as the off-diagonal block of the Jacobian (cross-derivatives) converges to zero as long as $\xi>0$. 

The expressions above have an intuitive interpretation. If the moment conditions used span the same space spanned by the scores of the density, then $\bar{V}$ trivially attains the Cramer - Rao bound in the upper left component as $m\equiv \mathfrak{s}$ and the covariance matrix becomes singular as both $m$ and $\mathfrak{s}$ give the same information.
Conversely, the less predictable is the score from the additional moment conditions used (that is, $\|\mathfrak{B}_{i}\|$ is close to zero), the higher the efficiency attained for estimating $\varphi$, where $\left(\mathbb{E}{s}_{i}(\varphi){s}_{i}(\varphi)'\right)^{-1}$, is the lowest variance possible among regular estimators. 
This also corroborates the claim that the estimator is not equivalent to a constrained MLE as in the latter case, constraints typically increase efficiency for estimating the reduced form. In this case, by using the moment conditions which involve unknown parameters ($\vartheta$) can only increase the variance for estimating $\varphi$ as the former are not essential to the identification to the latter when $(\varphi,\vartheta)$ are independent.\footnote{ The best analogy can be made to including irrelevant regressors in multivariate regression: as long as the additional regressors are not essential to the identification of the causal effect, the result is an increase in the variance of its estimate.}
\subsubsection{Structural Approximations}
If the base density is obtained by solving a structural model, there is an explicit mapping between $\varphi$ and $\vartheta$, and the first order conditions are a linear combination of the conditions analyzed above as $
\frac{dQ_{N}}{d\vartheta} = \frac{\partial Q_{N}}{\partial \vartheta} +  \frac{\partial Q_{N}}{\partial \varphi}\frac{d \varphi}{d \vartheta} =0 
$.
When the solution is exact, $f(X|Z)$ is pinned down by a unique parametric sub-model that automatically satisfies the moment conditions and $(\mu_{i},\lambda_{i})$ are zero for all $i$, which implies that both $\frac{\partial Q_{N}}{\partial \vartheta}(\hat{\vartheta}) $ and $\frac{\partial Q_{N}}{\partial \vartheta}(\hat{\vartheta}) $ will tend to zero as $\mathbb{E}M_{i}'V^{-1}_{{i},m}m_{i}=0$ and $\mathbb{E}{s}_{i}=0$ at $\vartheta_{0}$, while the second term in the variance of $\varphi$ vanishes as well.  
Moreover, $\mathbb{E}{s}_{i}(\varphi){s}_{i}(\varphi)'=\mathbb{E}\frac{d\vartheta}{d\varphi}'\mathbb{E}{s}_{i}(\vartheta){s}_{i}(\vartheta)'\frac{d\vartheta}{d\varphi}>\mathbb{E}\frac{d\vartheta}{d\varphi}'\mathbb{E}({M}|z_{i})'V^{-1}_{m,i}\mathbb{E}({M}|z_{i})\frac{d\vartheta}{d\varphi}$. The moment restrictions are trivial and add nothing to the information embedded in the scores: \vspace{-0.1 in}
\begin{theorem}
	Asymptotic Distribution when $\Upsilon \in \mathbb{C}^{2} : \Theta \to \Phi$. \\
	Under Assumption I and $N_{s}$,$N\rightarrow\infty$ such that
	$\frac{N}{N_{s}}{\to} 0$ :	
	\begin{eqnarray*}
		N^{\frac{1}{2}}(\vartheta-\vartheta_{0}^\star)\underset{d}{\rightarrow} \mathcal{N}\left(0,\bar{V}(\vartheta_0^\star)\right)
	\end{eqnarray*} \label{Norm2}
where for $\sup_{\varphi}\sup_{i}\Delta(F_{z_{i}},\mathbb{P}_{z_{i}})\equiv \kappa^{-1}_{N}\sim N^{-\xi}$ with $\xi>1$,  $\bar{V}(\vartheta_0)^{-1} \equiv \left(\frac{d\varphi}{d\vartheta}\right)'\mathbb{E}{s}_{i}{s}_{i}'\frac{d\varphi}{d\vartheta}$.
	\end{theorem}  
\begin{proof}
	See proof of Theorem \ref{Norm2} in Appendix \hyperref[Proofs]{A} .
\end{proof}
\subsection{Rate(s) of Convergence}
With a non-structural base density, the discrepancy with the true model does not have first order effects for $\hat{\vartheta}$ if $\xi>\frac{1}{2}$, and for $\hat{\varphi}$ if $\xi>1$.\footnote{
	\cite{Geweke_Hahn_Ackerberg} obtains a similar type of result  for approximated likelihoods (corresponding to $\xi>1$). Note that $\xi=1$ corresponds to the case when the approximate density is defined by parameters that are within a root-n neighborhood to the true density, and hence $\Delta(F_{z_{i}},\mathbb{P}_{z_{i}})=O_{p_{z}}(N^{-1})$.} 
As shown in the proof of Theorem \ref{Norm},\small
\begin{eqnarray*}
	{N^{\frac{1}{2}}g_{1,N}}=N^{-\frac{1}{2}}\sum_{i} M_{f_{i}}'V^{-1}_{f_{i},m}m_{i}+  O_{p}(N^{\frac{1}{2}-\xi}),\quad 
	{N^{\frac{1}{2}}g_{2,N}}=N^{-\frac{1}{2}}\underset{i}{\sum}({s}_{i} -{s}_{f_{i}}+\mu_{i,\varphi}'m_{i})+ O_{p}(N^{\frac{1}{2}-\frac{\xi}{2}}) 
\end{eqnarray*}	\normalsize
When $\xi<\frac{1}{2}$, the first term in $g_{1,N}$ converges to a normal random vector while the last term diverges. Hence, the whole term diverges. Similarly, when  $\xi<1$, the whole term $g_{2,N}$ diverges.
To have a meaningful limit in this case, we have to normalize with $N^{\xi}$ and $N^{\frac{\xi}{2}}$ respectively:\small
\begin{eqnarray*}
	{N^{\xi}g_{1,N}}= N^{\xi-\frac{1}{2}} N^{-\frac{1}{2}}\sum_{i} M_{f_{i}}'V^{-1}_{f_{i},m}m_{i}+ O_{p}(1),\quad 
		{N^{\frac{\xi}{2}}g_{2,N}}=N^{\frac{\xi}{2}-\frac{1}{2}}N^{-\frac{1}{2}}\underset{i}{\sum}({s}_{i} -{s}_{f_{i}}+\mu_{i,\varphi}'m_{i})+ O_{p}(1)
\end{eqnarray*}	\normalsize   
and therefore the rate of convergence for parameter estimates is slower than root-n. It is important to note that  $\xi>\frac{1}{2}$ allows for a wide range of cases for local-misspecification, with approximate densities converging at a much slower rate than the standard parametric rate $(\xi=1)$. This is more interesting in the case of structural approximation: The information added by tilting the model to satisfy the moment condition will indeed be driven by the efficient GMM first order condition and will not depend to first order on the degree of local misspecification, even if $\xi\leq 1$:\small \[N^{\frac{\xi}{2}}\left(g_{1,N}+\frac{d\varphi}{d\vartheta}'g_{2,N}\right)=N^{\frac{\xi}{2}-1}\sum_{i} M_{f_{i}}'V^{-1}_{f_{i},m}m_{i} + O_{p}(N^{-\frac{\xi}{2}})+  \frac{d\varphi}{d\vartheta}'N^{\frac{\xi}{2}-1}\underset{i}{\sum}\left({s}_{i} -{s}_{f_{i}}+\mu_{i,\varphi}'m_{i}\right))+ O_{p}(1)\]\normalsize
\section{Further Properties under Local Misspecification}
The previous analysis considered the asymptotic distribution under a fixed data generating process. I next analyze the asymptotic distribution under drifting parameter sequences, which is more useful to understand the behavior of the estimator under standard local misspecification $(\xi=1)$. In particular, 
I consider approximation errors that can be represented by non-linear restrictions imposed by the approximate model on the true data generating process as follows: $r(\varphi)=0$, where $r$ is some non-linear function of the reduced form parameters. 

One example of restrictions, which is also the one I will consider in the application, is ignoring non-linear effects in dynamic forward looking models. 
For illustration, consider two popular methods of approximating the solutions to dynamic equilibrium models, perturbation and projection. Suppose that $dim(X)=1$.
In the case of perturbation around $\bar{X}$, the Taylor approximation of a policy function results in $C(X,J;\vartheta) = C(\bar{X};\vartheta) + C_{1}(\bar{X};\vartheta)(X-\bar{X})  +...+ O(\|X-\bar{X}\|^{J+1})$ while in the projection case, the approximation is of the form $C(X,J,\alpha,\beta;\vartheta)=\sum^{J}_{j=1..J}\alpha_{j}b_{j}(X,\beta)$ where $b_{j}(X,b)$ is a basis function and $\alpha$ are the  projection coefficients. 
Assuming there exist  finite  $({J}^{\star},b^{\star},\alpha^{\star})$ such that approximation errors are negligible, restricting the number of approximating terms or the functional form of the basis $b_{j}(X,b)$ restricts the moments of  $C$ (and hence $\varphi$) as $\mathbb{E}C^{l}(X,J,\beta;\vartheta)$ is a function of $(J,\beta)$.

Adopting the local asymptotic experiment approach, see for example \cite{BHansen_2016}, I investigate convergence in distribution along sequences $\psi_N$ where $\psi_N= \psi^{\star}_0 + c N^{-\frac{1}{2}}$ for $\psi_N$ the true value, $\psi^{\star}_0\in\Psi$ the centering value and $c$ the localizing parameter. The true parameter is therefore ''close'' to the restricted parameter space up to arbitrary $c$, and hence misspecification is arbitrary. \vspace{-0.1 in}
\begin{theorem} Asymptotic Distribution when $\Psi:=(\Phi,\Theta)\in \mathbb{R}^{dim \varphi}\times \mathbb{R}^{dim \vartheta}$\\
	For $R(\varphi)\equiv \frac{\partial}{\partial \varphi}r(\varphi)$, 
	$\bar{G}^{-1}\equiv\left(\begin{array}{cc}
	G^{11} & G^{12}\\
	G^{21} & G^{22}
	\end{array}\right)$, $S_1\equiv[I_{n_\vartheta},0_{n_\vartheta\times n_\varphi}]$,  $S_2\equiv[ 0_{n_\varphi\times n_\vartheta}, I_{n_\varphi}]$ 
		Under assumptions I such that $ N^{\frac{1}{2}}\bar{G}(\tilde\psi)^{-1}g(\psi_N)\underset{d}{\to} \mathcal{Z}\sim \mathcal{N}(0,\Omega)$: \vspace{-0.1 in}
	\begin{enumerate}		
		\item $N^{\frac{1}{2}}(\hat\vartheta-\vartheta_N)\underset{d}{\to}  S_1\mathcal{Z}$ 
			\item $N^{\frac{1}{2}}(\hat\varphi-\varphi_N)\underset{d}{\to} \mathcal{Z}_r \equiv S_2\mathcal{Z}-G^{22}(\psi^{\star}_0)R(\varphi^{\star}_0)(R(\varphi^{\star}_0)'G^{22}(\psi^{\star}_0)R(\varphi^{\star}_0))^{-1}R(\varphi^{\star}_0)'(S_2(\mathcal{Z}+c))$
		\item For any non zero vector $\xi$, $\xi'(\mathbb{V}(S_2\mathcal{Z})-\mathbb{V}(\mathcal{Z}_r))\xi \geq 0$
	\end{enumerate}  \label{Shrinkage} \vspace{-0.1 in}
\end{theorem} 
\begin{proof}
		See proof of Theorem \ref{Shrinkage} in Appendix \hyperref[Proofs]{A} .
\end{proof} \vspace{-0.1 in}
There are two main implications of \cref{Shrinkage} for $\hat\vartheta$. First, for $c\neq0$, the asymptotic distribution of $\hat{\varphi}$ is non regular i.e. the distribution depends on $c$ (see p. 115 in \cite{VDV}). 
Second, the variance of $\hat\varphi$ is lower than the Cramer-Rao bound for regular estimators. For $\varphi_N$ arbitrarily close to the restricted subspace of $\varphi_0$, efficiency increases. The distribution of $\hat\vartheta$ is regular and its variance reaches the SLB. 
With a structural approximation, $\hat\vartheta$ inherits the bias-variance trade-off due to the restrictions imposed by the approximation. The next result clarifies how this trade-off arises.\vspace{-0.2 in}
\begin{Lemma}  Estimating equation when $\Upsilon \in \mathbb{C}^{2} : \Theta \to \Phi$. \\
Since $\xi=1$, the estimating equation is equal to 
	\begin{eqnarray}\frac{1}{N}\sum^{N}_{i=1}\left({M}_{f_{i}}'V^{-1}_{f_{i},m}m_{i}\right)+ \frac{d\varphi}{d\vartheta}'\frac{1}{N}\sum^{N}_{i=1}\left(\mathfrak{s}_{i}-\mathbb{E}_{h}(\mathfrak{s}|z_{i})- \mathbb{E}_{h}(\mathfrak{s}m'|z_{i})V^{-1}_{h_{i},m}m_{i}\right)=O_{p}(N^{-\frac{1}{2}})
	\end{eqnarray} which in the limit is equivalent to 
	\begin{eqnarray}\mathbb{E}_{\mathbb{P}}\left({M}_{i}'V^{-1}_{m_{i}}\mathbb{E}_{\mathbb{P}}(m|z_{i})+ \frac{d\varphi}{d\vartheta}'\left({s}_{i}-\mathbb{E}_{\mathbb{P}}({s}|z_{i})- \mathbb{E}_{\mathbb{P}}({s}m'|z_{i})V^{-1}_{m_{i}}\mathbb{E}_{\mathbb{P}}(m|z_{i})\right)\right)=0
	\end{eqnarray}
	\label{FOC}
\end{Lemma}\vspace{-0.4 in}
\begin{proof}
	The result is a direct application of Lemmas \ref{muder} in Appendix \hyperref[AppA]{A} and \ref{lemvar} in Appendix \hyperref[AppB]{B} to the first order conditions in the proof of Theorem \ref{Norm2}.
\end{proof}

The condition in Lemma \ref{FOC} can be read in two ways: The first is that in the absence of the projection, the corresponding estimating equation would simply be $\frac{d\varphi}{d\vartheta}'\mathbb{E}_{\mathbb{P}}\mathbb{E}_{\mathbb{P}}({s}|z_{i})=0$. The projection adds the moment conditions that are key to restoring the information lost by the approximation that is relevant for identification. Conversely, adding the score  based on the approximate model as an additional moment condition can improve the efficiency of estimating $\vartheta$ using only $\mathbb{E}_{\mathbb{P}}(m|z_{i})=0$. Nevertheless, in the limit, it is the variance of the approximate score that  determines efficiency.

\begin{theorem} Asymptotic Distribution when $\Upsilon \in \mathbb{C}^{2} : \Theta \to \Phi$.  \\
	For $R(\varphi)\equiv \frac{\partial}{\partial \varphi}r(\varphi)$, $\tilde{R}\equiv \left(\frac{d\vartheta}{d\varphi}\right)'R$. Under assumptions I such that $ N^{\frac{1}{2}}\check{\bar{G}}(\tilde\psi)^{-1}\check{g}(\psi_N)\underset{d}{\to} \mathcal{Z}\sim \mathcal{N}(0,\Omega)$:
	\begin{enumerate}
		\item  $N^{\frac{1}{2}}(\hat\vartheta-\vartheta_N)\underset{d}{\to} \mathcal{Z}_r\equiv \mathcal{Z}-\check{G}(\psi^{\star}_0)\tilde{R}(\varphi^{\star}_0)(\tilde{R}(\varphi^{\star}_0)'\check{G}(\psi^{\star}_0)\tilde{R}(\varphi^{\star}_0))^{-1}\tilde{R}(\varphi^{\star}_0)'(\mathcal{Z}+\check{c})$
		\item  For any non zero vector $\xi$, $\xi'(\mathbb{V}(\mathcal{Z})-\mathbb{V}(\mathcal{Z}_r))\xi \geq 0$
	\end{enumerate}
 \label{Shrinkage1_2} 
\end{theorem} 
\begin{proof}
	See proof of Theorem \ref{Shrinkage1_2} in Appendix \hyperref[Proofs]{A} .
\end{proof} 
The main insight from Theorem \ref{Shrinkage1_2} is that under local misspecification with $\xi=1$, the efficiency gain arising from employing the tilted approximate model is the same as the efficiency gain obtained by employing the approximate model. Nevertheless, note that the localizing vector $\check{c}$ is not necessarily the same as ${c}$ in Theorem \ref{Shrinkage}, as the bias is different. This distinction will be important for the asymptotic risk of the estimator based on the tilted model and will be explored further in Theorem \ref{Bias}. 

The next set of results characterize the estimator performance in terms of Mean Squared Error and asymptotic risk. I first introduce a loss function and the corresponding definition of risk.

\subsection{Asymptotic Risk}

\begin{Definition} For any estimator $\hat{\vartheta}_{N}$ of a parameter $\vartheta_{N}$, let $l(\vartheta_{N},\hat{\vartheta}_{N})$ be the loss function that characterizes the accuracy of the estimator. The risk of the estimator is the expected loss $\mathbb{E}_{\vartheta_{N}}l(\vartheta_{N},\hat{\vartheta}_{N})$.
\end{Definition} \vspace{-0.15 in}
\subsection*{ASSUMPTIONS II}
\begin{enumerate}
	\item \textbf{Locally Quadratic Loss}: $l(\vartheta,\hat{\vartheta})\geq 0$, $l(\vartheta,\vartheta)= 0$ and $W(\vartheta):=\frac{1}{2}l_{\vartheta,\vartheta'}(\vartheta,\hat{\vartheta})\mid_{\hat{\vartheta} ={\vartheta} }$ continuous in a neighborhood of $\vartheta_{0}$
\end{enumerate}\vspace{-0.14 in}
\begin{Definition}
	Define $\rho(c,\hat{\vartheta})$	 as the asymptotic risk of the estimator sequence $\left\{\hat{\vartheta}_{N}\right\}_{N=1,2..\infty}$ for parameter sequence $\left\{\vartheta_{N}\right\}_{N=1,2..\infty}$:\[\rho(c,\hat{\vartheta}):=\lim_{\zeta\to\infty}\liminf_{N\to\infty}\mathbb{E}_{\vartheta_{N}}\min[N l(\vartheta_{N},\hat{\vartheta}_{N}),\zeta]\] where $\zeta$ is an arbitrary trimming level that is asymptotically negligible. 
\end{Definition}\vspace{-0.14 in}
The following lemma enables us to compute the asymptotic risk given a smooth loss function.\vspace{-0.14 in}
\begin{Lemma} (\cite{BHansen_2016}, Lemma 1) For an estimator $\hat{\vartheta}_{N}$ such that $N^{\frac{1}{2}}\hat{\vartheta}_{N}\underset{\vartheta_{N}}{\to}\mathbb{Z}$, if $\vartheta_{N}\underset{N\to\infty}{\to}\vartheta_{0}$, then for any loss function that satisfies Assumption II, 
 $\rho(c,\hat{\vartheta})=\mathbb{E}(\mathbb{Z}'W \mathbb{Z})$. \label{asyrisk}
\end{Lemma}\vspace{-0.14 in}
I next characterize the Mean Squared Error of the proposed estimator in a local asymptotic experiment under a stronger set of conditions, while armed with \cref{asyrisk}, we can deduce the corresponding result for asymptotic risk. 
\subsubsection{Structural Approximation}
\begin{theorem} Mean Squared Error and Asymptotic Risk when $\Upsilon \in \mathbb{C}^{2} : \Theta \to \Phi$\\ 
	For $D:=\tilde{R}'\check{G}\tilde{R}$, 
	\begin{enumerate} 
		\item 	$
			\mathbf{MSE}(\hat{\vartheta}) = \check{G}\tilde{R}'D^{-1}\tilde{R}\check{c}\check{c}'\tilde{R}'D^{-1}\tilde{R}'\check{G}'+(I-\check{G}\tilde{R}D^{-1}\tilde{R}')\Omega(I-\check{G}\tilde{R}D^{-1}\tilde{R}')'
		$
		\item If in addition to (1), $\xi'(\tilde{R}D^{-1}\tilde{R}'\check{c}\check{c}'- I)\xi \leq 0$ then 
		\begin{itemize} 
			\item For any non zero vector $\xi$ and $\mathbf{MSE}(\vartheta^{MLE})=\left(\left(\frac{d\varphi}{d\vartheta}\right)'\mathbb{E}{s}_{i}{s}_{i}'\left(\frac{d\varphi}{d\vartheta}\right)\right)^{-1}$, \[\xi'\left(\mathbf{MSE}(\vartheta^{MLE})-\mathbf{MSE}(\hat{\vartheta})\right)\xi\geq0\quad \textit{and }\rho(\check{c},\hat{\vartheta})<\rho({\vartheta}^{MLE})\]  
		\end{itemize}
	\end{enumerate}  \label{Shrinkage22}\normalsize
\end{theorem} 
\begin{proof}
	See proof of Theorem \ref{Shrinkage22} in Appendix \hyperref[Proofs]{A} .
\end{proof}

 The efficiency gains in Theorem \ref{Shrinkage1_2} can potentially translate to lower MSE if condition (2) in Theorem \ref{Shrinkage22} is met. The reason why this condition is much more likely to be met by the tilted model as compared to the approximate model is the improved approximation delivered by tilting. I next illustrate that within the class of approximated equilibrium models, the information projection alleviates, at least partially, the misspecification caused by the approximation evaluated at $\vartheta_{0}$.

\subsubsection{Improved Approximation }
\vspace{-0.05 in}
\begin{Proposition}
	\[\mathbb{E}_{\mathbb{P}_{z}}\log\left(\frac{d\mathbb{P}_{z}}{dH_{z}(\vartheta_{0})}\right)<\mathbb{E}_{\mathbb{P}_{z}}\log\left(\frac{d\mathbb{P}_{z}}{dF_{z}(\vartheta_{0})}\right)\]\label{appr}  \vspace{-0.3 in}
	\end{Proposition}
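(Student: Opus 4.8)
The plan is to recognize the claim as the Pythagorean identity for information projections and to reduce the difference of the two Kullback--Leibler divergences to the scaling function $\lambda(z)$, whose strict positivity encodes exactly the gap created by the tilt. The starting point is the explicit Radon--Nikodym derivative furnished by \eqref{eq:h}, namely $dH_{z}/dF_{z}=\exp\big(\lambda(z)+\mu(z)'m(X,\vartheta_{0})\big)$, together with the fact that at the true parameter the moment restriction holds under the sampling measure, $\mathbb{E}_{\mathbb{P}_{z}}m(X,\vartheta_{0})=0$.

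First I would write the chain-rule decomposition, valid $\mathbb{P}$-a.s.\ under the maintained absolute continuity $\mathbb{P}_{z}\ll H_{z}\ll F_{z}$,
\begin{eqnarray*}
\log\frac{d\mathbb{P}_{z}}{dF_{z}(\vartheta_{0})}=\log\frac{d\mathbb{P}_{z}}{dH_{z}(\vartheta_{0})}+\log\frac{dH_{z}(\vartheta_{0})}{dF_{z}(\vartheta_{0})}.
\end{eqnarray*}
Taking conditional expectation under $\mathbb{P}_{z}$ and substituting the exponential form, the cross term is
\begin{eqnarray*}
\mathbb{E}_{\mathbb{P}_{z}}\log\frac{dH_{z}(\vartheta_{0})}{dF_{z}(\vartheta_{0})}=\lambda(z)+\mu(z)'\mathbb{E}_{\mathbb{P}_{z}}m(X,\vartheta_{0})=\lambda(z),
\end{eqnarray*}
since $\lambda(z),\mu(z)$ are $Z$-measurable and the moment condition kills the second term. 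Hence the two divergences differ by exactly $\lambda(z)$, and it remains to show $\lambda(z)>0$ on a set of positive $\mathbb{P}$-measure.

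For the strict inequality I would use the variational characterization of $(\mu(z),\lambda(z))$ recorded in the first-order conditions: $\mu(z)=\arg\min_{\mu}\mathbb{E}_{F_{z}}\exp(\mu'm(X,\vartheta_{0}))$ and $\lambda(z)=-\log\big(\mathbb{E}_{F_{z}}\exp(\mu(z)'m(X,\vartheta_{0}))\big)$. Evaluating the objective at $\mu=0$ gives the value $1$, so the minimised value is at most $1$ and $\lambda(z)\ge 0$. Equality forces $\mu=0$ to be a minimiser, whose first-order condition reads $\mathbb{E}_{F_{z}}m(X,\vartheta_{0})=0$; but this is precisely the assertion that the approximate model already satisfies the equilibrium restriction, which is excluded by hypothesis. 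Therefore $\mu(z)\neq 0$ and, by strict convexity of the exponential objective (\textbf{PD-1} rules out degeneracy of $\mathbb{E}_{F_{z}}mm'$), $\lambda(z)=\mathbb{E}_{H_{z}}\log(dH_{z}/dF_{z})=\mathbf{KL}(H_{z}\,\|\,F_{z})>0$. Integrating over the non-null set of such $z$ yields the claim.

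The main obstacle is the strictness rather than the identity: one must guarantee that the base model genuinely violates $\mathbb{E}_{F_{z}}m(X,\vartheta_{0})=0$ on a set of positive measure and that the interior minimiser $\mu(z)$ exists and is unique, which is where the primitive conditions of \cite{Komunjer_Ragusa} (implied by \textbf{BD-1a}--\textbf{BD-1b}) and the non-degeneracy in \textbf{PD-1} are invoked; the integrability needed to pass the expectation through the decomposition is supplied by \textbf{BD-1b} and \textbf{BD-2}. Everything else is the bookkeeping of the Pythagorean relation $\mathbf{KL}(\mathbb{P}_{z}\|F_{z})=\mathbf{KL}(\mathbb{P}_{z}\|H_{z})+\mathbf{KL}(H_{z}\|F_{z})$ for the $I$-projection $H_{z}$ of $F_{z}$ onto the moment-constraint set containing $\mathbb{P}_{z}$.
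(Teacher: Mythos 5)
Your proof is correct, and its core identity coincides with the paper's: both reduce the difference of the two divergences to the scaling function $\lambda(z)$, using the exponential form of $dH_{z}/dF_{z}$ together with the fact that $\mathbb{E}_{\mathbb{P}_{z}}m(X,\vartheta_{0})=0$ annihilates the $\mu(z)'m$ term (the paper computes the difference directly as $-\lambda(Z)$; your chain-rule/Pythagorean organization is the same computation in different packaging). Where you genuinely depart from the paper is the strict positivity of $\lambda(z)$. The paper argues $0\leq\mathbb{E}_{h_{z}}\log\left(h_{z}/f_{z}\right)=\lambda(Z)$, i.e.\ it identifies $\lambda(Z)$ with $\mathbf{KL}(H_{z}\Vert F_{z})$ and invokes non-negativity of the Kullback--Leibler divergence --- which, as displayed, only yields $\lambda(Z)\geq 0$, with strictness asserted ``by construction''. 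You instead use the variational characterization $\lambda(z)=-\log\min_{\mu}\mathbb{E}_{F_{z}}\exp\left(\mu'm(X,\vartheta_{0})\right)$: the objective equals $1$ at $\mu=0$, so $\lambda(z)\geq 0$, and equality forces $\mu(z)=0$ to be the (unique, by strict convexity under \textbf{PD-1}) minimizer, whose first-order condition is $\mathbb{E}_{F_{z}}m(X,\vartheta_{0})=0$ --- precisely the case excluded by the paper's premise that the approximate model violates the equilibrium restriction. This makes explicit the hypothesis on which the strict inequality actually rests and closes a small lacuna that the paper's own proof leaves implicit; beyond that, the two arguments deliver the same conclusion with the same ingredients.
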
\normalsize
\begin{proof}
	See Appendix \hyperref[AppA]{A} 
\end{proof}	 \vspace{-0.1 in}
\cref{appr} implies that if one obtains an approximate solution, tilting the density to satisfy the non-linear conditions results in a more accurate approximation (in \textbf{KL} units). This is yet another important property of this method, since if one cannot obtain an accurate approximation due to computational and time limits, tilting will increase accuracy. 
A caveat of the this result is that it holds at the true parameter value. Nevertheless, more can be said about the use of information projections at any $\vartheta$. 
The next result explicitly shows that adding information by tilting the model reduces the distance to the true value, for any degree of local missspecification in the non-tilted model.
\begin{theorem} 
	Consider a sequence of dgp's $\vartheta_N= \vartheta_0 + c_{0}N^{-\frac{1}{2}}$ with $\hat{\vartheta}_0\overset{p}{\to}\vartheta_0$ the restricted MLE estimator and $c_{0}$ the  localizing vector. Similarly, define $\vartheta_{1}$ as the probability limit of the tilted PMLE estimator. Then, there exists a vector $c_{1}:=N^{\frac{1}{2}}(\vartheta_N-\vartheta_1)$ such that for some $\eta>0$,   \begin{eqnarray}\mid\mid\vartheta_{1}-\vartheta_{N}\mid\mid^{2}=\mid\mid\vartheta_{0}-\vartheta_{N}\mid\mid^{2}-\frac{\eta}{N}+O\left(\frac{1}{N^{\frac{3}{2}}}\right)
	\end{eqnarray} \label{Bias}\normalsize   \vspace{-0.2 in}
\end{theorem} 
\begin{proof}
	See proof of Theorem \ref{Bias} in Appendix \hyperref[Proofs]{A} .
\end{proof}
In Theorem \ref{Shrinkage22}, the true parameter sequence is defined to be within a $\check{c}N^{-\frac{1}{2}}$ of the restricted space, which is the manifold consistent with the restrictions implied by the approximate model and the moment restrictions. In Theorem \ref{Bias}, the true parameter sequence is defined to be within a $c_{0}N^{-\frac{1}{2}}$ of the restricted space consistent with the approximate model without further information. Tilting the model transforms the restricted parameter space such that it is closer to the true parameter sequence. The next example explicitly shows how the MSE is reduced due to lower bias. 
\subsection{Analytical Example using the Stochastic Growth model} \label{SGM}
Assuming a logarithmic utility function for the representative agent and a Cobb-Douglas production function, the stochastic growth model is characterized by the following conditions:
\begin{eqnarray*}
	1&=&\mathbb{E}_{t}\left(\beta\left(\frac{C_{t}}{C_{t+1}}\right)(1-\delta + \alpha K^{1-\alpha}_{t+1})\right)\\
	K_{t+1}&=&(1-\delta)K_{t} + Z_{t}K^{\alpha}_{t} - C_{t}\\
	log(Z_{t+1})&=& \rho_{z}log(Z_{t}) + \epsilon_{z,t+1}\\
	\epsilon_{z,t+1}&\sim& N(0,\sigma^{2}_{\epsilon})
\end{eqnarray*}
where $(\alpha,\delta)$ are the capital share of output and depreciation rate, while $(\rho_{z},\sigma_{\epsilon})$ are the parameters of the exogenous productivity shock.
In order to obtain analytical results, let us assume that $\delta=1-\kappa\frac{Y_{t+1}}{K_{t+1}}$.\footnote{This assumption can be microfounded by assuming a depreciation rate that is a function of capital utilization, i.e. $\delta(u_{t})=1-\frac{\delta_{0}}{u_{t}}$. At the optimum choice of $u_{t}$, $\delta(u_{t}^{\star})=1-\alpha\frac{Y_{t}}{K_{t}}$, and hence $\kappa=\alpha$. } For a log-Normal productivity shock, it leads to the following DGP:
\small
\[
\left(\begin{array}{c}
\Delta	\widetilde{ C}_{t+1}\\
	 \tilde{Y}_{t+1}-\tilde{K}_{t+1}\\
\end{array}\right)\sim N\left(\left(\begin{array}{c}
	\tilde{\alpha}\log(\tilde{\alpha}\beta)\\
	(\tilde{\alpha}-1)\log(\tilde{\alpha}\beta)
\end{array}\right)+\underset{}{{\left( \begin{array}{cc}
			0&(\tilde{\alpha}-1)\\
			0&(\tilde{\alpha}-1)
		\end{array}\right)}}\left(\begin{array}{c}
\Delta	\widetilde{ C}_{t}\\
	\tilde{Y}_{t}-\tilde{K}_{t}
\end{array}\right),\left(\begin{array}{cc}
	\sigma^{2}_{z} & 0\\
	0& \sigma^{2}_{z}
\end{array}\right)\right)
\]
\normalsize
where I have defined $\tilde{\alpha}:=\alpha+\kappa$. What we are after here is to investigate how incorrectly imposing that $\kappa=0$ but  tilting the model to satisfy the (unconditional) Euler equation can reduce the bias. The first order conditions for $\beta$ are as follows:
\begin{eqnarray*}
	0&=&\left(1-\tilde{\alpha}\hat{\beta}^{tilt}N^{-1}\sum^{N}_{i=1}e^{-\Delta	 \widetilde{ C}_{t+1}+\tilde{Y}_{t+1}-	\tilde{K}_{t+1}}\right)w_{1} +\\
	&& N^{-1}\sum^{N}_{i=1}\left(\alpha(\Delta	\widetilde{ C}_{t+1}-2\alpha log(\alpha\hat{\beta}^{tilt})+\log \tilde{Y}_{t+1}+(1-2\alpha) \tilde{Y}_{t})+ \tilde{K}_{t+1}-\log(\alpha\hat{\beta}^{tilt})-\tilde{Y}_{t}\right)w_{2}
\end{eqnarray*}
where the first term is the Euler equation and the second term is the score function, with weights $(w_{1},w_{2})$ respectively which depend on structural parameters.\footnote{Specifically, $w_{1}$ corrresponds to $\mathbb{E}\left((M-s'm)V^{-1}\right)$ in the first order condition in \ref{eq:intro_mom} and $w_{2}=\frac{1}{\beta \sigma^{2}_{z}}$.} Taking expectations and setting $\alpha\approxeq0$ to simplify derivations yields that to first order, \[\left(\frac{\hat{\beta}^{tilt}}{\beta}-1\right)^{2}=\left(\frac{w_{2}\kappa}{w_{1}+{w_{2}}{}}\right)^{2}<\kappa^{2}=\left(\frac{\hat{\beta}^{PMLE}}{\beta}-1\right)^{2}\]
\normalsize
Hence, if $\kappa^{2}\sim \frac{\eta}{{N}}$, then
 \[\left(\frac{\hat{\beta}^{tilt}}{\beta}-1\right)^{2}=\left(\frac{\hat{\beta}^{PMLE}}{\beta}-1\right)^{2}-\frac{\eta}{{N}}\]as in Theorem \ref{Bias}.

\section{Monte Carlo Experiments}
In this section, I provide simulation evidence for the finite sample performance of this method in terms of $MSE(\hat{\vartheta})$, where I consider sample sizes relevant for macroeconomic data at quarterly frequency. I perform several experiments using (a) the Consumption Euler equation that prices  a risk free bond and (b) the Euler equation in a production economy. In the first model, I compare the estimator's
performance to other limited information methods using a correctly specified and misspecified reduced form density (non-structural approximation case), using both the conditional and the unconditional moments. In the second model, I investigate the estimator's performance when the base model is a log-linear structural approximation to the stochastic growth model. Note that for these experiments, misspecification is kept fixed. For brevity, the Monte Carlo experiment with the model used in the empirical application is presented directly in that section.
\subsection{\textbf{Estimating the Consumption Euler equation}}
For this model, I make assumptions such as the solution and the resulting DGP are analytically tractable. Assuming a logarithmic utility function for the representative agent, the consumption Euler equation is as follows:
\small \begin{eqnarray*}
	\mathbb{E}_{t}\left(\beta\frac{C_{t}R_{t+1}}{C_{t+1}}-1\right)&=&0
\end{eqnarray*}\normalsize
The resulting DGP is a Bivariate log-Normal VAR for consumption and the interest rate:\small
\[
\left(\begin{array}{c}
\log \widetilde{\Delta C}_{t+1}\\
\log \tilde{R}_{t+1}
\end{array}\right)\sim N\left(\left(\begin{array}{c}
	\log(\beta)\\
	-(1-\rho_{R})\log(\beta)
	\end{array}\right)+{\left( \begin{array}{cc}
		0&\rho_{CR}\\
     	0&\rho_{R},
		\end{array}\right)}\left(\begin{array}{c}
\log \widetilde{\Delta C}_{t}\\
\log \tilde{R}_{t}
\end{array}\right),\left(\begin{array}{cc}
\sigma^{2}_{C,me} & 0\\
0& \sigma^{2}_{R}
\end{array}\right)\right)
\]
\normalsize
I use the following parameterization: $\rho_{CR}=1,\rho_{R} = 0.95,\beta=0.85$, $\sigma^{2}_{R} = 0.5$ and I calibrate the measurement error variance to $\sigma^{2}_{C,me} = 0.05$. Below, I plot the MSE comparisons for estimating the discount factor $\beta$ with a varying sample size, $N=\left\{10..200\right\}$. Smaller samples are relevant for typical sub-sample analysis done with macroeconomic data. 

In Figure \ref{TiltNoTilt_beta}, I compare the performance of the CU-GMM and EL (Empirical Likelihood) estimators (with data generated with no measurement error) to this paper's estimator for $\hat\beta$, both in the case of estimating $(\rho_{CR},\rho_{R},\sigma^{2}_{R})$ in the correctly specified case and when - incorrectly- setting  $\rho_{CR}=\rho_{R}$.  To avoid using more information than necessary, I do not make use of the knowledge that the mean of the density depends on $\beta$.\footnote{ For the CU-GMM estimator, to construct the unconditional moments without losing any information I used optimal instruments estimated using kernel weights with bandwidth equal to $N^{-0.2}$. For the empirical likelihood estimator, I follow \cite{2004} that use a localized version of EL instead of  optimal instruments, again with bandwidth equal to $N^{-0.2}$.} 
\begin{figure}[H]
	\includegraphics[scale=0.45]{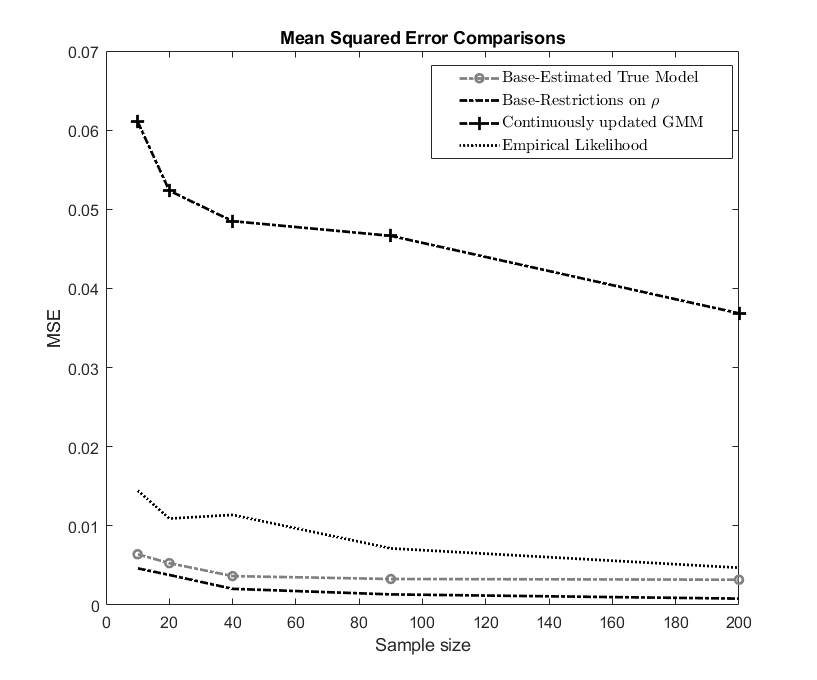}
	\vspace{-0.2 in}\protect\caption{$\hat\beta$ with different estimators (500 MC replications).}\label{TiltNoTilt_beta}
\end{figure} 
The performance of CU-GMM and EL is worse in all cases. Interpreting CU-GMM and EL as plug-in estimators using the conditional empirical CDF, where the latter is the most basic infinite dimensional model for the true conditional CDF, it is not surprising that estimating a few parameters has better MSE performance in small samples, even in the case of misspecification.\footnote{It can be shown that $\mathbb{E}_{t}M - \mathbb{E}_{f,t}M=e^{-(2-\rho_{R})log(\beta)-(1-\rho_{R})log \tilde{R}_{t}+\frac{1}{2}(\sigma^2_{R}+\sigma^2_{C})}(1-e^{\rho_{CR}\log \widetilde{\Delta C}_{t}})$. The restriction that $\rho_{CR}=\rho_{R}$ implies that $c_{1}$ is not zero.}We observe similar performance in the unconditional case (please see Appendix \hyperref[AppB]{B} for these additional Monte Carlo simulations).

In the next experiment, I extend the analysis of the estimator's performance when the base model is a log-linear approximation to the equilibrium law of motion. In section \ref{SGM}, it has already been shown that tilting reduces bias. The next numerical exercise extends the analysis of the stochastic growth model to the more general case that cannot be analytically solved. This can shed light on whether tilting actually improves on the performance of the PMLE in terms of MSE.  
\subsection{\textbf{Estimating the Stochastic Growth Model}}
Assuming a Constant Relative Risk Aversion (CRRA) utility function for the representative agent and a Cobb-Douglas production function, the stochastic growth model is characterized by the following conditions:
\begin{eqnarray*}
	1&=&\mathbb{E}_{t}\left(\beta\left(\frac{C_{t+1}}{C_{t}}\right)^{-\gamma}(1-\delta + \alpha K^{1-\alpha}_{t+1})\right)\\
	K_{t+1}&=&(1-\delta)K_{t} + Z_{t}K^{\alpha}_{t} - C_{t}\\
	log(Z_{t+1})&=& \rho_{z}log(Z_{t}) + \epsilon_{z,t+1}\\
	\epsilon_{z,t+1}&\sim& N(0,\sigma^{2}_{\epsilon})
\end{eqnarray*}
where $\gamma$ is the risk aversion coefficient and $(\alpha,\delta)$ are the capital share of output and depreciation rate, while $(\rho_{z},\sigma_{\epsilon})$ are the parameters of the exogenous productivity shock. 
The resulting DGP is a non-linear process for productivity, consumption, the capital stock, output and the real interest rate. I employ collocation methods to obtain an accurate solution using the following parameterization:
$(\alpha,\beta,\delta,\gamma,\rho_{z},\sigma_{\epsilon})=(\frac{1}{3},0.98,0.05,2,0.95,0.1)$.\footnote{I employ 100 grid points for the capital stock and 30 grid points for the productivity shock and piecewise 3rd order splines as basis functions. Collocation coefficients are computed using Newton's method.} Correspondingly, the log-linearized model is solved using QZ decomposition and the resulting density is  a Gaussian state space model, whose likelihood can be readily constructed using the Kalman filter.

Figure \ref{TiltNoTilt} plots the MSE comparisons for estimating the risk aversion coefficient $\gamma$.\footnote{I focus on the risk aversion coefficient as it is the only structural parameter that cannot be estimated from steady state values given enough observables.}   Again the CU-GMM estimator is estimated with feasible optimal instruments, while the base density is constructed using observations on consumption growth and the capital stock.  
Due to misspecification, the log-linearized model delivers a higher MSE than CU-GMM. However, tilting the likelihood to satisfy the Euler equation significantly improves performance. Given our theoretical results, the simulation evidence corroborates the claim that tilting alleviates the bias induced by the approximation to the non-linear law of motion. \subsubsection*{Computation Times} Computing the log-linear law of motion and tilting it to satisfy the non-linear moment condition takes a total of $0,017697$ seconds. The non-linear solution takes $27$ seconds.

\vspace{-0.1 in}
\begin{figure}[H]
	\includegraphics[scale=0.65]{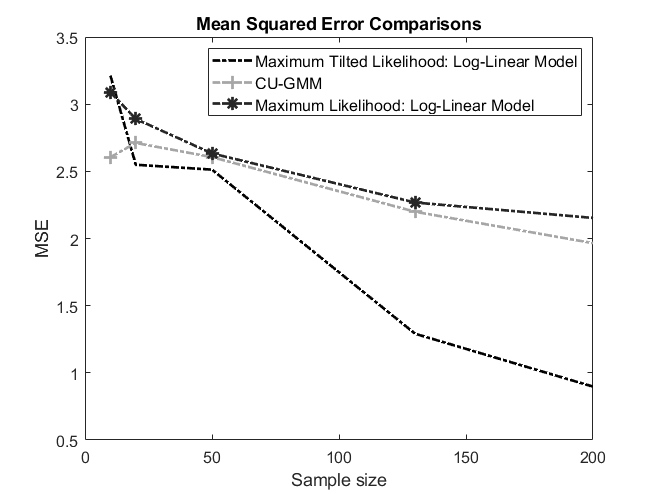}
	\vspace{-0.1 in}
	\protect\caption{$\hat\gamma$: Tilt versus No-Tilt (500 MC replications).}\label{TiltNoTilt}
\end{figure} 	\vspace{-0.15 in}

\section{Application to Pricing Macroeconomic Risk}

Low frequency fluctuations in aggregate consumption have been  shown to be important in explaining several asset pricing facts. The long run risk model of \citet{doi:10.1111/j.1540-6261.2004.00670.x} and its subsequent variations impose cross equation restrictions that link asset prices to consumption growth, where  recursive preferences \citep{
	RePEc:eee:moneco:v:24:y:1989:i:3:p:401-421,10.2307/2937817, 10.2307/1913778} differentiate between risk aversion and the intertemporal elasticity of substitution (IES). These restrictions can be summarized by the Euler equation that involves the unobserved aggregate consumption dividend $R_{\alpha,t+1}$, consumption growth $G_{c,t+1}$ and stochastic variation in the discount factor, $G_{l,t+1}$, as in \citet{doi:10.1111/jofi.12437}: 
\begin{eqnarray}
\mathbb{E}_{t}\beta^{\theta}G^{\theta}_{l,t+1}G^{\frac{-\theta}{\psi}}_{c,t+1}R^{-(1-\theta)}_{a,t+1}R_{i,t+1}&=&1  \label{lrr}
\end{eqnarray}
where $\theta = \frac{1-\gamma}{1-\frac{1}{\psi}}$,  $\gamma$ is the relative risk aversion coefficient, $\psi$ the IES and $\beta$ the discount factor. 

Relatively recent attempts to estimate this model using standard non-durable consumption data have stressed several issues that need to be taken into account. As argued by \citet{BANSAL201652}, time aggregation is an important source of bias when low frequency data is used. In fact, \citet{BANSAL201652} find empirical support for a monthly decision interval. Quarterly or yearly data are therefore  likely to be responsible for the downward bias to estimates of the IES and upward bias in risk aversion, which have been puzzling in the literature, as they imply that asset prices are increasing in uncertainty. 

A possible resolution of this issue is to use monthly data, which are nevertheless contaminated by measurement error. A recent paper \citep{doi:10.3982/ECTA14308}, SSY hereafter, provides a mixed frequency approach to make optimal use of a long span of consumption data while keeping measurement error under control. 
In this application I investigate the empirical implications of an equally important aspect of empirical macro-finance, which is the quality of the underlying approximation to the equilibrium value of the unobserved $R_{\alpha,t+1}$. 
What has been standard up to now was to use the \citet{doi:10.1111/j.1540-6261.1988.tb04598.x} log linear approximation, which has been recently criticized by \citet{JOFI:JOFI12615} as being too crude when the  underlying dynamics are persistent. I take the linear approximation as given, and impose \eqref{lrr} using the methodology in this paper. 

To isolate the informational content of imposing the Euler equation, I employ similar data and specification to \citet{doi:10.3982/ECTA14308} using only monthly data from Feb 1959 - Dec 2015 on non durable per capita consumption growth (Quantity Index for Personal Consumption Expenditures) and the risk free rate.\footnote{Please see appendix A5 of \citet{doi:10.3982/ECTA14308} on how the ex-ante real risk free rate is computed.}
 The underlying approximating model for consumption growth ($\Delta c_{t+1}=\log(G_{c,t+1})$) and the risk free rate ($r_{f,t}$) is summarized as follows:
\begin{eqnarray} 
\Delta c_{t+1}&=&\mu_{c} + x_{t} + \sigma_{c,t}\eta_{c,t+1}\\
r_{f,t}&=& B_{0} + B_{1}x_{t} + B_{1,l}x_{l,t} + B_{2,x}\sigma^{2}_{x,t} +  B_{2,c}\sigma^{2}_{c,t}\\
x_{t+1} &=& \rho x_{t} + \sqrt{1-\rho^{2}}\sigma_{x,t}\eta_{x,t+1}\\
x_{l,t+1} &=& \rho_{l}x_{t} + \sigma_{l}\eta_{l,t+1}\\
\sigma_{c,t} &=& \sigma e^{v_{x,t}}\\
\sigma_{x,t} &= &\sigma \chi_{x}e^{v_{x,t}}\\
v_{x,t+1} &= &\rho_{v_{x}}v_{x,t} + \sigma_{v,x}w_{x,t+1}
\end{eqnarray}

where $(B_{0},B_{1},B_{2,x},B_{2,c})$ are functions of the deep parameters $(\gamma,\psi,\beta)$, the risk aversion, the  elasticity of intertemporal substitution and discount factor respectively.\footnote{For details on the underlying solution mapping I encourage the reader to consult \citet{doi:10.3982/ECTA14308}.} Moreover, $x_{t}$ is the persistent component of consumption growth and the time preference shock $x_{l,t+1}:=\log(G_{l,t+1})$ is also possibly persistent. 
Regarding the observation equation, I calibrate the measurement error to the values estimated by SSY\footnote{For monthly consumption growth, I set $\sigma^2_{me}=2(\sigma^2_{\epsilon}+\sigma^2_{q})$ where $\sigma^2_{\epsilon}$ and $\sigma^2_{q}$  are the variances of the measurement error in monthly and quarterly consumption respectively, as estimated by SSY. This specification is approximately equal to SSY's specification of the measurement error in consumption growth from the third month to the first month of the next quarter, so  $\sigma^2_{me}$ is actually an upper bound to the measurement errors of the rest of the months.}. 

I perform estimation in two steps. A subset of the reduced form dynamics, that is, equations 17,19,21-23, are identified without using the long run risk model (and hence  \eqref{lrr} is irrelevant to their identification). The cash flow parameters $\phi\equiv (\rho,\chi_{x},\sigma,\rho_{v,x},\sigma_{v,c})$ are therefore estimated by using Markov Chain Monte Carlo and the particle filter.\footnote{As in {\cite{doi:10.3982/ECTA14525}}, I rely on quantiles of the posterior draws to construct confidence sets.} 
I then estimate the deep parameters $(\psi,\gamma)$ conditional on the posterior mode, $\phi^{\star}_{post}$.\footnote{Note that pre-estimating $\phi$ is consistent with the theory developed earlier as the moment condition \eqref{lrr} does not provide identifying information for $(\rho,\chi_{x},\sigma,\rho_{v,x},\sigma_{v,c})$. Moreover, since the moment condition \eqref{lrr} does not provide a measurement density for $x_{l,t}$, I would have to rely on the approximate model to identify the process parameters. I thus calibrate the time preference risk parameters $\rho_{l}$ and $\sigma_{l}$ to the posterior median estimates of SSY. A Bayesian approach has been recently proposed to deal with the lack of measurement density by \citet{GALLANT2017198}.  } This  implies that $\varphi\equiv (B_{0},B_{1},B_{2,x},B_{2,c})$, where $(B_{0},B_{1},B_{2,x},B_{2,c})$ are functions of $(\psi,\gamma)$. 
The base density used is the predictive density of the non-Gaussian state space model (Gaussian conditional on the identified volatility states) for $(\Delta c_{t+1},r_{f,t+1})$ \footnote{The are obviously alternative more efficient algorithms to deal with stochastic volatility i.e. Metropolis within Gibbs algorithm.}. Correspondingly, the conditionally Gaussian model is a bivariate Normal distribution for $(\Delta c_{t+1},r_{f,t+1})$ with conditional means $\mu_{c} + x_{t}$ and $B_{0} + B_{1}x_{t+1} + B_{2,x}\sigma^{2}_{x,t+1} +  B_{2,c}\sigma^{2}_{c,t+1}$ respectively. Conditional linearity is achieved by using the \citet{doi:10.1111/j.1540-6261.1988.tb04598.x} approximation to asset returns $r_{a,t+1}$ and solving for the price consumption ratio. 

I next investigate the usefulness of tilting the approximating density to satisfy the non linear condition \eqref{lrr}, both in terms of parameter identification and model prediction. I find that this is both empirically relevant, and economically significant.\footnote{This also corroborates the numerical results of \citet{JOFI:JOFI12615}, who have shown that this approximation can be too crude when consumption growth is persistent.}

\begin{figure}[H]
	\includegraphics[scale=0.35]{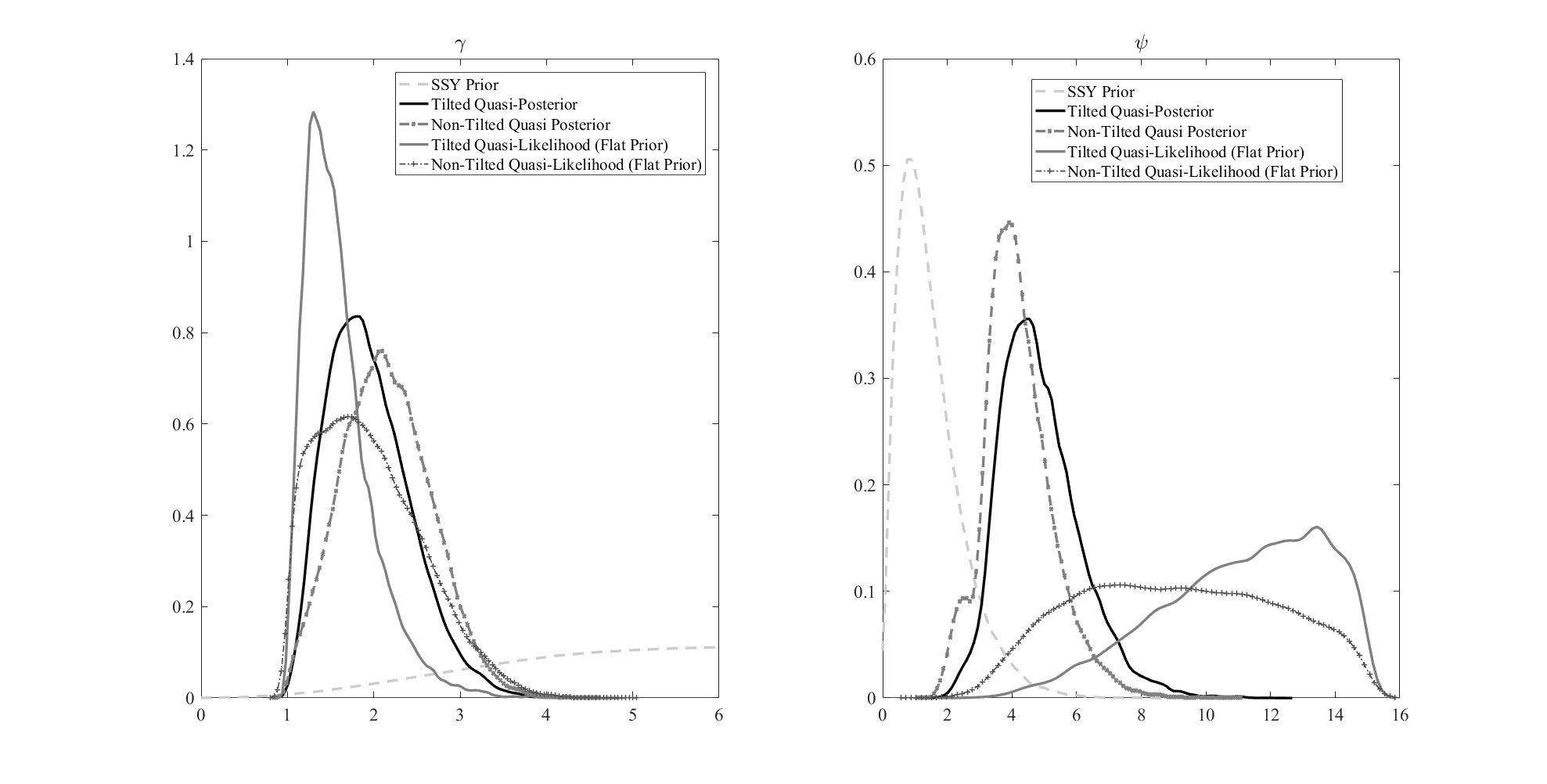}\protect\caption{ Bayesian and Frequentist estimates} \label{params} 
\end{figure}
Figure \ref{params} presents the benchmark posterior distributions together with the prior distributions, both in the log-linear approximate model (AM) and the tilted model (TM). 
As evident, correcting for the underlying non-linearity leads to improved identification as the posterior mode and MLE estimates are closer to what are considered more plausible values for risk aversion and the elasticity of intertemporal substitution. 
A direct implication is that measurement error is not the only source of upward bias in estimates for the former and downward bias in  estimates of the latter. Approximation errors is clearly another one. Moreover, posteriors are narrower.
I also report the Maximum Likelihood estimates for the tilted model, to give a sense of how much the prior information matters for both exercises.

In terms of relative fit, the tilted model is strictly preferred by the data. At their respective modes ($\psi^{\star}$), the log-posterior and log-likelihood of the tilted model are $7194.5$ and $7206.3$ respectively, while for the non-tilted approximate model they equal $7188.7$ and $7196.7$. The tilted model dominates by 5.8 log-posterior and  9.6 log-likelihood units respectively. The corresponding Bayes factor when computing the marginal likelihood using the harmonic mean ranges from 5.62 to 5.66.\footnote{Please see  \citet{Geweke99} and \citet{a5de6c49} for details on this method.}

\subsection{\textbf{Monte Carlo Simulation}}
The Monte Carlo evidence presented in Section 6 was based on more stylized models than the one employed in the application. Nevertheless, the same basic observations follow through when simulating a model similar to the one employed in this section. I next present MSE results from estimating the model, abstracting away from preference shocks while  volatility is treated as observed. 

The linear and non-linear DGP's are simulated using the code provided by  \citet{JOFI:JOFI12615}. The DGP is based in setting $(\psi=1.5, \gamma=
6, \rho=0.999,\chi_{x}=3,\sigma=0.2, \sigma_{v,c} =0.004, \sigma_{v,x}=0.01, \rho_{v,x}=\rho_{v,c}=0.99)$.  As  evident from Figure \ref{MSE_Tilting}, while increasing the sample size does eliminate a significant amount of the MSE in estimating both parameters, and in particular risk aversion, tilting yields significant reductions in the MSE at all sample sizes.

\subsubsection*{Computation Times}Finally, from the practitioner's point of view, solving the model non-linearly takes $75$ minutes, while the log-linear solution takes $0.30$ seconds. Tilting the log-linear model to satisfy the non-linear condition takes $0.008$ seconds. For estimation purposes at least, tilting is a computationally attractive method.
\begin{figure}[h]
	\includegraphics[scale=0.45]{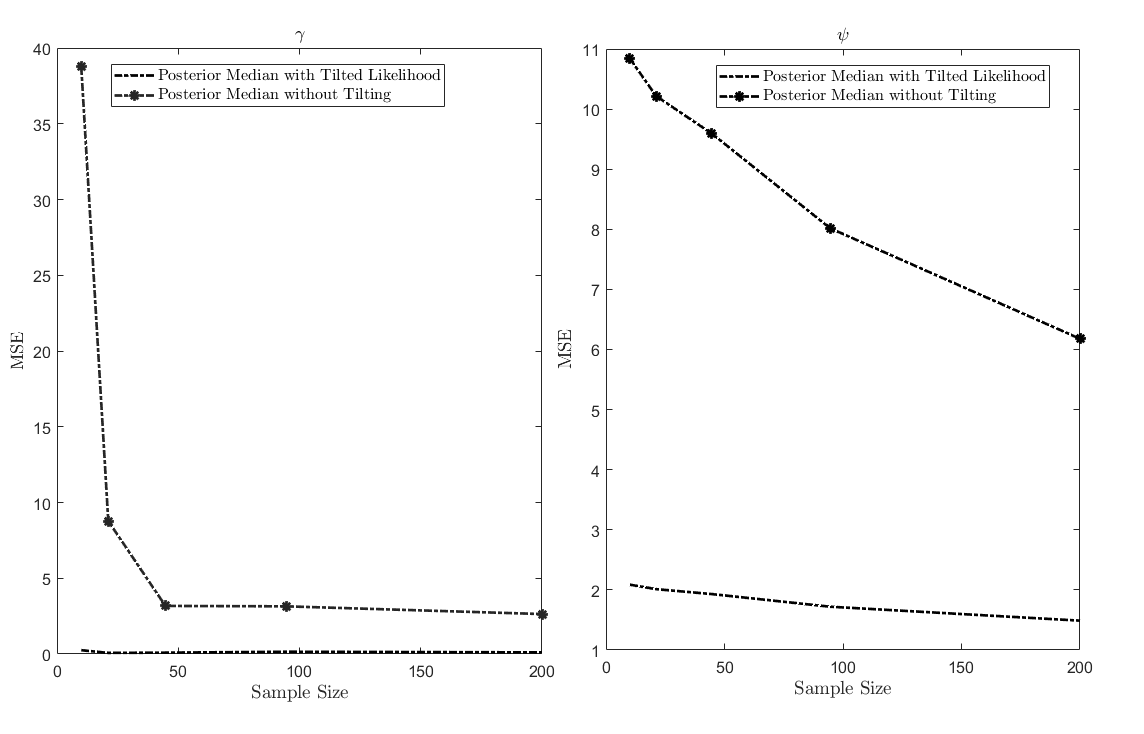}\protect\caption{Mean Squared Error comparison for estimating risk aversion and the EIS (500 MC replications).} \label{MSE_Tilting} \end{figure}

\section{Conclusion}
Model approximations are often necessary in applied settings, but they can sometimes eliminate features of the exact model that are consequential for inference, and naturally involve a bias-variance tradeoff. The paper has considered the econometric properties of tilting the approximate model to re-impose features such as conditional moment restrictions, and finds that both in theory, simulation and empirical results that tilting  substantially improves econometric performance. The  paper's application and additional monte carlo evidence also contributes to the empirical macro-finance literature by demonstrating that employing the widely used linear approximations to returns can affect parameter estimates, and hence it complements existing computational evidence   by \citet{JOFI:JOFI12615}.  The quality of the approximation is yet another reason for the downward bias to estimates of the intertemporal elasticity of substitution and the upward bias in risk aversion. 
\vspace{-0.1 in}
\bibliographystyle{econometrica}
\bibliography{thesis_references}
\newpage
\section{Appendix A} \label{AppA}

\subsection{Adversarial Estimation}
This section casts the estimation objective as a representation of a game between a modeler and an adversary see e.g. \citet{arxiv.1406.2661}. Consider a principal-agent problem where the principal delegates parameter estimation $\psi$ to a modeler, who faces a computational cost $\epsilon $ in obtaining a good approximation to the infeasible solution of the model. This cost $\epsilon$ calibrates a set of prior models:
\begin{eqnarray*}
	\mathcal{B}(p^{\star},\epsilon)&:=&\{f\in\mathcal{F}: \mathbb{E}log\left(\frac{p^{\star}}{f(\psi)}\right)<\epsilon\}
\end{eqnarray*}
The approximate model does not satisfy the moment condition $\int f(\psi)m(\psi)d\nu = 0$, which is costly to the principal. The principal acts as an adversary, and selects perturbations to $f$ that satisfy the moment conditions.  Hence, for a given strategy of the modeler $\psi$, the principal selects strategy $h$ such that it is as close as possible to the choice $f(\psi)$, up to the moment conditions being satisfied:
\begin{eqnarray*}
	\min_{h\in\mathcal{H}}\int hlog\left(\frac{h}{f(\psi)}\right)d\nu,\quad \mathcal{H}:=\big\{h\in\mathcal{L}_p:\int h m(\psi) d\nu=0,\quad \int h d\nu=1\big\}\quad ,\quad  f(\psi)\in \mathcal{B}(p^{\star},\epsilon)
\end{eqnarray*}
Given a choice of $h$, the modeler chooses $\psi$ to minimize $\int p log\left(\frac{p}{h(\psi)}\right)d\nu $ where $p$ is the empirical measure. The resulting estimator is the one that arises in the Nash equilibrium of this game. The adversary, by penalizing moment violations of the approximate model, re-introduces the information lost in $f$. As shown in the main body of the paper, this implies introducing the non-linear moment condition in the scores of $h$. In the correct approximation limit, i.e. $\epsilon\to 0$, there is no penalization by the adversary as $h=f$ and hence the cost functions of the adversary and the modeler are zero. Note that if the cost functions were symmetric, then the game would be zero-sum, which is not the case here, as the adversary pulls $f$ towards $h$ while the modeler pulls $h$ towards $p$.

\subsection{Analytical derivations for the Asset Pricing Example}
Suppressing $\lambda$, the perturbation,
$\exp(\mu'm(x,\vartheta)+\lambda)$ is proportional to 

$\exp\left(-\frac{1}{2}\left(\left(\begin{array}{c}
	c_{t+1}-\rho_{c}c_{t}\\
	R_{t+1}
\end{array}\right)^{'}\left(\begin{array}{cc}
	0 & -\mu_{t}\\
	-\mu_{t} & 0
\end{array}\right)\left(\begin{array}{c}
	c_{t+1}-\rho_{c}c_{t}\\
	R_{t+1}
\end{array}\right)\right)-\mu_{t}\frac{1}{\gamma\beta}(\gamma c_{t}-1)\right)
$

The trick here is that we can get the representation by rearranging
terms, and dropping terms that do not depend on $\mu$, and then do
the minimization. Therefore, for $\epsilon_{t+1}:=\left(\begin{array}{c}\epsilon_{1,t+1}\\ \epsilon_{2,t+1}\end{array}\right)\equiv \left(\begin{array}{c}
	c_{t+1}-\rho_{c}c_{t}\\
	R_{t+1}
\end{array}\right)$ the problem becomes as follows :\\ \small
$
\min_{\mu}\int\exp\left(-\frac{1}{2}\left(\left(\begin{array}{c}
	\epsilon_{1,t+1}\\
	\epsilon_{2,t+1}
\end{array}\right)^{'}\left(\begin{array}{cc}
	1 & -\mu_{t}\\
	-\mu_{t} & 1
\end{array}\right)\left(\begin{array}{c}
	\epsilon_{1,t+1}\\
	\epsilon_{2,t+1}
\end{array}\right)+2\mu_{t}\frac{1}{\gamma\beta}(\gamma c_{t}-1)\right)\right)d(R,C)\\
=\min_{\mu}\int \exp \left(-\frac{1}{2}\epsilon_{t+1}'\left(\begin{array}{cc}
	\frac{1}{(1-\mu_{t}^{2})} & \frac{\mu_{t}}{(1-\mu_{t}^{2})}\\
	\frac{\mu}{(1-\mu_{t}^{2})} & \frac{1}{(1-\mu_{t}^{2})}
\end{array}\right)^{-1}\epsilon_{t+1}+2\mu_{t}\frac{1}{\gamma\beta}(\gamma c_{t}-1)\right) d(R,C)
$
\normalsize

We therefore have that the F.O.C is:
\begin{eqnarray*}
	\int\exp-\frac{1}{2}\left(\epsilon_{t+1}^{'}\left(\begin{array}{cc}
		\frac{1}{(1-\mu_{t}^{2})} & \frac{\mu_{t}}{(1-\mu_{t}^{2})}\\
		\frac{\mu}{(1-\mu_{t}^{2})} & \frac{1}{(1-\mu_{t}^{2})}
	\end{array}\right)^{-1}\epsilon_{t+1}-2\mu_{t}\frac{1}{\gamma\beta}(\gamma c_{t}-1)\right)\times...\\
	...\times(-(\epsilon_{1,t+1}\epsilon_{2,t+1}+\frac{1}{\gamma\beta}(\gamma c_{t}-1))d(R,C) & = & 0
\end{eqnarray*}
Then, for the Normal scaling constant $\pi$,\\
$
\text{}\pi\int N \left(\left(\begin{array}{c}
	0\\
	0
\end{array}\right),\left(\begin{array}{cc}
	\frac{1}{(1-\mu_{t}^{2})} & \frac{\mu_{t}}{(1-\mu_{t}^{2})}\\
	\frac{\mu_{t}}{(1-\mu_{t}^{2})} & \frac{1}{(1-\mu_{t}^{2})}
\end{array}\right)\right)(\epsilon_{1,t+1}\epsilon_{2,t+1}-\frac{1}{\gamma\beta}(\gamma c_{t}-1))d(R,C) =  0$\\

which also reads as $
\frac{\mu_{t}}{(1-\mu_{t}^{2})}-\frac{1}{\gamma\beta}(\gamma c_{t}-1) =  0
$. Therefore, $\mu_{t}$ is the solution of the latter equation.

\subsection{Proofs of Main Theorems} \label{Proofs}
In the proofs to follow, I economize on notation by defining the following quantities :\\$e_{j,i}=e^{\mu_{i}'m_{j,i}(\vartheta)}$, $\tilde{e}_{j,i}=\frac{e_{j,i}}{\frac{1}{N_{s}}\sum_{j=1..s}e_{j,i}}$
, $\varkappa_{j.i}=-\frac{(e^{\mu_{i}'m_{j,i}(\vartheta)}-1)}{\mu_{i}m_{j,i}(\vartheta)'}$, $s_{j,i}:=\frac{\partial}{\partial\varphi}\log f(x_{j}|\varphi,z_{i})$ and $\mathfrak{s}_{j,i}:=\frac{s_{j,i}}{f_{j,i}}$. Index $j$ signifies evaluation at simulated point $x_{j}$ and sub-index $i$ signifies evaluation at datum $z_{i}$. \vspace{-0.2 in}
\begin{proof}  of\textbf{ \cref{mu}}. \\
	       (a) Recall that $\mu_{i}$ satisfies the moment conditions under the $H_{\varphi}(.,z_{i})$ measure, 
	        whose simulation version is $\frac{1}{N_{s}}\sum_{j=1..s}e_{j,i}m_{j,i}(\vartheta)=0$. Using the following identity, and summing over $j$ \small \[e_{j,i}m_{j,i}(\vartheta) = \frac{(e_{j,i}-1)m_{j,i}(\vartheta)m_{j,i}(\vartheta)'\mu_{i}}{m_{j,i}(\vartheta)'\mu_{i}}+m_{j,i}(\vartheta)\]\normalsize yields the following implicit map will be used to characterize the stochastic properties of $\mu_{i}$:\vspace{-0.1 in}\small \begin{eqnarray*}
	        \mu_{i}&=&\left(\frac{1}{N_{s}}{\sum_{j=1..N_{s}}\varkappa_{j,i}m_{j,i}(\vartheta)m_{j,i}(\vartheta)'}\right)^{-1}{\frac{1}{N_{s}}\sum_{j=1..s}m_{j,i}(\vartheta)}
	        \end{eqnarray*} \normalsize
	        where $\varkappa_{j,i}<0$ as $\frac{1-e^{x}}{x}<0, \forall x$.  
	       
	       First, notice that \underline{$\varkappa_{j,i}$ is bounded}: Since $\mu_{i}$ minimizes $\frac{1}{N_{s}}\sum_{j}e_{j,i}$, it must be that if $m^{l}_{j}(\vartheta)=\infty$ for some $j$ and $l$, then $\mu^{l}_{i}(\vartheta)<0$ and vice versa. Therefore,
	        $\lim_{m^{l}_{j}(\vartheta)\to \pm \infty}\varkappa_{j,i}\to 0$, where the rate of convergence to zero is $\sim e^{-m -ln(m)}$. If $|m^{l}_{j}(\vartheta)|<\infty$ then, if $\mu_{i}$ exists, $\varkappa_{j,i}$ is necessarily bounded. Finally, $\mu^{l}_{i}$ becomes unbounded if $m^{l}_{j,i}(\vartheta)$ has the same sign for any value of $\vartheta$, e.g. $m^{l}(\vartheta)$ is misspecified, which is ruled out by assumption.    	               
	               Letting $v^{ll'}_{\varkappa,j}:=[\varkappa_{j,i}m^{l}_{j,i}(\vartheta)m^{l'}_{j,i}(\vartheta)']_{ll'}$, it follows that $v^{ll'}_{\varkappa,j}<sup_{\psi}|v^{ll'}_{\varkappa,j}|$. Using \textbf{BD-1a} and Cauchy Schwarz (\textbf{CS}), we conclude that $\mathbb{E}sup_{\psi}|v_{\varkappa,j}|<\infty,\mathbb{P}({z})- a.s$. 	         
	        The denominator of $\mu_{i}$ is therefore $O_{p_{z}}(1)$. 	        
	        It is also bounded away from zero given $\textbf{PD-1}$ as $0<ln(|\varkappa_{j,i}|)\leq ln(1+e^{\mu_{i}'m_{j,i}(\vartheta)})\leq e^{\mu_{i}'m_{j,i}(\vartheta)}$ and thus $f_{i}|\varkappa_{i}|$ belongs to the set of measures for which $\textbf{PD-1}$ holds.	     
	        
	        The stochastic order of $\mu_{i}$ will therefore be determined by the numerator as follows:     \small
	        \begin{eqnarray*}
				\frac{1}{N_{s}}\sum_{j=1..s}m_{j,i}(\vartheta) & = & \int m_{i}(\vartheta)(d{F}_{N_{s},z_{i}}-d{F}_{z_{i}}+d{F}_{z_{i}}-d{\mathbb{P}}_{z_{i}}+d{\mathbb{P}}_{z_{i}})\\
					& = & o_{\mathbb{P}_{z}}(N^{-\frac{1}{2}}_{s}) + \int m_{i}(\vartheta)(d{F}_{z_{i}}-d{\mathbb{P}}_{z_{i}})
			\end{eqnarray*}			\normalsize
			where the last equality is due to the convergence of the empirical and simulation measures and correct specification of the moment condition (under $\mathbb{P}_{z}$).  				    
		    Applying Corollary \ref{disc} we have that 		    \small
		    \begin{eqnarray*}
		    	\frac{1}{N_{s}}\sum_{j=1..s}m_{j,i}(\vartheta) & = & o_{\mathbb{P}_{z}}(N^{-\frac{1}{2}}_{s}) + O_{\mathbb{P}_{z}}(\chi^{2}(F_{z_{i}},\mathbb{P}_{z_{i}})^{\frac{1}{2}}) = O_{\mathbb{P}_{z}}(\chi^{2}(F_{z_{i}},\mathbb{P}_{z_{i}})^{\frac{1}{2}})  
		    \end{eqnarray*}				
\normalsize   
and thus $\mu_{i}=O_{\mathbb{P}_{z}}(\chi^{2}(F_{z_{i}},\mathbb{P}_{z_{i}})^{\frac{1}{2}})$ (for every element of the vector  $\mu_{i}$ and all $i$).$ $\\
			
				(b) For $\bar{M}<\infty$  and $d\geq4$ (see $\mathbf{BD-1}$)
			\begin{eqnarray*}
				\mathbb{P}(\max_{i}\sup_{\vartheta}||m_{i}(\vartheta)||>\bar{M}N^{\frac{1}{d}}) & = &\mathbb{P}(\bigcup_{i\leq N}\{\sup_{\vartheta}||m_{i}(\vartheta)||>\bar{M}N^{\frac{1}{d}}\})\\
				& \leq & \sum_{i}\mathbb{P}(\sup_{\vartheta}||m_{i}(\vartheta)||>\bar{M}N^{\frac{1}{d}})\\
				& \leq & \frac{\sum_{i}\mathbb{E}(\sup_{\vartheta}||m_{i}(\vartheta)||^{d}\mathbf{1}(\sup_{\vartheta}||m_{i}(\vartheta)||^{d}>\bar{M}^{d}N)}{\bar{M}^{d}N}\\
				& = & \tilde{M}\mathbb{E}(\sup_{\vartheta}||m_{i}(\vartheta)||^{d}\mathbf{1}(\sup_{\vartheta}||m_{i}(\vartheta)||^{d}>\bar{M}^{d}N)\to 0
				\end{eqnarray*}		
		where the third bound uses the strong Markov Inequality. Therefore:
					 \begin{eqnarray*}
			 \max_{i}\sup_{\vartheta}|\mu_{i}'m_{i}(\vartheta)|&\leq& \max_{i}\sup_{\vartheta}||\mu_{i}||\max_{i}\sup_{\vartheta}||m_{i}(\vartheta)||\\
			 &\leq& n_{m}^{1/2}\max_{i}\max_{l=1..n_{m}}\sup_{\vartheta}(\mu_{l,i}) \max_{i}\sup_{\vartheta}||m_{i}(\vartheta)||= O_{p}(\max_{i}\chi^{2}(F_{z_{i}},\mathbb{P}_{z_{i}})^{\frac{1}{2}}N^{\frac{1}{d}})
			 			 \end{eqnarray*}
		 			 provided that the number of moment conditions $n_{m}$ is bounded.	
\end{proof}

\begin{proof} of \textbf{\cref{consstar}}. :
	Consider the sets $\mathcal{V}_{\mu,\delta}=\{\mu\in\mathcal{M}:||\mu-\mu_{0}||<\delta\}$and
	$\mathcal{V}_{(\vartheta,\varphi),\delta}=\{\vartheta\in\Theta:||\vartheta-\vartheta_{0}||<\delta,\varphi\in\Phi:||\varphi-\varphi_{0}||<\delta\}$
	and the objective functions they optimize respectively.
	\begin{enumerate}
\item  \underline{(Component-wise) Convergence of $\hat{\mu}_{i}$} :\\ \underline{Proofs for (a) ${\hat{\mu}_{i}-\mu_{i,0}=o_{p_{z}}(1)}$ and (b) ${Q_{N}(\psi,\hat{\mu})=Q_{N}(\psi,\mu)+o_{p_{z}}(1)}$.}\\
	(a) Given assumptions \textbf{BD-1}, using the definition of $\underset{n_{m}\times1}{\hat\mu}(\varphi,\vartheta)=\arg\inf T(z_{i},\mu)$ where $T(\mu,z_{i})=\frac{1}{N_{s}}\sum_{j=1..N_{s}}e^{\mu_{i}'m_{i}(x_{j},\vartheta)}$,  
	$\mu$
	exists for all $\vartheta,\varphi$  and is unique. For brevity, I drop dependence on $(\vartheta,\varphi)$, with the understanding that convergence holds at any $(\vartheta,\varphi)$.	Fix $Z=z_{i}$,$\forall\delta>0$.  
	
	Using a Taylor expansion of $T(\mu,z_{i})$ around $\mu_{0}$ with Lagrange remainder, we have that: \\ $T(\mu_{0},z_{i})+T'_{\mu}(\mu_{0},z_{i})(\mu-\mu_{0})+\frac{1}{2}T''_{\mu}(\tilde{\mu},z_{i})(\mu-\mu_{0})^{2}$. 
Since  $T(\mu_{0},z_{i}) \geq  T(\mu,z_{i})$, then  $\frac{1}{2}T''_{\mu}(\tilde{\mu},z_{i})(\mu-\mu_{0})^{2} + T'_{\mu}(\mu_{0},z_{i})(\mu-\mu_{0})\leq 0$, and therefore $|T'_{\mu}(\mu_{0},z_{i})|>C||\mu-\mu_{0}||$. 

We next show that $T'_{\mu}(\mu,z_{i})=o_{p_{z}}(1)$
and $\hat{\mu}_{i}-\mu_{i,0}=o_{p_{z}}(1)$.		By (\textbf{BD-1b}),  $\{e^{\mu_{i}'m_{j,i}(\vartheta)}m_{j,i}(\vartheta)\}_{j=1..N_{s}}$
	is uniformly integrable with respect to the $F-$measure , and by
	the WLLN for U.I sequences, we have that \[\frac{1}{N_{s}}\sum_{j=1..N_{s}}e^{\mu_{i}'m(x_{j},z_{i},\vartheta)+\lambda_{i}}m(x_{j},\vartheta_{0})\overset{u.p}{\to}\mathbb{E}_{h|\varphi,z_{i}}m(x_{j},\vartheta_{0},z_{i})=0\] and therefore $T'_{\mu}(\mu,z_{i})=o_{p_{z}}(1)$
  and $\hat{\mu}_{i}-\mu_{i,0}=o_{p_{z}}(1)$.   
    
    Using similar arguments, $\frac{1}{N_{s}}\sum_{1..N_{s}}e^{\mu'_{i}m(x_{j},\vartheta)}m_{i}(x_{j},\vartheta_{0})m_{i}(x_{j},\vartheta_{0})'\overset{u.p}{\rightarrow}\mathbb{E}_{H_{\varphi,z_{i}}}m_{i}(\vartheta)m_{i}(\vartheta)'$. 
  
  The above result can be strengthened. Applying the  
 Lindeberg-L\'{e}vy Central Limit Theorem, we have that $\hat{\mu}_{i}=\mu_{i,0}+o_{p}(N^{-\frac{1}{2}}_{s})$. 
 
 	(b) Defining $Q_{N}(\psi,\hat{\mu})=\frac{1}{N}\sum_{i=1..N}\log\left(f(x_{i}|z_{i},\varphi)\exp(\hat{\mu}_{i}'m(x_{i},z_{i},\vartheta))\right)$ where $\hat{\mu}$ denotes the matrix that stacks all vectors $\hat{\mu}_{i}$, we have that $Q_{N}(\psi,\hat{\mu})=Q_{N}(\psi,\mu)+o_{p_{z}}(1)$.    
 		\newpage
	\item {\underline{Uniform Convergence for $Q_{N}(\psi,\mu)$}}\\	
	By Theorem 1 in \citet{10.2307/3532442} (p. 244), we
	need to show (i) \textbf{BD} (Total Boundedness) of the metric space
	in which $(\varphi,\vartheta)$ lie together with (ii) \textbf{PC }(Pointwise
	consistency) and (iii) \textbf{SE }(Stochastic Equicontinuity).	
	Regarding (i),    Assumption \textbf{COMP} implies total
	boundedness. For (ii), using the Markov Inequality, \textbf{BD-2} and that autocovariances are summable by ergodicity:\vspace{-0.1 in} \small
	\begin{eqnarray*}
		\mathbb{P}\left(|\frac{1}{N}\sum_{i}(\log(h(x_{i};z_{i},\psi))-\mathbb{E}\log(h(x_{i};z_{i},\psi)))|>\epsilon\right)\\
		\leq  \frac{1}{N^{2}\epsilon}\mathbb{V}\left(\sum_{i}|\log(h(x_{i};z_{i},\psi))-\mathbb{E}\log(h(x_{i};z_{i},\psi))|\right)\to 0 
	\end{eqnarray*} \normalsize 
	Regarding
	(iii), Stochastic equicontinuity for the objective function can be verified
	by the ''weak'' Lipschitz condition in \citet{10.2307/3532442} (p.246):
	\[
	|{Q}_{N}(\psi,{\mu})-{Q}_{N}(\psi',{\mu})|\leq B_{N}\tilde{g}(d(\psi,\psi')),\forall(\psi,\psi')\in\Psi
	\]
	where $B_{N}=O_{p}(1)$ and $\tilde{g}$:$\lim_{y\to0}\tilde{g}(y)=0$. To verify this condition, since $Q_{N}(\psi,\mu)$ is differentiable, it suffices to use the mean value theorem:  \small
		\begin{eqnarray*}
		|{Q}_{N}(\psi,\mu)-{Q}_{N}(\psi',\mu)| &= &|\left(\nabla_{\psi} {Q}_{N}(\tilde{\psi},\mu)-\nabla_{\psi} {Q}_{N}(\tilde{\psi},'\mu)\right)'(\psi-\psi_{0})|\\
		&\leq& ||\left(\nabla_{\psi} {Q}_{N}(\tilde{\psi},\mu)-\nabla_{\psi} {Q}_{N}(\tilde{\psi},'\mu)\right)||||\psi-\psi_{0}||
	\end{eqnarray*} \normalsize
where $||\psi-\psi_{0}||$ satisfies the definition of $\tilde{g}$. 

Regarding $B_{N}:=||\left(\nabla_{\psi} {Q}_{N}(\tilde{\psi},\mu)-\nabla_{\psi} {Q}_{N}(\tilde{\psi},'\mu)\right)||$, since $\nabla_{\psi} {Q}_{N}(\tilde{\psi},\mu)$ are the first order conditions in \eqref{FOC}, it suffices to consider whether all the relevant quantities are bounded in probability. 

A sufficient condition for $B_{N}=O_{p}(1)$ is $\mathbb{E}|B_{N}|<\infty$ and thus  $\mathbb{E}||\nabla_{\psi} {Q}_{N}(\tilde{\psi},\mu)||<\infty$. 
Notice that by \eqref{FOC}, $\nabla_{\psi} {Q}_{N}(\tilde{\psi},\mu)$ is a function of the moments, the multipliers and their derivatives, $\{\lambda_{\vartheta,i},M_{i}'\mu_{i},\mu_{\vartheta,i}'m_{i},\mu_{\varphi,i}'m_{i},\mathfrak{s}_{i},\lambda_{\varphi,i}\}$. 

For $\mathbb{E}||\nabla_{\psi} {Q}_{N}(\tilde{\psi},\mu)||<\infty$  to hold, using the Cauchy-Schwarz inequality, it is sufficient that the (co)variances of $\{\lambda_{\vartheta,i},M_{i}'\mu_{i},\mu_{\vartheta,i}'m_{i},\mu_{\varphi,i}'m_{i},\mathfrak{s}_{i},\lambda_{\varphi,i}\}$ are finite. We postpone analytical derivations for the variance and covariance terms of the first order conditions to the proof of \cref{Norm}, where boundedness of the (co)variances of $\{\lambda_{\vartheta,i},M_{i}'\mu_{i},\mu_{\vartheta,i}'m_{i},\mu_{\varphi,i}'m_{i},\mathfrak{s}_{i},\lambda_{\varphi,i}\}$ follows by assumptions I. \end{enumerate}
Given that $\hat{\psi}$ is an extremum estimator, weak uniform
convergence, assumptions $\mathbf{ID}$, $\mathbf{COMP}$, and $\mathbf{BD-2}$ consistency follows by  \cite{Newey19942111}, Theorem 2.1.

\end{proof}
\begin{proof}{of \textbf{\cref{cons0}}}:
	Vanishing approximation error for $f(X|Z,\varphi)$ implies that there exists a $\varphi_0 \in \Phi: F(X|Z,\varphi_0)=\mathbb{P}(X|Z,\vartheta_0)$, for all $Z$. Since  $\sup_{\phi}\sup_{i}\Delta(F_{z_{i}},\mathbb{P}_{z_{i}})\to 0$ implies $\sup_{\phi}\sup_{i}\chi^{2}(F_{z_{i}},\mathbb{P}_{z_{i}})\to 0$,  $\lambda(Z)\to 0$ and $\mu(Z)\to0$ by \cref{mu}. Therefore $h(X|Z,\psi_{0})=f(X|Z,\varphi_{0})$. By construction, $\int m(X,Z,\vartheta^{\star}_0)\mathbb{P}(X|Z,\vartheta_0)dX=0$ as the moment condition holds under the $H$ measure. By  
	$\int m(X,Z,\vartheta_0)\mathbb{P}(X|Z,\vartheta_0)dX= 0$ and $\textbf{ID}$, we conclude that $\vartheta_0=\vartheta^{\star}_0$. 
\end{proof} 
\newpage
 \subsubsection{A comment on identification at $\psi = \psi_{0}$}       
As shown in Section 4.2.1, under asymptotic correct specification, the Jacobian is block diagonal, and each block has a well known form. For $\vartheta$, full rank of the Jacobian matrix of the moment conditions is sufficient for full rank of the corresponding block, while the same holds for the Hessian of $\varphi$.  One might be concerned with the fact that at $\psi=\psi_{0}$, the  log-likelihood does not depend on $\vartheta$ anymore as $h(X|Z)=f(X|Z,\varphi)$ and this might raise identification issues. These concerns are nevertheless not warranted as the definition of the likelihood function has a qualifying statement for $(\mu,\lambda)$ which holds for all $\psi$, that is they are solutions of problem \eqref{principal}.     
As illustrated in the proof of { \cref{mu}}, the simulated version for ${\mu}_{i}$ is characterized by the following implicit map:\vspace{-0.1 in}\small \begin{eqnarray*}
	\mu_{i}&=&\left(\frac{1}{N_{s}}{\sum_{j=1..N_{s}}\varkappa_{j,i}m_{j,i}(\vartheta)m_{j,i}(\vartheta)'}\right)^{-1}{\frac{1}{N_{s}}\sum_{j=1..N_{s}}m_{j,i}(\vartheta)}
\end{eqnarray*} \normalsize
where $\varkappa_{j.i}=\frac{1-e^{\mu_{i}'m_{j,i}(\vartheta)}}{m_{j,i}(\vartheta)'\mu_{i}}\to 1$ as $\mu_{i}\to 0$. Looking at the right hand side, it implies that under correct specification,  \small
$\left(\mathbb{E}_{\mathbb{P}}m(\vartheta,X,Z)m(\vartheta,X,Z)'\mid Z\right)^{-1}\mathbb{E}\left(m(\vartheta,X,Z)\mid Z\right) = 0       	\Leftrightarrow \mathbb{E}_{\mathbb{P}}\left(m(\vartheta,X,Z)\mid Z\right) = 0
$, \normalsize
hence, $(\vartheta,\varphi)=(\vartheta_{0},\varphi_{0})$. For any other  $(\vartheta,\varphi)$, $\mu_{i}\neq 0$, thus $\mathbb{E}_{F}\left(m(\vartheta,X,Z)\mid Z\right) \neq 0$. 
\begin{proof}{\textbf{of \cref{Norm} (Asymptotic Distribution for independent $(\varphi,\vartheta)$):}} We examine the first order expansion around the pseudotrue value where $\underset{ (n_{\vartheta}+n_{\varphi})\times 1}{g_{N}}\equiv (g_{1},g_{2})'$ is the vector of the first order conditions: $N^{\frac{1}{2}}(\psi-\psi^{\star}_{0})=-G^{-1}_{N}N^{\frac{1}{2}}g_{N}$. 	
		We first analyze the convergence in distribution of $N^{\frac{1}{2}}g_{N}$. We drop dependence of quantities on coefficients. We denote any function $q$ whose mean is computed under measure $P$ by $q_{P}$. 
Systematically applying \cref{mu} and the auxiliary Lemmata in Appendix \hyperref[AppB]{B}  to each average computed under the approximating density, we first show that only certain terms matter asymptotically at the $N^{-\frac{1}{2}}$ rate. As in  \cref{def2} and \cref{disc}, $\kappa^{-1}_{N}$ parameterizes the distance between the true and the approximating density. For  $\kappa_{N}\sim N^{\xi}$, this rate will not influence to first order $g_{1,N}$  as long as $\xi >\frac{1}{2}$, and $g_{2,N}$  as long as $\xi >1$. 

 In addition to the general notation used in the paper, I use $M_{f_{i}}$ to denote evaluation of the Jacobian of the moment restrictions under conditional measure $F_{i}$. Regarding the first term of $g_{1,N}$:\small
\begin{eqnarray*}
&&	\frac{1}{N}\underset{i}{\sum}\mu_{i,\vartheta}'{m}_{i} \equiv \frac{1}{N}\underset{i}{\sum}\left(\frac{1}{N_{s}}\sum_{j}M_{j,i}\right)'\left(\frac{1}{N_{s}}\sum_{j}e_{j,i}m_{j,i}m_{j,i}'\right)^{-1}m_{i}\\
	&=&  \frac{1}{N}\underset{i}{\sum}\left(M_{f_{i}}+O_{p_{z}}(N^{-\frac{1}{2}}_{s})\right)'\left(V^{-1}_{f_{i},m}+O_{p_{z}}(N^{-\frac{1}{2}}_{s})\right)m_{i}\\
	&=&  \frac{1}{N}\underset{i}{\sum}M_{f_{i}}'V^{-1}_{f_{i},m}m_{i}+ \frac{1}{N}\underset{i}{\sum}O_{p_{z}}(N^{-\frac{1}{2}}_{s})'V^{-1}_{f_{i},m} m_{i} \\
	&&+ \frac{1}{N}\underset{i}{\sum}M_{f_{i}}'O_{p_{z}}(N^{-\frac{1}{2}}_{s})m_{i} +  \frac{1}{N}\underset{i}{\sum}O_{p_{z}}(N^{-\frac{1}{2}}_{s})'O_{p_{z}}(N^{-\frac{1}{2}}_{s})m_{i}\\
		&=&  \frac{1}{N}\underset{i}{\sum}M_{f_{i}}'V^{-1}_{f_{i},m}m_{i}+ \frac{1}{N}\underset{i}{\sum}\left(O_{p_{z}}(N^{-\frac{1}{2}}_{s})'O_{p_{z}}(1)+O_{p_{z}}(1)'O_{p_{z}}(N^{-\frac{1}{2}}_{s})\right) m_{i} \\ &&+\frac{1}{N}\underset{i}{\sum}O_{p_{z}}(N^{-\frac{1}{2}}_{s})'O_{p_{z}}(N^{-\frac{1}{2}}_{s})m_{i}\\
			&=&  \frac{1}{N}\underset{i}{\sum}M_{f_{i}}'V^{-1}_{f_{i},m}m_{i}+ \frac{1}{N}\underset{i}{\sum}\left(O_{p_{z}}(N^{-\frac{1}{2}}_{s})\right) m_{i}+\frac{1}{N}\underset{i}{\sum}O_{p_{z}}(N^{-1}_{s})m_{i} 
	\end{eqnarray*} \normalsize Therefore, looking at the typical vector element of this term,\small
\begin{eqnarray*}
\frac{1}{N}\underset{i}{\sum}\mu_{i,\vartheta^{l}}{m}_{i} 	&\leq&  \frac{1}{N}\underset{i}{\sum} M^{l'}_{f_{i}}V^{-1}_{f_{i},m}m_{i}+  \sup_{\psi}\sup_{i}\bigg\|\left(O_{p_{z}}(N^{-\frac{1}{2}}_{s})\right) \bigg\|\frac{1}{N}\sum_{i} \|m_{i}\|+ \sup_{\psi}\sup_{i}\bigg\|\left(O_{p_{z}}(N^{-1}_{s})\right) \bigg\|\frac{1}{N}\sum_{i} \| m_{i}\|\end{eqnarray*}\normalsize
Hence, using that $N = o(N_{s})$, $N^{-\frac{1}{2}}\underset{i}{\sum}\mu_{i,\vartheta^{l}}{m}_{i}	=  N^{-\frac{1}{2}}\underset{i}{\sum} M^{l'}_{f_{i}}V^{-1}_{f_{i},m}m_{i} + o_{p}(1)$. For the typical vector element of the second term of $g_{1,N}$: 
\small
\begin{eqnarray*}
	\frac{1}{N}\underset{i}{\sum}\mu_{i}'\left({M}^{l}_{i} -\sum_{j}\tilde{e}_{i,j}M^{l}_{i,j}\right)
	&=& \frac{1}{N}\underset{i}{\sum}\mu_{i}'\left({M}^{l}_{i} -M^{l}_{h_{i}}\right)+\frac{1}{N}\underset{i}{\sum}\mu_{i}'\left(M^{l}_{h_{i}}-\sum_{j}\tilde{e}_{i,j}M^{l}_{i,j}\right)\\
		&\leq& \frac{1}{N}\underset{i}{\sum}\mu_{i}'\left({M}^{l}_{i} -M^{l}_{h_{i}}\right)+\sup_{\psi}\sup_{i}\bigg\|\mu_{i}\bigg\|\bigg\| M^{l}_{h_{i}}-\sum_{j}\tilde{e}_{i,j}M^{l}_{i,j}\bigg\|\\
	&=& \frac{1}{N}\underset{i}{\sum}\mu_{i}'\left({M}^{l}_{i} -M^{l}_{h_{i}}\right)+  \sup_{\psi}\sup_{i} O_{p_{z}}(\kappa_{i,N}^{-\frac{1}{2}})O_{p_{z}}({N_{s}}^{-\frac{1}{2}})
\end{eqnarray*}\normalsize
Therefore, even under misspecification, $\frac{1}{N}\underset{i}{\sum}\mu_{i,\vartheta^{l}}{m}_{i}=\frac{1}{N}\underset{i}{\sum} M^{l'}_{f_{i}}V^{-1}_{f_{i},m}m_{i}+\frac{1}{N}\underset{i}{\sum}\mu_{i}'\left({M}^{l}_{i} -M^{l}_{h_{i}}\right)+o_{p}(1)$.
Under asymptotic correct specification, and letting $\tilde{\mu}_{i}:={\kappa_{N}}^{\frac{1}{2}}\mu_{i}$ \small
\begin{eqnarray*}
\frac{1}{N}\underset{i}{\sum}\mu_{i}'\left({M}^{l}_{i} -M^{l}_{h_{i}}\right)&=&\frac{1}{N}\underset{i}{\sum}\mu_{i}'\left({M}^{l}_{i} -M^{l}_{\mathbb{P}_{z}}\right)+\frac{1}{N}\underset{i}{\sum}\mu_{i}'\left(M^{l}_{\mathbb{P}_{z}} - M^{l}_{h_{i}}\right)\\
&\leq&{\kappa_{N}}^{-\frac{1}{2}}\frac{1}{N}\underset{i}{\sum}\tilde{\mu}_{i}'\left({M}^{l}_{i} -M^{l}_{\mathbb{P}_{z}}\right)+ \sup_{\psi}\sup_{i}\bigg\|\mu_{i}\bigg\|\bigg\| M^{l}_{\mathbb{P}_{z}} - M^{l}_{h_{i}}   \bigg\|\\
&=&{\kappa_{N}}^{-\frac{1}{2}}O_{p}(N^{-\frac{1}{2}})+\sup_{\psi}\sup_{i} O_{p_{z}}\left({\kappa_{i,N}}^{-1}\right)
\end{eqnarray*} \normalsize where the first term in the last line uses that $\tilde{\mu}_{i}'\left({M}^{l}_{i} -M^{l}_{\mathbb{P}_{z}}\right)$ has mean zero and a bounded variance, thus its sample average converges at the root-n rate.
 Hence, \small
\begin{eqnarray*}
N^{-\frac{1}{2}}\underset{i}{\sum}\mu_{i}'\left({M}^{l}_{i} -\sum_{j}\tilde{e}_{i,j}M^{l}_{i,j}\right) 
			&=&  O_{p}(N^{-\frac{\xi}{2}})+ \sup_{\psi}\sup_{i}O_{p_{z}}(N^{\frac{1}{2}-\xi})+   \sup_{\psi}\sup_{i} O_{p_{z}}({N_{s}}^{-\frac{1}{2}}N^{\frac{1}{2}-\frac{\xi}{2}}) = O_{p}(N^{\frac{1}{2}-\xi}).
\end{eqnarray*} \normalsize  
Regarding the first order condition with respect to $\varphi$, \small
	\begin{eqnarray*}
\frac{1}{N}\underset{i}{\sum}\left(\mathfrak{s}_{i} -\sum_{j}\tilde{e}_{i,j}\mathfrak{s}_{j,i}+\mu_{i,\varphi}'m_{i}\right) 
		&=& \frac{1}{N}\underset{i}{\sum}\left(\mathfrak{s}_{i} -\mathfrak{s}_{h_{i}}+\mathfrak{s}_{h_{i}}-\sum_{j}\tilde{e}_{i,j}\mathfrak{s}_{j,i}+\mu_{i,\varphi}'m_{i}\right)\\
		&\leq& \frac{1}{N}\underset{i}{\sum}\left(\mathfrak{s}_{i} -\mathfrak{s}_{h_{i}}\right)+\sup_{\psi}\sup_{i}\left(\mathfrak{s}_{h_{i}}-\sum_{j}\tilde{e}_{i,j}\mathfrak{s}_{j,i}\right)    +\frac{1}{N}\underset{i}{\sum}\mu_{i,\varphi}'m_{i}\\
	    	&=& \frac{1}{N}\underset{i}{\sum}\left(\mathfrak{s}_{i} -\mathfrak{s}_{h_{i}}\right)+\sup_{\psi}\sup_{i} O_{p_{z}}(N_{s}^{-\frac{1}{2}})+\frac{1}{N}\underset{i}{\sum}\mu_{i,\varphi}'m_{i}	 \\
	    		&=& \frac{1}{N}\underset{i}{\sum}\left(s_{i}+(\mathfrak{s}_{i}-\mathfrak{s}_{\mathbb{P}_{i}})-(s_{i}-{s}_{\mathbb{P}_{i}})+(\mathfrak{s}_{\mathbb{P}_{i}}-{s}_{\mathbb{P}_{i}})-\mathfrak{s}_{h_{i}}\right)+o_{p}(1)+\frac{1}{N}\underset{i}{\sum}\mu_{i,\varphi}'m_{i}	     	
		\end{eqnarray*}		
	For $\xi>\frac{1}{2}$, $\mathfrak{s}_{h_{i}}=\mathfrak{s}_{f_{i}} + O_{p_{z}}(N^{-\frac{\xi}{2}}) $ while for $\xi>1$, $\mathfrak{s}_{\mathbb{P}_{i}}-{s}_{\mathbb{P}_{i}}=O_{p_{z}}(N^{-\frac{\xi}{2}})$,  $\mathfrak{s}_{f_{i}}-{s}_{f_{i}}=O_{p_{z}}(N^{-\frac{\xi}{2}})$, ${s}_{f_{i}}-{s}_{\mathbb{P}_{i}}=O_{p_{z}}(N^{-\frac{\xi}{2}})$.  Hence, for $\xi>1$
			\begin{eqnarray*}N^{-\frac{1}{2}}\underset{i}{\sum}\left(\mathfrak{s}_{i} -\sum_{j}\tilde{e}_{i,j}\mathfrak{s}_{j,i}+\mu_{i,\varphi}'m_{i}\right) &=&  N^{-\frac{1}{2}}\underset{i}{\sum}\left({s}_{i} -{s}_{f_{i}}\right)+N^{-\frac{1}{2}}\underset{i}{\sum}\mu_{i,\varphi}'m_{i}+O_{p}(N^{\frac{1}{2}-\frac{\xi}{2}})\\
\end{eqnarray*}	\normalsize
			A key driver of the results is \textbf{BD-1a}, as conditional moments are bounded for all $z\in Z$, and are therefore  bounded random variables. 
	
	Multiply the first order conditions by $N^{\frac{1}{2}}$, $N^{\frac{1}{2}}g_{N}=N^{\frac{1}{2}}A_{i,0}+o_{p}(1)$
	where the terms in $A_{i,0}$ are those terms in the above derivations that converge at this rate:	
	Then,
	\small		
	\begin{eqnarray}
		{N^{\frac{1}{2}}g_{N}}=\left[\begin{array}{c}
			N^{-\frac{1}{2}}\sum_{i} M_{f_{i}}'V^{-1}_{f_{i},m}m_{i}+  	N^{-\frac{1}{2}}\underset{i}{\sum}\left({M}_{i} -M_{h_{i}}\right)'\mu_{i}\\    N^{-\frac{1}{2}}\underset{i}{\sum}\left(\mathfrak{s}_{i} -\mathfrak{s}_{h_{i}}+\mu_{i,\varphi}'m_{i}\right)
		\end{array}\right]  +o_{p}(1)   \label{eq:local}
	\end{eqnarray}	\normalsize
To show asymptotic normality, it is sufficient to use \eqref{eq:local}. I make use of the Cramer-Wold device. Let $\xi$ be
	a $p\times 1$ vector of real numbers where $\xi_{p\times1}'=\left(\begin{array}{cc}
	\underset{\dim(\vartheta)}{\xi'_{1}}, & \underset{\dim(\varphi)}{\xi_{2}'}\end{array}\right)$ normalized such that $||\xi||=1$. 
	\begin{eqnarray*}
	N^{\frac{1}{2}}\xi_{p\times1}'g_{N} & = & N^{-\frac{1}{2}}\underset{i}{\sum}\xi_{1}'\left(M_{f_{i}}'V^{-1}_{f_{i},m}m_{i}+ \left({M}_{i} -M_{h_{i}}\right)'\mu_{i}\right)+N^{-\frac{1}{2}}\underset{i}{\sum}\xi_{2}'\left(\mathfrak{s}_{i} -\mathfrak{s}_{f_{i}}+\mu_{i,\varphi}'m_{i}\right)+o_{p}(1)\\
	& = & \hat{\Xi}_{1}+\hat{\Xi}_{2}+ o(1)
\end{eqnarray*}	
	What needs to be shown is that the variance of $\hat{\Xi}_{1}$ and $\hat{\Xi}_{2}$ is
	finite. Covariances of the	above terms can be bounded by their variances using
	\textbf{C-S} inequality.			As for $\hat{\Xi}_{1}$, the variance of the first term is  \[\mathbb{E}\mathbb{V}_{z}(\xi'M_{f_{i}}'V^{-1}_{f_{i},m}m_{i})\leq\xi_{1}'\mathbb{E}(M_{f_{i}}'V^{-1}_{f_{i},m}\mathbb{V}_{m}V^{-1}_{f_{i},m}M_{f_{i}})\xi_{1}<\infty\] 
	while for the second term, looking at one individual vector component (indexed by $l$), $\mathbb{E}\mathbb{V}_{z}(\xi_{1}'(M^{l}_{{i}}-M^{l}_{h_{i}})'\mu_{i})=\xi_{1}'\mathbb{E}\mathbb{V}_{z}(\mu_{i}'M^{l}_{{i}})\xi_{1}=\xi_{1}'\mathbb{E}(\mu_{i}'\mathbb{V}_{M^{l}}\mu_{i})\xi_{1}<\infty$ as all conditional expectations are bounded almost surely. 
	Similar argument is followed for $\hat{\Xi}_{2}$.	Combining the above results, $g_{N}$ has conditional mean zero (w.r.t $z$) and finite variance.  Using the CLT for Martingale Differences \citep{1961}\vspace{-0.1 in}
	\begin{eqnarray*}
		N^{\frac{1}{2}}\xi_{p\times1}'g_{N} & = & N^{-\frac{1}{2}}\xi_{p\times1}'\Xi_{N}+o_{p}(1) \to \mathcal{N}(0,\xi' V_{g}\xi)
	\end{eqnarray*}
	and therefore  $
	N^{\frac{1}{2}}(g_{N})\to N(0,V_g)$. This result can be extended to allow for serial correlation in $g_{N}$ using \citet{gor69}'s CLT, where  $\mathbb{V}_{g}:=\Gamma_{0}+ \sum_{j=1..\infty}(\Gamma_{j} + \Gamma'_{j})$. Finiteness of  $\Gamma_{j}\equiv \mathbb{E}(g_{t}g_{t-j}')$ follows very similar arguments as above. This concludes the proof of Asymptotic Normality.
		\subsection{\textbf{Efficiency}} From the first order conditions, $G_{N}(\hat{\vartheta},\hat{\varphi})=0$, using the mean value theorem,	\[
	\begin{array}{cc}
	0= & g_{N}(\psi_{0})+G_{N}(\tilde{\psi})(\psi-\psi_{0})\end{array}
	\]	
	Using \cref{muder} I next investigate the exact form of the non random limits of both the Jacobian term and the variance covariance matrix of the moment conditions. By the WLLN, averages converge pointwise to a constant. Furthermore, given that all of these quantities
	are functions of $m(x,z)$, $M(x,z)$ using measure $F$
	or $P$, we can obtain dominating functions by taking the supremum
	over $\Psi$. Then, By assumption $\mathbf{BD1-a}$ they are bounded. Uniform convergence follows.
		\subsubsection{\textbf{Form of Jacobian $G_{N}(\psi)$:}} The  population Jacobian matrix is the following: \small
	\begin{eqnarray*}
		\bar{G}_{N}(\tilde{\psi}) & \equiv & \left(\begin{array}{cc}
			\bar{G}_{i,\vartheta\vartheta'}(\tilde{\psi})  & \bar{G}_{i,\vartheta\varphi'}(\tilde{\psi}) \\
			\bar{G}_{i,\varphi\vartheta'}(\tilde{\psi})  & \bar{G}_{i,\varphi\varphi'}(\tilde{\psi}) 
		\end{array}\right)\\
		{G}_{i,\vartheta_{l}\vartheta'} & = & m_{i}'\mu_{i,\vartheta_{l}\vartheta'}+\left(M^{l}_{i}-\frac{1}{N_{s}}\sum_{j}^{N_{s}}\tilde{e}_{j}M^{l}_{j}\right)'\mu_{i,\vartheta'}+\mu_{i,\vartheta_{l}}'M_{i}\\
		&  & +\mu_{i}'\left(\frac{\partial M^{l}_{i}}{\partial \vartheta'}-\frac{1}{N_{s}}\sum_{j}^{N_{s}}\tilde{e}_{j}\frac{\partial M^{l}}{\partial \vartheta'}\right)-\mu_{i}'\frac{1}{N_{s}}\sum_{j}^{N_{s}}\tilde{e}_{j,\vartheta'}M_{j}\\
	{G}_{i,\vartheta_{l}\varphi'} & = &\left(M^{l}_{i}-\frac{1}{N_{s}}\sum_{j}^{N_{s}}\tilde{e}_{j}M^{l}_{j}\right)'\mu_{i,\varphi}+m_{i}'\mu_{i,\vartheta_{l}\varphi'}-\mu_{i}'\left(\frac{1}{N_{s}}\sum_{j}^{N_{s}}\tilde{e}_{j}M^{l}_{j}\mathfrak{s}_{j}'+\frac{1}{N_{s}}\sum_{j}^{N_{s}}M^{l}_{j}\tilde{e}_{j,\varphi'}\right)\\
		&  & \\
		{G}_{i,\varphi_{l}\varphi'} &=& \frac{\partial^2 log(f_{i})}{\partial \varphi_l \varphi'}-\frac{1}{N_S}\sum_j \tilde{e}_j  \frac{\partial^2 log(f_{j,i})}{\partial \varphi_l \varphi'} -  \frac{1}{N_s}\sum_j\tilde{e}_j\mathfrak{s}_{l,j,i}\mathfrak{s}'_{j}  + m_i'\mu_{i,\varphi_l\varphi'}   -\frac{1}{N_s}\sum_j\mathfrak{s}_{l,j,i}\tilde{e}_{j,i,\varphi'}   
	\end{eqnarray*}\normalsize where superscript $l$ denotes the $l_{th}$ column and $\tilde{e}_{j,i,\varphi'}=\tilde{e}_{j,i}m_{j,i}'\mu_{\varphi}$.

The Jacobian converges to,\small
\begin{eqnarray*}
	\bar{G}_{i,\vartheta^{\star}_{l}\vartheta^{\star'}} & \underset{p}{\to} & \mathbb{E}\left(\mathbb{E}(m'|z_{i})\mu_{i,\vartheta^{\star'}_{l}}\right) -\mathbb{E}\left(\left(\mathbb{E}(M^{l}|z_{i})-\mathbb{E}_{h}(M^{l}|z_{i})'\right)V^{-1}_{f_{i},m}\mu_{i,\vartheta^{\star'}}\right)\\
	&  & +\mathbb{E}\left(\mu_{i,\vartheta^{\star}_{l}}'\mathbb{E}(M|z_{i})\right)+\mathbb{E}\left(\mu_{i}'\mathbb{E}\left(\frac{\partial M^{l}}{\partial \vartheta'}|z_{i}\right)-\mathbb{E}_{h}\left(\frac{\partial M^{l}}{\partial \vartheta^{\star'}}|z_{i}\right)-\mu_{i}'\mathbb{E}_{h}(M|z_{i})\right)\\
	\bar{G}_{i,\vartheta^{\star'}_{l}\varphi'} &  \underset{p}{\to} &\mathbb{E}\left(\left(\mathbb{E}(M^{l}|z_{i})-\mathbb{E}_{f}(M^{l}|z_{i}))'\right)\mu_{i,\varphi^{\star}}\right)+\mathbb{E}\left(\mathbb{E}(m'|z_{i})\mu_{i,\vartheta^{\star}_{l}\varphi^{\star'}}\right)\\
	&  & -\mathbb{E}\left(\mu_{i}'(\mathbb{E}_{h}(M^{l}\mathfrak{s}'+M^{l}m'\mu_{i,\phi}|z_{i}))\right) \\
	\bar{G}_{i,\varphi^{\star}_{l}\varphi^{\star'}} & \underset{p}{\to}& \mathbb{E}\left(\frac{\partial^2 log(f_{i})}{\partial \varphi^{\star}_l \varphi^{\star'}}\right)-\mathbb{E}\mathbb{E}_{h}\left(\frac{\partial^2 log(f_{j})}{\partial \varphi^{\star}_l \varphi^{\star'}}|z_{i}\right) -  \mathbb{E}\mathbb{E}_{h}(\mathfrak{s}_{l}\mathfrak{s}'|z_{i})  + \mathbb{E}(\mathbb{E}(m|z_{i})\mu_{i,\varphi^{\star}_l\varphi^{\star'}})  \\
	&& +\mathbb{E}(\mathbb{E}_{h}(\mathfrak{s}_{l}m'|z_{i})\mu_{\varphi^{\star}})  
\end{eqnarray*}
\normalsize
which under asymptotic correct specification, it yields:\small
	\begin{eqnarray*}
	\bar{G}_{i,\vartheta_{l}\vartheta'}        & \underset{p}{\to} & - \mathbb{E}\left(\mathbb{E}(M^{l'}|z_{i}) \mathbb{V}_{m,i} ^{-1}\mathbb{E}(M|z_{i})\right)\\
	\bar{G}_{i,\vartheta_{l}\varphi'} & \underset{p}{\to} & 0\\
	\bar{G}_{i,\varphi_{l}\varphi'} & \underset{p}{\to} & \mathbb{E}\frac{\partial^2 log(f_{i})}{\partial \varphi_l \varphi'} + \mathbb{E} \left(\mathbb{E}(  {s}_{l}m'|z_{i})\mathbb{V}^{-1}_{m,i}\mathbb{E}(m{s}'|z_{i})\right)   \end{eqnarray*} \normalsize \vspace{-0.2 in}
	\subsubsection{\textbf{Form of $V_{g}$}} 
The variance covariance matrix is pinned down by\small
	\begin{eqnarray*}
		\mathbb{V}(N^{\frac{1}{2}}\hat{g}_{1}(\vartheta^{\star})) & = & \mathbb{V}(M_{i}'\mu_{i}+\mu_{i,\vartheta^{\star}}'m_{i}+\lambda_{i,\vartheta^{\star}})\\& =& \mathbb{E}\left[\left( M_{f_{i}}'V^{-1}_{f_{i},m}m_{i} + \left({M}_{i} -M_{h_{i}}\right)'\mu_{i}\right)\left( M_{f_{i}}'V^{-1}_{f_{i},m}m_{i}   + \left({M}_{i} -M_{h_{i}}\right)'\mu_{i}\right)'\right]+o(1)\\
		\mathbb{V}(N^{\frac{1}{2}}\hat{g}_{2}(\vartheta^{\star})) & = & \mathbb{V}(\mathfrak{s}_{i}'+\mu_{i,\varphi}'m_{i}+ \lambda_{i,\varphi})\\
		&=& \mathbb{E}\left[\left(\mathfrak{s}_{i} -\mathfrak{s}_{h_{i}} -\mathbb{E}_{f}(\mathfrak{s}m'|z_{i})V^{-1}_{f_{z},m}\right)\left(\mathfrak{s}_{i} -\mathfrak{s}_{h_{i}} -\mathbb{E}_{f}(\mathfrak{s}m'|z_{i})V^{-1}_{f_{z},m}\right)'\right]+o(1)\\
		\mathbb{C}ov(N^{\frac{1}{2}}\hat{g}_{1}(\vartheta^{\star}),N^{\frac{1}{2}}\hat{g}_{2}(\vartheta^{\star}))&=&\mathbb{C}ov\left(M_{i}'\mu_{i}+\mu_{i,\vartheta^{\star}}'m_{i}+\lambda_{i,\vartheta^{\star}},\mathfrak{s}_{i}'+\mu_{i,\varphi^{\star}}'m_{i}+ \lambda_{i,\varphi^{\star}}\right)\\
		&=& \mathbb{E}\left[\left(  M_{f_{i}}'V^{-1}_{f_{i},m}m_{i} + \left({M}_{i} -M_{h_{i}}\right)'\mu_{i}    \right)\left(\mathfrak{s}_{i} -\mathfrak{s}_{h_{i}} -\mathbb{E}_{f}(\mathfrak{s}m'|z_{i})V^{-1}_{f_{z},m}\right)'\right]+o(1)\\	
	\end{eqnarray*}	 \normalsize
When $\xi>1$ this yields:
		\begin{eqnarray*}
		\mathbb{V}(N^{\frac{1}{2}}\hat{g}_{1}(\vartheta)) & = & \mathbb{E} M_{f_{i}}'V^{-1}_{f_{i},m}m_{i}m_{i}'V^{-1}_{f_{i},m} M_{f_{i}}+o(1)=\mathbb{E} \left(\mathbb{E}(M_{i}|z_{i})'V^{-1}_{\mathbb{P}_{z},m} \mathbb{E}(M_{i}|z_{i})\right)\\
		&&+o(1)\\
			\mathbb{V}(N^{\frac{1}{2}}\hat{g}_{2}(\vartheta)) & = & \mathbb{E}\mathfrak{s}_{i}\mathfrak{s}_{i}'+ \mathbb{E}\mathfrak{s}_{i}m_{i}'\mu_{i,\varphi}+ \mathbb{E}\mu_{i,\varphi}'m_{i}\mathfrak{s}_{i}'+ \mathbb{E}\mu_{i,\varphi}'m_{i}m_{i}'\mu_{i,\varphi}+o(1)\\
			&=& \mathbb{E}{s}_{i}{s}_{i}'-\mathbb{E}\left(\mathbb{E}({s}m'|z_{i})V^{-1}_{\mathbb{P}_{z},m}\mathbb{E}(m{s}'|z_{i})\right)+o(1)\\
			\mathbb{C}ov(N^{\frac{1}{2}}\hat{g}_{1}(\vartheta),N^{\frac{1}{2}}\hat{g}_{2}(\vartheta))&=&\mathbb{E}\left(M_{f_{i}}'V^{-1}_{f_{i},m}m_{i}\right)\left(\mathfrak{s}_{i} -\mathfrak{s}_{f_{i}}+\mu_{i,\varphi}'m_{i}\right)'+o(1)\\
			&=&\mathbb{E}M_{f_{i}}'V^{-1}_{f_{i},m}m_{i}{s}_{i}' + \mathbb{E}M_{f_{i}}'V^{-1}_{f_{i},m}m_{i}m_{i}'\mu_{i,\varphi} +o(1) = o(1) 
		\end{eqnarray*}	
    
           \end{proof}      
      
              \begin{proof}{\textbf{of \cref{Norm2} (Asymptotic Distribution for  $\vartheta$ when ${\Upsilon}:\Theta\to\Phi$):}}
       The proof follows very similar steps as \cref{Norm} and is therefore omitted. The main difference is that in this case the first order condition, the Jacobian and Variance under correct specification $(\xi>1)$ are:\small
       	\begin{eqnarray*}
       	{\check{g}_{N}}&:=& N^{-1}\sum_{i} M_{f_{i}}'V^{-1}_{f_{i},m}m_{i} + \left(\frac{d\varphi}{d \vartheta}\right)'N^{-1} \underset{i}{\sum}\left(\mathfrak{s}_{i} -\mathfrak{s}_{f_{i}}+\mu_{i,\varphi}'m_{i}\right) +o_{p}(1)\\
        		{\check{\bar{G}}}_{i,\vartheta_{l}\vartheta'} &=& 	{{\bar{G}}}_{i,\vartheta_{l}\vartheta'} +   \bar{g}_{2}'\frac{d}{d\vartheta'}\left(\frac{d\varphi}{d\vartheta_{l}}\right) + \left(\frac{d\varphi}{d\vartheta_{l}}\right)'\left( {\bar{G}}_{i,\varphi\varphi'}\left(\frac{d\varphi}{d\vartheta}\right) +  {\bar{G}}_{i,\varphi\vartheta'} \right)\\
   		 &\underset{p}{\to} &  - \mathbb{E}\left(\mathbb{E}(M^{l'}|z_{i}) \mathbb{V}_{m,i} ^{-1}\mathbb{E}(M|z_{i})\right) + \left(\frac{d\varphi}{d\vartheta_{l}}\right)'\mathbb{E}\frac{\partial^2 log(f_{i})}{\partial \varphi \varphi'}\left(\frac{d\varphi}{d\vartheta}\right)  \\
   		 &&+\left(\frac{d\varphi}{d\vartheta_{l}}\right)'\mathbb{E} \left(\mathbb{E}(  {s}m'|z_{i})\mathbb{V}^{-1}_{m,i}\mathbb{E}(m{s}'|z_{i})\right)\left(\frac{d\varphi}{d\vartheta}\right)\\
   		 &=& \left(\frac{d\varphi}{d\vartheta_{l}}\right)'\mathbb{E}\frac{\partial^2 log(f_{i})}{\partial \varphi \varphi'}\left(\frac{d\varphi}{d\vartheta}\right) = -\left(\frac{d\varphi}{d\vartheta_{l}}\right)'\mathbb{E}({s}{s}')\left(\frac{d\varphi}{d\vartheta}\right)\\
       		\mathbb{V}(N^{\frac{1}{2}}\check{g}_{2}(\vartheta))&=& 		\mathbb{V}(N^{\frac{1}{2}}\hat{g}_{1}(\vartheta)) + 	\mathbb{V}(N^{\frac{1}{2}}\hat{g}_{2}(\vartheta))\\
       		&& + 	\mathbb{C}ov(N^{\frac{1}{2}}\hat{g}_{1}(\vartheta),N^{\frac{1}{2}}\hat{g}_{2}(\vartheta)) + 	\mathbb{C}ov(N^{\frac{1}{2}}\hat{g}_{2}(\vartheta),N^{\frac{1}{2}}\hat{g}_{1}(\vartheta))\\
       		&=& \mathbb{E} \left(\mathbb{E}(M|z_{i})'V^{-1}_{\mathbb{P}_{z},m} \mathbb{E}(M|z_{i})\right)\\
       		&&+ \left(\frac{d\varphi}{d\vartheta}\right)'\left(\mathbb{E}{s}_{i}{s}_{i}'-\mathbb{E}\left(\mathbb{E}({s}m'|z_{i})V^{-1}_{\mathbb{P}_{z},m}\mathbb{E}(m{s}'|z_{i})\right)\right)\left(\frac{d\varphi}{d\vartheta_{l}}\right) + o(1)\\
       		&=&\left(\frac{d\varphi}{d\vartheta}\right)'\mathbb{E}{s}_{i}{s}_{i}'\left(\frac{d\varphi}{d\vartheta_{l}}\right) + o(1)       		
       	\end{eqnarray*}	
       	\normalsize
       	where we use that $\mathbb{E}(M|z_{i})=-\mathbb{E}(m{s}'|z_{i})\left(\frac{d\varphi}{d\vartheta}\right) $ and $\mathbb{E}(\frac{\partial^2 log(f_{i})}{\partial \varphi \varphi'}|z_{i})=-\mathbb{E}({s}{s}'|z_{i})\left(\frac{d\varphi}{d\vartheta}\right) $  (See  e.g. \citet{10.2307/2096601}). 	Hence, the asymptotic variance of the estimator collapses to 
       	$\left[\left(\frac{d\varphi}{d\vartheta}\right)'\mathbb{E}{s}_{i}{s}_{i}'\left(\frac{d\varphi}{d\vartheta}\right)\right]^{-1} $.
       \end{proof}
       \vspace{-0.2 in}
\begin{proof} \textbf{of Theorem \ref{Shrinkage}}\\
		\begin{enumerate}
	\item The first order conditions for $\varphi$ under restrictions $r(\varphi)=0$ are as follows:	
	\begin{eqnarray*}
		\hat\varphi-\varphi_N&=& - \bar{G}^{21}(\tilde\psi){g}_1(\psi_N)-\bar{G}^{22}(\tilde\psi)({g}_2(\psi_N)+{\pi} R(\hat{\varphi}))
	\end{eqnarray*} \vspace{-0.05 in}
     Expanding the constraint around $\varphi^\star_0$, $0=r(\hat\varphi)={{r(\varphi^\star_0)}}+R(\tilde{\tilde{\varphi}})'(\hat\varphi-\varphi^\star_0)$   and substituting for $\hat\varphi-\varphi_N$ and $\varphi_{N}-\varphi^\star_0$ , 	
	\begin{eqnarray*}
		\pi &=& -(R'\bar{G}^{22}R)^{-1}R'(\bar{G}^{21}g_1+\bar{G}^{22}{g}_2-c_{2} N^{-\frac{1}{2}}) 
	\end{eqnarray*} where I dropped dependence on $N$ and $\psi$ and used that $r(\varphi^\star_0)=0$.	Substituting for $\pi$ in $\hat\varphi-\varphi_N$ and plugging $\pi$ in the first order conditions for $\hat\vartheta - \vartheta_N$ and $\hat\varphi - \varphi_N$ the result follows. Note that the limiting Jacobian is identical as long as $\xi\geq \frac{1}{2}$. As  shown in \cref{Norm}, $\bar{G}^{21}\underset{p}{\to}0$ and hence the asymptotic distribution for $\hat\vartheta - \vartheta_N$ is orthogonal to that of $\hat\varphi - \varphi_N$ .\\
\item Let $D:=R'\bar{G}^{22}R$. Evaluating the difference between the variances,\small \begin{eqnarray*}
	\mathbb{V}(\mathcal{Z}_r)-\mathbb{V}(S_{2}\mathcal{Z}) &=& (I-\bar{G}^{22}RD^{-1}R')\bar{G}^{22}V_{22}\bar{G}^{22}(I-\bar{G}^{22}RD^{-1}R')'-V_{22}	\\
	&=& (V_{22}-V_{22}RD^{-1}R'V_{22}) - (V_{22}-V_{22}RD^{-1}R'V_{22})(V_{22}RD^{-1}R')'-V_{22}	\\
	&=& -V_{22}RD^{-1}R'V_{22} - V_{22} (V_{22}RD^{-1}R')' + V_{22}RD^{-1}R'V_{22}RD^{-1}R'V_{22}	\\
	&=& -V_{22}RD^{-1}R'V_{22} - V_{22} (V_{22}RD^{-1}R')' + V_{22}RD^{-1}R'V_{22} = - V_{22} RD^{-1}R'V_{22}	\end{eqnarray*}\normalsize
where in the second line, I used that $\bar{G}^{22} \to G_{22}= -V_{22}$. The result follows by the positive definiteness of $ V_{22} RD^{-1}R'V_{22}$.
\end{enumerate}
	\end{proof}
\begin{proof} \textbf{of Theorem \ref{Shrinkage1_2}}
	\begin{enumerate}
		\item 	
	The first order conditions for $\vartheta$ under restrictions $r(\varphi(\vartheta))=0$ are as follows:	
	\begin{eqnarray*}
		\hat\vartheta-\vartheta_N&=& - \bar{\check{G}}^{-1}(\tilde\vartheta)[{\check{g}}(\vartheta_N)+\pi \tilde{R}]
	\end{eqnarray*} \vspace{-0.05 in}
	     Expanding the constraint around $\vartheta^\star_0$, $0=r(\varphi(\hat\vartheta))={{r(\varphi(\vartheta^\star_0))}}+R(\tilde{\tilde{\varphi}}(\vartheta))'\left(\frac{d\varphi}{d\vartheta}\right)(\hat\vartheta-\vartheta^\star_0)$   and substituting for $\hat\vartheta-\vartheta_N$ and $\vartheta_{N}-\vartheta^\star_0$ , 	
	$\pi= -(\tilde{R}'\bar{\check{G}}\tilde{R})^{-1}\tilde{R}'(\bar{\check{G}}^{-1}\check{g}-\check{c} N^{-\frac{1}{2}}) 
$. Substituting for $\pi$ in $\hat\vartheta-\vartheta_N$ and plugging $\pi$ in the first order conditions for $\hat\vartheta - \vartheta_N$ the result follows.
\item  $D:=\tilde{R}'\bar{G}^{22}\tilde{R}$. Evaluating the difference between the variances,\small \begin{eqnarray*}
	\mathbb{V}(\mathcal{Z}_r)-\mathbb{V}(\mathcal{Z}) &=& (I-\check{G}\tilde{R}D^{-1}\tilde{R}')\Omega(I-\check{G}\tilde{R}D^{-1}\tilde{R}')'-\Omega	\\
	&=& (\Omega-\check{G}\tilde{R}D^{-1}\tilde{R}'\Omega) - (\Omega-\check{G}\tilde{R}D^{-1}\tilde{R}'\Omega)(\check{G}\tilde{R}D^{-1}\tilde{R}')'-\Omega	\\
	&=& -\check{G}\tilde{R}D^{-1}\tilde{R}'\Omega - \Omega (\check{G}\tilde{R}D^{-1}\tilde{R}')' + \check{G}\tilde{R}D^{-1}\tilde{R}'\Omega \tilde{R}D^{-1}\tilde{R}'\check{G}'	\\
	&=& -\Omega\tilde{R}(\tilde{R}'\Omega\tilde{R})^{-1}\tilde{R}'\Omega - \Omega (\Omega\tilde{R}(\tilde{R}'\Omega\tilde{R})^{-1}\tilde{R}')' + \Omega\tilde{R}(\tilde{R}'\Omega\tilde{R})^{-1}\tilde{R}'\check{G}\tilde{R}(\tilde{R}'\Omega\tilde{R})^{-1}\tilde{R}'\Omega	\\
	&=&  -\Omega\tilde{R}(\tilde{R}'\Omega\tilde{R})^{-1}\tilde{R}'\Omega
\end{eqnarray*} \normalsize where in the fourth  line, I used that $\check{G} =-\Omega$. The result follows by the positive definiteness of $\tilde{R}(\tilde{R}'\Omega\tilde{R})^{-1}\tilde{R}'$.
\end{enumerate}
	\end{proof}
\normalsize  \vspace{-0.2 in}

\begin{proof} \textbf{of Theorem \ref{Shrinkage22}}
		\begin{enumerate} 
			\item 
	Given the asymptotic distribution result in \cref{Shrinkage1_2}, and for  \small $D:=\tilde{R}'\check{G}\tilde{R}$, \\
	$\mathcal{Z}_r\equiv (I - \check{G}(\psi^{\star}_0)\tilde{R}(\varphi^{\star}_0)D^{-1}\tilde{R}(\varphi^{\star}_0)' )\mathcal{Z} + \check{G}(\psi^{\star}_0)\tilde{R}(\varphi^{\star}_0)D^{-1}\tilde{R}(\varphi^{\star}_0)'\check{c}$. \normalsize Hence, the result follows as the bias is equal to \small $\check{G}(\psi^{\star}_0)\tilde{R}(\varphi^{\star}_0)D^{-1}\tilde{R}(\varphi^{\star}_0)'\check{c}\quad$\normalsize  and \small	
	$\mathbb{V}(\mathcal{Z}_r) = (I - \check{G}(\psi^{\star}_0)\tilde{R}(\varphi^{\star}_0)D^{-1}\tilde{R}(\varphi^{\star}_0)' )\Omega (I - \check{G}(\psi^{\star}_0)\tilde{R}(\varphi^{\star}_0)D^{-1}\tilde{R}(\varphi^{\star}_0)' )'$ \normalsize 
	\item The matrix difference between the variances of the tilted  and the true model is:\small \begin{eqnarray*}
	\mathbb{V}(\mathcal{Z}_r)-\mathbb{V}(\mathcal{Z}) &=& (I-\check{G}\tilde{R}D^{-1}\tilde{R}')\Omega(I-\check{G}\tilde{R}D^{-1}\tilde{R}')'-\Omega	\\
	&=& (\Omega-\check{G}\tilde{R}D^{-1}\tilde{R}'\Omega) - (\Omega-\check{G}\tilde{R}D^{-1}\tilde{R}'\Omega)(\check{G}\tilde{R}D^{-1}\tilde{R}')'-\Omega	\\
    &=& -\check{G}\tilde{R}D^{-1}\tilde{R}'\Omega - \Omega (\check{G}\tilde{R}D^{-1}\tilde{R}')' + \check{G}\tilde{R}D^{-1}\tilde{R}'\Omega \tilde{R}D^{-1}\tilde{R}'\check{G}'	\\
	 &=& -\check{G}\tilde{R}D^{-1}\tilde{R}'\Omega - \Omega (\check{G}\tilde{R}D^{-1}\tilde{R}')' + \check{G}\tilde{R}D^{-1}\tilde{R}'\check{G}\tilde{R}D^{-1}\tilde{R}'\check{G}'	\\
	  &=& -\check{G}\tilde{R}D^{-1}\tilde{R}'\Omega - \Omega (\check{G}\tilde{R}D^{-1}\tilde{R}')' + \check{G}\tilde{R}D^{-1}\tilde{R}'\Omega	\\
	    &=&  -\Omega \tilde{R}(\tilde{R}'\Omega\tilde{R})^{-1}\tilde{R}'\Omega	\\
\mathbf{MSE}(\vartheta^{MLE})-\mathbf{MSE}(\hat{\vartheta})&=&	\check{G}\tilde{R}'D^{-1}\tilde{R}\check{c}\check{c}'\tilde{R}'D^{-1}\tilde{R}'\check{G}'- \Omega \tilde{R}(\tilde{R}'\Omega\tilde{R})^{-1}\tilde{R}'\Omega	\\ 
&=&	\Omega[\tilde{R}'(\tilde{R}'\Omega\tilde{R})^{-1}\tilde{R}\check{c}\check{c}'- I]\tilde{R} (\tilde{R}'\Omega\tilde{R})^{-1}\tilde{R}'\Omega
	\end{eqnarray*} \normalsize
The matrix difference is negative definite if $ [\tilde{R}'(\tilde{R}'\Omega\tilde{R})^{-1}\tilde{R}\check{c}\check{c}'- I]$ is negative definite and hence $\xi'(\tilde{R}'(\tilde{R}'\Omega\tilde{R})^{-1}\tilde{R}\check{c}\check{c}'- I)\xi \leq 0$ for any non zero vector $\xi$.
 Hence:
	\begin{itemize}
			\item $\mathbf{MSE}(\hat{\vartheta}) = \Omega + \check{G}\tilde{R}'D^{-1}\tilde{R}\check{c}\check{c}'\tilde{R}'D^{-1}\tilde{R}'\check{G}'- \Omega \tilde{R}(\tilde{R}'\Omega\tilde{R})^{-1}\tilde{R}'\Omega$
			\item Let $\tilde{W}:\tilde{W}\tilde{W}'=W$ and $\tilde{R}'(\tilde{R}'\Omega\tilde{R})^{-1}\tilde{R}\check{c}\check{c}'- I$ be negative definite. Then: \small \begin{eqnarray*}
					\rho(\check{c},\hat{\vartheta}) &=& \mathbb{E}(\mathcal{Z}_{r}'W\mathcal{Z}_{r}) = tr(W\mathbb{E}(\mathcal{Z}_{r}\mathcal{Z}'_{r}))= tr(\tilde{W}\mathbf{MSE}(\hat{\vartheta}) \tilde{W}')\\
					&=& tr(\tilde{W} \Omega\tilde{W}') + tr(\tilde{W} (\check{G}\tilde{R}'D^{-1}\tilde{R}\check{c}\check{c}'\tilde{R}'D^{-1}\tilde{R}'\check{G}')\tilde{W}')-tr(\tilde{W}\Omega \tilde{R}(\tilde{R}'\Omega\tilde{R})^{-1}\tilde{R}'\Omega\tilde{W}')\\
					 &<&tr(\tilde{W} \Omega\tilde{W}') = tr(\tilde{W}\mathbf{MSE}(\vartheta^{MLE}) \tilde{W}')=\rho({\vartheta}^{MLE})
					\end{eqnarray*} \normalsize
	\end{itemize}
\end{enumerate} 
\end{proof}
\begin{proof} \textbf{of \cref{appr}}\\
  a) Similar to \citet{Giacomini2014145}, 
		\begin{eqnarray*}
	\mathbb{E}_{\mathbb{P}_{z}}\log\left(\frac{d\mathbb{P}_{z}}{dH_{z}(\vartheta_{0})}\right)-\mathbb{E}_{\mathbb{P}_{z}}\log\left(\frac{d\mathbb{P}_{z}}{dF_{z}(\vartheta_{0})}\right)&=&\mathbb{E}_{\mathbb{P}_{z}}\log f_{z}(\vartheta_{0})-\mathbb{E}_{\mathbb{P}_{z}}\log h_{z}(\vartheta_{0})=	-\lambda({Z})
\end{eqnarray*}
	By construction, $\lambda(Z)>0 $ as $0 \leq\mathbb{E}_{{h}_{z}(\vartheta)}\log\left(\frac{h_{z}(\vartheta)}{f_{z}(\vartheta)}\right)=\mathbb{E}_{{h}_{z}(\vartheta)}m(\vartheta)+\lambda(Z)=\lambda(Z)$.
\end{proof}

\begin{proof} \textbf{of \cref{Bias}}\\
Consider the sequence of dgp's $\vartheta_{N} = \vartheta_{0} + c_{0}N^{-\frac{1}{2}}$ where $\vartheta_{0}$ lies in the restricted parameter space and is such that \[\bar{l}(\vartheta_{0}):=\left(\frac{d\varphi}{d \vartheta}\right)'\mathbb{E}\mathfrak{s}(x_{i};z_{i},\varphi_{0})=0\] and  $\mathfrak{s}$ is the score function of the restricted model which does not satisfy the moment conditions, $\bar{m}_{f}(\vartheta_{0}):=\mathbb{E}\left(\mathbb{E}_{f}(m(x,;\vartheta_{0},z))\right)\neq 0$. Expanding $\bar{l}(\vartheta_{0})$ around $\vartheta_{N}$, yields that \small
\begin{eqnarray}
	0&=&\left(\frac{d\varphi}{d \vartheta}\right)'l(\varphi_{N}) + \left(\frac{d\varphi}{d \vartheta}\right)'\mathcal{H}(\tilde{\varphi}_{N})\left(\frac{d\varphi}{d \vartheta}\right)(\vartheta_{0}-\vartheta_{N})\nonumber \\
	&=&l(\vartheta_{N}) + \mathcal{H}(\tilde{\vartheta}_{N})(\vartheta_{0}-\vartheta_{N})=l(\vartheta_{N}) - \mathcal{H}(\tilde{\vartheta}_{N})N^{-\frac{1}{2}}c_{0}  \label{thetan}
\end{eqnarray}\normalsize
Correspondingly, $\vartheta_{N}$ is such that  $\bar{m}(\vartheta_{N}):=\mathbb{E}\left(\mathbb{E}(m(x,;\vartheta_{N},z))\right)=0$ and hence $\bar{g}_{1}(\vartheta_{N})=0$. Finally, let $\vartheta_{1}$ be the pseudotrue parameter that solves $\mathbb{E}\left(g_{1}(\vartheta_{1})+ \left(\frac{d\varphi}{d \vartheta}\right)'g_{2}(\psi_{1})\right) =0$, the first order conditions of the tilted model, in population. Since $\bar{g}_{2}(\psi_{1})=\bar{l}(\varphi_{1})-\mathfrak{B}\bar{m}(\vartheta_{1})$ (see page 13 for defn. of $\mathfrak{B}$ ),  \small
\begin{eqnarray*}
\bar{g}_{1}(\vartheta_{1})+ \left(\frac{d\varphi}{d \vartheta}\right)'\left(\bar{l}(\varphi_{1})-\mathfrak{B}\bar{m}(\vartheta_{1})\right)&=&	\left(\bar{M}'(\vartheta_{1})-\left(\frac{d\varphi}{d \vartheta}\right)'\widebar{\mathfrak{s}(\varphi_{1}){m}(\vartheta_{1})}'\right)V^{-1}(\vartheta_{1})\bar{m}(\vartheta_{1}) + \bar{l}(\vartheta_{1})=0
\end{eqnarray*}\normalsize
Expanding $m(\vartheta_{1})$ and $l(\vartheta_{1})$ around $\vartheta_{N}$ yields that\footnote{Note that the midpoint $\tilde{\vartheta}$ is different for different points ($\vartheta_{1}$ and $\vartheta_{0}$) but all differ by $O(N^{-\frac{1}{2}})$.} \small
\begin{eqnarray*}
	\left(\bar{M}'(\vartheta_{1})-\left(\frac{d\varphi}{d \vartheta}\right)'\widebar{\mathfrak{s}(\varphi_{1}){m}(\vartheta_{1})}'\right)V^{-1}\bar{M}(\vartheta_{n})(\vartheta_{1}-\vartheta_{N}) + l(\vartheta_{N}) + \mathcal{H}(\tilde{\vartheta}_{N})(\vartheta_{1}-\vartheta_{N})+O(N^{-\frac{1}{2}})&=&0
		\end{eqnarray*}\normalsize
Using that at $\vartheta_{1}$, $\bar{m}(\vartheta_{1})= O(N^{-\frac{1}{2}})$	and hence $\bar{M}'(\vartheta_{1})=-\left(\frac{d\varphi}{d \vartheta}\right)'\widebar{\mathfrak{s}(\varphi_{1}){m}(\vartheta_{1})}'+ O(N^{-\frac{1}{2}})$, and denoting $2\bar{M}'(\vartheta_{n})V^{-1}\bar{M}(\vartheta_{n})$ by $\hat{G}'(\vartheta_{1})$,\small
	\begin{eqnarray*}
			2\bar{M}'(\vartheta_{n})V^{-1}\bar{M}(\vartheta_{n})(\vartheta_{1}-\vartheta_{N})+ l(\vartheta_{N}) + \mathcal{H}({\tilde{\vartheta}}_{N})(\vartheta_{1}-\vartheta_{N})+O(N^{-\frac{1}{2}})&=&0\\		
		\left(\hat{G}'(\vartheta_{1})+\mathcal{H}(\tilde{\vartheta}_{N})\right)(\vartheta_{1}-\vartheta_{N}) + l(\vartheta_{N})+O(N^{-\frac{1}{2}}) &=&0
	\end{eqnarray*}\normalsize
Using \eqref{thetan} and that there exists a $c_{1}$ such that $\vartheta_{N} = \vartheta_{1} + c_{1}N^{-\frac{1}{2}}$ and hence $0= \vartheta_{1}-\vartheta_{0} + (c_{1}-c_{0})N^{-\frac{1}{2}}$, \small
	\begin{eqnarray*}
	\hat{G}'(\vartheta_{N})(\vartheta_{1}-\vartheta_{N})+				\mathcal{H}(\tilde{\vartheta}_{N})(\vartheta_{1}-\vartheta_{0}+\vartheta_{0}-\vartheta_{N}) + l(\vartheta_{N})+O(N^{-\frac{1}{2}})  &=&0\\
	\hat{G}'(\vartheta_{1})(\vartheta_{1}-\vartheta_{N})+				\mathcal{H}(\tilde{\vartheta}_{N})(\vartheta_{1}-\vartheta_{0}) +O(N^{-\frac{1}{2}})&=&0\\
	N^{-\frac{1}{2}}\hat{G}'(\vartheta_{1})c_{1}+				N^{-\frac{1}{2}}\mathcal{H}(\tilde{\vartheta}_{N})(c_{0}-c_{1})+O(N^{-\frac{1}{2}}) &=&0\\
	(\hat{G}'(\vartheta_{1})-\mathcal{H}(\tilde{\vartheta}_{N}))c_{1}&=& -\mathcal{H}(\tilde{\vartheta}_{N})c_{0} +O(N^{-\frac{1}{2}})\\
	(\hat{G}'(\vartheta_{1})+\check{\mathcal{H}}(\tilde{\vartheta}_{N}))c_{1} &=& \check{\mathcal{H}}(\tilde{\vartheta}_{N})c_{0}	+O(N^{-\frac{1}{2}})
\end{eqnarray*}\normalsize
where $\check{\mathcal{H}}\equiv -\mathcal{H}(\tilde{\vartheta}_{N})$ is positive definite.
Hence, for any localizing vector $c_{0}$, the vector $c_{1}$ that corresponds to the sequence of pseudo-true parameters in the tilted model is as follows:
\[c_{1}=		(\hat{G}'(\vartheta_{1})+\check{\mathcal{H}}(\tilde{\vartheta}_{N}))^{-1}\check{\mathcal{H}}(\tilde{\vartheta}_{N})c_{0} +O(N^{-\frac{1}{2}})\] The relative squared bias is
\small \begin{eqnarray}
		\frac{c_{1}'c_{1}}{c_{0}'c_{0}}&=&		\frac{c'\check{\mathcal{H}}(\tilde{\vartheta}_{N})'(\hat{G}'(\vartheta_{1})+\check{\mathcal{H}}(\tilde{\vartheta}_{N}))^{-1}(\hat{G}(\vartheta_{1})+\check{\mathcal{H}}(\tilde{\vartheta}_{N}))^{-1}\check{\mathcal{H}}(\tilde{\vartheta}_{N})c}{c_{0}'c_{0}} +O(N^{-\frac{1}{2}}) \label{relativebias}
\end{eqnarray} \normalsize
The first term in the RHS is  the Rayleigh quotient, which is bounded above by the  maximum eigenvalue of $\check{\mathcal{H}}(\tilde{\vartheta}_{N})'(\hat{G}'(\vartheta_{1})+\check{\mathcal{H}}(\tilde{\vartheta}_{N}))^{-1}(\hat{G}'(\vartheta_{1})+\check{\mathcal{H}}(\tilde{\vartheta}_{N}))^{-1}\check{\mathcal{H}}(\tilde{\vartheta}_{N})$. What remains to show is that this eigenvalue is less than one. 
Denote by $\bar{\lambda}[A]$ and $\underline{\lambda}[A]$ the smallest and largest eigenvalues of any matrix $A$.  The first term in the RHS of \eqref{relativebias} involves the product of the matrix $\check{\mathcal{H}}(\tilde{\vartheta}_{N})'(\hat{G}'(\vartheta_{1})+\check{\mathcal{H}}(\tilde{\vartheta}_{N}))^{-1}$ and its transpose. Hence, requiring that the maximum eigenvalue is less than one is equivalent to requiring that the maximum squared singular value of $A:=(\check{\mathcal{H}}^{-1}(\tilde{\vartheta}_{N})\hat{G}(\vartheta_{1})+I)^{-1}$, $\sigma^{2}(A)$, is less than one and correspondingly, $\underline{\sigma}^{2}[A^{-1}]>1$ :
\begin{eqnarray}
	\bar{\lambda}[A'A]<1 &\Leftrightarrow& \underline{\lambda}[A^{-1}A^{'-1}]>1 \Leftrightarrow \underline{\sigma}^{2}[A^{-1}]>1 \Leftrightarrow \mid\underline{\sigma}[A^{-1}]\mid>1\\
&\Leftrightarrow& \mid\underline{\sigma}[\check{\mathcal{H}}^{-1}(\tilde{\vartheta}_{N})\hat{G}(\vartheta_{1})+I]\mid>1 
\end{eqnarray}
The above condition is satisfied as \[\underline{\sigma}[\check{\mathcal{H}}^{-1}(\tilde{\vartheta}_{N})\hat{G}(\vartheta_{1})+I]\geq \underline{\sigma}[\check{\mathcal{H}}^{-1}(\tilde{\vartheta}_{N})]\underline{\sigma}[\hat{G}(\vartheta_{1})]+1>1 \]
where the second to last inequality uses Theorem 8.13 in \citet{zhang2011matrix}\footnote{For  $i\in\{1,2..n\}$ where $\bar{\sigma}=\sigma_{1}>\sigma_{2}\geq...\geq \sigma_{n}=\underline{\sigma}$, and $A$ and $B$ are symmetric (Hermitian) and of the same size, $\sigma_{i}(A+B)\geq \sigma_{i}(A)+\sigma_{n}(B)$ and $\sigma_{i}(AB)\geq \sigma_{i}(A)\sigma_{n}(B)$. } and the last inequality uses the fact that both matrices are symmetric and hence have positive singular values.
Hence, for any $\eta>0$   \[\frac{c_{1}'c_{1}}{N}=(\vartheta_{N}-\vartheta_{1})'(\vartheta_{N}-\vartheta_{1})=\frac{c_{0}'c_{0}}{N}-\frac{\eta}{N}+O(N^{-\frac{3}{2}})=(\vartheta_{N}-\vartheta_{0})'(\vartheta_{N}-\vartheta_{0})-\frac{\eta}{N}+O\left(\frac{1}{N^{\frac{3}{2}}}\right)\]\normalsize
\end{proof}

\begin{Lemma} \underline{Limits of derivatives of $(\mu,\lambda)$ with respect to $(\vartheta,\varphi)$}: \\
	Under correct specification, the unconditional moments of all derivatives are as follows: \small
	\begin{itemize}
		\item  \underline{First order derivatives and relevant Second order derivatives}  
		\begin{eqnarray*}
			\mathbb{E}_{\mathbb{P}_N}\mu_{i,\vartheta} & \overset{}{\to}& -\mathbb{V}^{-1}_{m}M_{P}\\
			\mathbb{E}_{\mathbb{P}_N}\lambda_{i,\vartheta} & \overset{}{\to}&0 \\
			\mathbb{E}_{\mathbb{P}_N}\mu_{i,\varphi}&\overset{}{\to}&-V^{-1}_{m}\mathbb{E}(m\mathfrak{s}')\\
			\mathbb{E}_{\mathbb{P}_N}\lambda_{i,\varphi}&\overset{}{\to}&0\\
			\mathbb{E}_{\mathbb{P}_N}\lambda_{i,\varphi_l\varphi'}&{\to}& \mathbb{E}(\mathfrak{s}^{l}_{j}m_{j}'V^{-1}_{m}m\mathfrak{s}')\\
			\mathbb{E}_{\mathbb{P}_N}\lambda_{i,\vartheta_l\vartheta'}&{\to}& M'V^{-1}_{m}M^{l}\\
			\mathbb{E}_{\mathbb{P}_N}\lambda_{i,\varphi_l\vartheta'}&\to&-\mathbb{E}(\mathfrak{s}^{l}m')V^{-1}_{m}M
		\end{eqnarray*}
	\end{itemize}\label{muder}
\end{Lemma}\normalsize \vspace{-0.2 in}
\begin{proof} of \cref{muder}. \\
	Defining the following quantities :$e_{j,i}=e^{\mu_{i}'m_{j,i}(\vartheta)}$, $\tilde{e}_{j,i}=\frac{e_{j,i}}{\frac{1}{N_{s}}\sum_{j=1..s}e_{j,i}}$
	, $\varkappa_{j.i}=-\frac{(e^{\mu_{i}'m_{j,i}(\vartheta)}-1)}{\mu_{i}m_{j,i}(\vartheta)'}$, $s_{j,i}:=\frac{\partial}{\partial\varphi}\log f(x_{j}|\varphi,z_{i})$ and $\mathfrak{s}_{j,i}:=\frac{s_{j,i}}{f_{j,i}}$, the first and second derivatives of $(\mu,\lambda)$ with respect to $\psi$ are: $  $	
	\small
	\begin{eqnarray*}
		\underset{n_{m}\times n_{\vartheta}}{\mu_{i,\vartheta}} & = & -\left(\frac{1}{N_{s}}\sum_{j}e_{j,i}m_{j,i}m_{j,i}'\right)^{-1}\left(\frac{1}{N_{s}}\sum_{j}M_{j,i}-\frac{1}{N_{s}}\sum_je_{j,i}m_{j,i}\mu_{i}'M_{j,i}\right)\\
		\underset{1\times n_{\vartheta}}{\lambda_{i,\vartheta}} & = & -\mu_{i}'\frac{1}{N_{s}}\sum_{j}\tilde{e}_{j,i}M_{j,i} \\
		\underset{n_{m}\times n_{\varphi}}{\mu_{i,\varphi}}&=&-\left(\sum_je_{j,i}m_{j,i}m_{j,i}'\right)^{-1}\sum_je_{j,i}m_{j,i}\mathfrak{s}'_{j,i}\\
		\underset{1\times n_{\varphi}}{\lambda_{i,\varphi}}&=& -\frac{1}{N_{s}}\sum_{j}\tilde{e}_{j,i}\mathfrak{s}'_{j,i}\\
		\lambda_{i,\vartheta_l\vartheta'}& = & -\mu_{i,\vartheta}'\frac{1}{N_s}\sum_{j}\tilde{e}_{j}M^{l}_{j,i}-\frac{1}{N_{s}}\sum_{j}\tilde{e}_{j,i,\vartheta}M^{l'}_{j,i}\mu_{i}-\frac{1}{N_{s}}\sum_{j}\tilde{e}_{j,i}\frac{\partial M^{l'}_{j,i}}{\partial \vartheta'}\mu_{i}\\
		\lambda_{i,\varphi_l\varphi'}&=&-\frac{1}{N_{s}}\sum_{j}\tilde{e}_{j}\mathfrak{s}^{l}_{j}m_{j}'\mu_{i,\varphi}\\
		\lambda_{i,\varphi_l\vartheta'}&=&-\frac{1}{N_{s}}\sum_{j}\tilde{e}_{j}(\mu_{\vartheta}'m_{j,i}+M_{j,i}'\mu_{i}-\sum_{j}\tilde{e}_{j}M'_{j,i}\mu_{i})\mathfrak{s}^{l'}_j
	\end{eqnarray*}	\normalsize
	where $\tilde{e}_{j,i,\vartheta}=\tilde{e}_{j}(\mu_{\vartheta}'m_{j,i}+M_{j,i}'\mu_{i}-\frac{1}{N_{s}}\sum_{j}\tilde{e}_{j}M'_{j,i}\mu_{i})$.	We have already established that as long as the base density is asymptotically correctly specified, then $\mu_i \underset{p}{\to} 0$ for almost all $z_i$. Therefore, $e_{j,i}\underset{p}{\to}1$,  and $\varkappa_{j,i}\underset{p}{\to}-1$. Again, using \textbf{BD-1a}, WLLN for uniformly integrable sequences applies and the simulated averages converge to their population values as $N_s\to \infty$. Taking the unconditional expectation   using the empirical distribution $\mathbb{P}_{N}$, we conclude by applying the Portmanteau Lemma as $N\to\infty$.
\end{proof}

\newpage
\section{Appendix B (Online)}  \label{AppB}

This Appendix \hyperref[AppB]{B}  contains: Analysis for counterfactual effects, auxiliary results (Lemma \ref{wass} to \ref{lemvar}), analytical details for the asset pricing example and the application, base density estimates for the latter, and additional simulation evidence. 

\subsection{Counterfactual Distributions}
An additional advantage of the method used in this paper is the ability to perform counterfactual experiments. What is more important is that
this method readily gives a counterfactual distribution, while the
distribution of the endogenous variables is hardly known in non-linear
equilibrium models. Knowing the distribution of outcomes is extremely important
for policy analysis, especially when non linear effects take place,
and therefore the average effect is not a sufficient statistic to
make a decision. 

At $(\varphi^{\star}_{0},\vartheta^{\star}_{0})$, $H(X_{t}|Z_{t},\psi^{\star}_{0})$ is the best approximation to the conditional distribution that satisfies the moment conditions. The meaning of a counterfactual exercise is to examine what happens in the case in which there is a change in some structural feature, i.e. a change in $\psi$ and how that affects the distribution of outcomes. By necessity, a change in $\psi$ implies moving away from the best approximating model to the underlying data and hence the term "counterfactual"\footnote{While a change in $\vartheta$ typically has a straightforward interpretation, a change in $\varphi$ only has a structural interpretation in the case of an exogenous process.}.
\begin{Proposition}{Counterfactual Effects\\}
	For any measurable function $\zeta(X_{t})$, the average effect of a change in $\psi$ is equal to
	\begin{eqnarray}
		\frac{\partial\mathbb{E}_{H(X_{t}|Z_{t},\psi)}\zeta(X_{t})}{\partial\psi_{l}} =\left[ \begin{array}{c}\mu_{\varphi}'\mathbb{C}ov_{H}(m_{t},\zeta_{t})+\mathbb{C}ov_{H}(\mathfrak{s}_{t},\zeta_{t})\\
			\mu_{\vartheta}'\mathbb{C}ov_{H}(m_{t},\zeta_{t})-\mathbb{C}ov_{H}(M_{t}',\zeta_{t})\mu_{t}
		\end{array}\right]
	\end{eqnarray} where $m_{t}$ and $\mathfrak{s}_{t}$ abbreviate the moment and density score functions respectively.
	\normalsize
	\label{counter_def}
\end{Proposition}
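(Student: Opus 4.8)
The plan is to differentiate the object $\mathbb{E}_{H}\zeta \equiv \int \zeta(X)\,h^{\star}(X|Z,\psi)\,dX$ directly, exploiting the explicit exponential form of $h^{\star}$ in \eqref{eq:h} together with the two constraints that characterize it and that hold identically in $\psi$ ($Z$-a.e.): the normalization $\int h^{\star}\,dX=1$ and the moment condition $\mathbb{E}_{H}m=0$. First I would justify interchanging $\partial/\partial\psi_{l}$ with the integral; the uniform integrability supplied by \textbf{BD-1a}, \textbf{BD-1b} and \textbf{BD-2} (finiteness, uniformly in $\psi$, of exponential and polynomial moments of $m$, $m_{\vartheta}$ and the base score) gives a local dominating function for $\zeta\,\partial_{\psi_{l}}h^{\star}$, so dominated convergence applies. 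Since $h^{\star}>0$ one may write $\partial_{\psi_{l}}h^{\star}=h^{\star}\,\partial_{\psi_{l}}\log h^{\star}$, producing the clean identity
\[
\partial_{\psi_{l}}\mathbb{E}_{H}\zeta=\mathbb{E}_{H}\!\left[\zeta\,\partial_{\psi_{l}}\log h^{\star}\right].
\]

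Next I would compute the score $\partial_{\psi_{l}}\log h^{\star}$ from $\log h^{\star}=\log f(X|Z,\varphi)+\lambda(Z,\psi)+\mu(Z,\psi)'m(X,Z,\vartheta)$ block by block. Because $f$ depends only on $\varphi$ and $m$ only on $\vartheta$, the $\varphi$-block is $\partial_{\varphi_{l}}\log h^{\star}=\mathfrak{s}_{l}+\lambda_{\varphi_{l}}+\mu_{\varphi_{l}}'m$ and the $\vartheta$-block is $\partial_{\vartheta_{l}}\log h^{\star}=\mathbf{M}_{l}'\mu+\lambda_{\vartheta_{l}}+\mu_{\vartheta_{l}}'m$, i.e.\ exactly the summands of the first-order conditions \eqref{FOC}. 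The derivatives $\mu_{\varphi},\mu_{\vartheta}$ of the implicitly defined multipliers (characterized in \cref{muder}) merely ride along; their explicit form is not needed here.

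The structural step is to eliminate the scaling derivatives $\lambda_{\psi_{l}}$. Differentiating $\int h^{\star}\,dX=1$ yields $\mathbb{E}_{H}[\partial_{\psi_{l}}\log h^{\star}]=0$, and since $\mathbb{E}_{H}m=0$ the terms $\mu_{\psi_{l}}'\mathbb{E}_{H}m$ vanish, leaving $\lambda_{\varphi_{l}}=-\mathbb{E}_{H}\mathfrak{s}_{l}$ and $\lambda_{\vartheta_{l}}=-\mathbb{E}_{H}[\mathbf{M}_{l}'\mu]$. Substituting these back, the $\varphi$-block $\mathbb{E}_{H}[\zeta(\mathfrak{s}_{l}-\mathbb{E}_{H}\mathfrak{s}_{l})]$ collapses to $\mathbb{C}ov_{H}(\mathfrak{s}_{l},\zeta)$, while $\mathbb{E}_{H}[\zeta\,\mu_{\varphi_{l}}'m]=\mu_{\varphi_{l}}'\mathbb{C}ov_{H}(m,\zeta)$, again because $\mathbb{E}_{H}m=0$ turns the raw expectation into a covariance; stacking over $l$ gives the first row. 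The $\vartheta$-block is entirely analogous, with $\mathbf{M}_{l}'\mu$ in the role of the score: the $\mu_{\vartheta_{l}}'m$ term yields $\mu_{\vartheta}'\mathbb{C}ov_{H}(m,\zeta)$, and combining $\mathbb{E}_{H}[\zeta\,\mathbf{M}_{l}'\mu]$ with $\lambda_{\vartheta_{l}}\mathbb{E}_{H}\zeta$ centers the Jacobian, producing the covariance of $\zeta$ with $\mathbf{M}$ contracted against $\mu$, i.e.\ the $\mathbb{C}ov_{H}(\mathbf{M}_{t}',\zeta_{t})\mu_{t}$ term; this assembles into the second row of the claimed identity.

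The main obstacle is organizational rather than analytic: one must use the two defining constraints consistently, first to discard all $\lambda$-derivatives via the normalization and then to replace every raw expectation $\mathbb{E}_{H}[\zeta(\cdot)]$ by a covariance, exploiting $\mathbb{E}_{H}m=0$ so that only the centered parts of the score and Jacobian survive. The single genuinely analytic point is verifying the differentiation-under-the-integral step, which rests squarely on the uniform exponential and polynomial moment bounds \textbf{BD-1a}, \textbf{BD-1b}, \textbf{BD-2}.
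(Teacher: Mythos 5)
Your method is the same as the paper's: differentiate $\mathbb{E}_{H}\zeta=\int\zeta\,h^{\star}\,dX$ under the integral, write $\partial_{\psi_{l}}h^{\star}=h^{\star}\,\partial_{\psi_{l}}\log h^{\star}$, eliminate the $\lambda$-derivatives by differentiating the normalization $\int h^{\star}dX=1$, and use $\mathbb{E}_{H}m=0$ to convert the surviving raw expectations into covariances. You are in fact more careful than the paper's two-line proof, which simply asserts both displayed equalities: you justify the interchange of derivative and integral from the moment bounds, and you derive $\lambda_{\varphi_{l}}=-\mathbb{E}_{H}\mathfrak{s}_{l}$ and $\lambda_{\vartheta_{l}}=-\mathbb{E}_{H}[\mathbf{M}_{l}'\mu]$ from first principles rather than quoting the auxiliary lemma in Appendix B. Your $\varphi$-block coincides exactly with the paper's first row.

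The problem is the $\vartheta$-block, where you claim your computation ``assembles into the second row of the claimed identity.'' It does not. Your own centering step gives
\[
\mathbb{E}_{H}[\zeta\,\mathbf{M}_{l}'\mu]+\lambda_{\vartheta_{l}}\mathbb{E}_{H}[\zeta]
=\mathbb{E}_{H}[\zeta\,\mathbf{M}_{l}'\mu]-\mathbb{E}_{H}[\mathbf{M}_{l}'\mu]\,\mathbb{E}_{H}[\zeta]
=+\,\mathbb{C}ov_{H}(\mathbf{M}_{l}',\zeta)\,\mu ,
\]
so, stacked over $l$, your argument proves $\partial_{\vartheta}\mathbb{E}_{H}\zeta=\mu_{\vartheta}'\mathbb{C}ov_{H}(m,\zeta)+\mathbb{C}ov_{H}(M',\zeta)\,\mu$, with a plus sign, whereas the Proposition (and the paper's own proof) writes $-\mathbb{C}ov_{H}(M_{t}',\zeta_{t})\mu_{t}$. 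You silently flipped the sign to force agreement with the statement. In fact the plus sign is the correct one, i.e.\ the stated second row contains a sign error: take $X\in\{0,1\}$ with $f$ uniform, $m(x,\vartheta)=\vartheta x-\tfrac{1}{2}$, $\zeta(x)=x$, and $\vartheta>\tfrac{1}{2}$; then $\mathbb{E}_{H}\zeta=h(1)=1/(2\vartheta)$ and $\mu=-\log(2\vartheta-1)/\vartheta$, so the true derivative $-1/(2\vartheta^{2})$ is reproduced exactly by the plus-sign formula, while the minus-sign formula is off by $(2\vartheta-1)\log(2\vartheta-1)/(2\vartheta^{3})\neq 0$ for $\vartheta\neq 1$. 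So the substance of your derivation is sound and, carried out consistently, it exposes a flaw in the result as printed; what is missing is the recognition that your algebra contradicts the display you set out to prove --- that discrepancy should have been flagged, not absorbed.
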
    \vspace{-0.15 in}
\begin{proof} \textbf{of \cref{counter_def}}\\
	The average effect computed under the $H$ measure is $\int \zeta(X_{t}) h(X_{t}|Z_{t},\psi) dX_{t}$. Taking derivatives with respect to $\vartheta$ and $\varphi$ and suppressing dependence on variables: \small
	\[\int \zeta_{t}\frac{\partial h_{t}}{\partial \varphi'} dX_{t}=\int \zeta_{t} \left(\frac{\partial f_{t}}{\partial \varphi'}(e^{\mu'm_{t}+\lambda})+h_{t}(m'\mu_{\varphi}+\lambda_{\varphi})\right)dX_{t}=\mathbb{C}ov(\zeta_{t},\mathfrak{s}'_{t})+\mathbb{C}ov(\zeta_{t},m'_{t})\mu_{\varphi}\]
	\[\int \zeta_{t}\frac{\partial h_{t}}{\partial \vartheta'} dX_{t}=\int \zeta_{t}h_{t}\left(m'\mu_{\vartheta}+\mu'M_{t}+\lambda_{\vartheta}\right)dX_{t}=\mathbb{C}ov(\zeta_{t},m_{t}')\mu_{\vartheta}-\mu'\mathbb{C}ov(\zeta_{t},M_{t})\] \normalsize
\end{proof}
Focusing on changes in an element of $\vartheta$, the average effect is equal to \small
\begin{eqnarray*}
	\frac{\partial\mathbb{E}_{H}(\zeta|Z)}{\partial\vartheta}&=&-\mathbb{E}_{H}({M}|Z)'V^{-1}_{H,m}\mathbb{C}ov_{H}(m,\zeta|Z)\\
	&&+\mathbb{C}ov_{H}(m,\mu'{M}|Z)'V^{-1}_{m}\mathbb{C}ov_{H}(m,\zeta|Z)-\mathbb{C}ov_{H}({M},\zeta_{t}|Z)'\mu\\
	&& = -\mathbb{E}_{H}({M}|Z)'V^{-1}_{m}\mathbb{C}ov_{H}(m,\zeta|Z)-\mathbb{E}_{H}(\mathfrak{P}_{m}^{\bot}{M}\zeta|Z)'\mu\end{eqnarray*}\normalsize
where $\mathfrak{P}_{m}^{\bot}$ is the orthogonal complement of projecting on the moment functions.
The total effect is thus the sum of the average effect on $\zeta$ through changes in $m$ and the direct effect through $\mathfrak{P}_{m}^{\bot}{M}$, which is the variation in the Jacobian that is unrelated to $m$, weighted by $\mu$ . When $\mu\approxeq 0$, the average effect collapses to $-\mathbb{E}({M}|Z)'V^{-1}_{m}\mathbb{C}ov(m,\zeta|Z)$, that is all structural information is contained in $m$.

Beyond average effects, one might be interested in the distribution itself. 
We present below the case for the asset pricing example in Section \ref{ex}, where the utility function
is now of the Constant Relative Risk Aversion form. 
The counterfactual
experiment consists of increasing the CRRA coefficient. 

\begin{figure}[H]
	\includegraphics[width=4.3 in, height = 3.1 in]{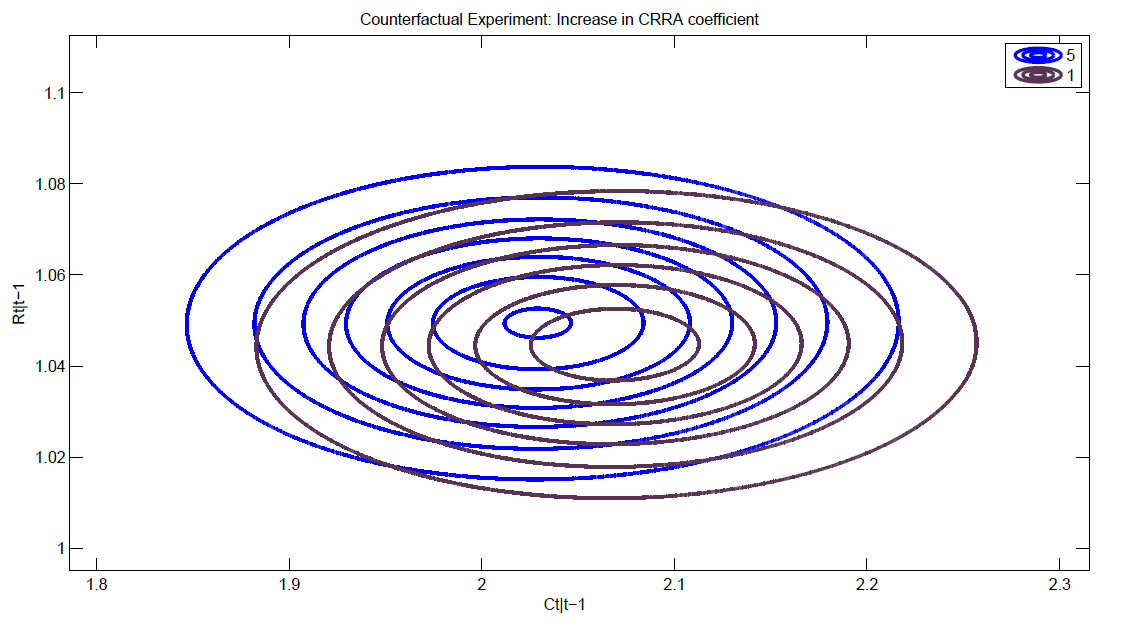}\vspace{-0.1 in}\protect\caption{Increase in Risk Aversion Coefficient from 1 to 5} \label{riskaversion}
\end{figure}
Figure \ref{riskaversion} plots
the contour maps of the conditional joint density of $(R_{t+1},C_{t+1})$
with a change in the risk aversion coefficient. An increase in risk
aversion is consistent with higher mean interest rate, and lower mean
consumption. Moreover, consumption and interest rates are less negatively
correlated. This is also consistent what the log - linearized Euler
equation implies, $c_{t}=-\frac{1}{\sigma}r_{t}$. 
\begin{Lemma} Influence function for plug-in estimator \citep{Wasserman_2006}\\
	For a general function $W(x,z)$, conditional density $Q(x|z)$ and $\mathcal{L}(x,z)\equiv W(x,z)-\int W(x,z)d\mathbb{P}(x|z)$ 
	\begin{eqnarray*}
		W_{Q_N}-W_{P}& \equiv & \int W(x,z)d(Q(x|z)d\mathbb{P}(z))-\int W(x,z)d(\mathbb{P}(x|z)d\mathbb{P}(z))\\\
		&=&\int \int \mathcal{L}(x,z)dQ(x|z)d\mathbb{P}(z)
	\end{eqnarray*}  \label{wass}
\end{Lemma}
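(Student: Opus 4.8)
The plan is to establish the identity by direct iterated integration, exploiting two elementary facts: the centering term $\int W(x,z)\,d\mathbb{P}_z(x|z)$ depends on $z$ alone, and $Q(\cdot\mid z)$ integrates to one. First I would rewrite both functionals as iterated integrals, performing the inner integration over $x$ conditionally on $z$ and the outer integration over $z$ against the common marginal $\mathbb{P}(z)$, so that
\[
W_{Q_N}-W_{P}=\int\left(\int W(x,z)\,dQ(x|z)-\int W(x,z)\,d\mathbb{P}_z(x|z)\right)d\mathbb{P}(z).
\]
This rearrangement only invokes Fubini's theorem, whose hypotheses are met whenever $W$ is jointly integrable against the two product measures; under the standing regularity of Assumptions I (in particular the boundedness conditions \textbf{BD-1} and \textbf{BD-2}) this holds for the functions to which the lemma is applied.

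Next I would expand the right-hand side of the claim. Writing $\bar{W}(z)\equiv\int W(x,z)\,d\mathbb{P}_z(x|z)$ for the $\mathbb{P}$-conditional mean of $W$, which is constant in the inner integration variable, the normalization $\int dQ(x|z)=1$ gives
\[
\int\mathcal{L}(x,z)\,dQ(x|z)=\int W(x,z)\,dQ(x|z)-\bar{W}(z)\int dQ(x|z)=\int W(x,z)\,dQ(x|z)-\bar{W}(z).
\]
Since $\bar{W}(z)=\int W(x,z)\,d\mathbb{P}_z(x|z)$ by definition, this expression is exactly the integrand in brackets in the first display, and integrating against $d\mathbb{P}(z)$ reproduces $W_{Q_N}-W_{P}$, which closes the argument.

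There is no substantive obstacle here: the lemma is an algebraic consequence of the fact that subtracting a $z$-measurable constant from $W$ cannot alter a difference of conditional expectations, because that constant is annihilated by the unit mass of $Q(\cdot\mid z)$. The only points deserving a line of justification are the measurability of $\bar{W}(z)$ and the interchange of the order of integration, both of which follow from the regularity already assumed; I would record these briefly and conclude.
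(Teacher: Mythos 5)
Your proof is correct. Note that the paper does not actually supply a proof of this lemma at all --- it is stated with a citation to Wasserman (2006) and used as an imported result --- so there is nothing to diverge from; your direct verification (the centering term $\int W(x,z)\,d\mathbb{P}_z(x|z)$ is $z$-measurable and is annihilated in the difference because $\int dQ(x|z)=1$, with Fubini justifying the iterated-integral rearrangement) is the standard and essentially the only argument, and you correctly recognize that the identity is exact and algebraic rather than a first-order approximation, with integrability of $W$ under the relevant measures as the sole hypothesis needing mention.
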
 \vspace{-0.4 in}
\begin{Corollary}Parametric Density.\\ For any $(x,z)$ - measurable function $W(.)$ and $P\equiv \mathbb \mathbb{P}(\varphi)$, $\mathbb{P}(\varphi)$ 1-differentiable in $\varphi$, with score function $\mathfrak{s}(x,z)$, the following statement holds:
	 \begin{eqnarray*} 
		W_{P(\varphi_{0}+cN^{-\frac{1}{2}})}-W_{P}&=& N^{-\frac{1}{2}}c \mathbb{E}\left(Cov_{P}(W,\mathfrak{s}|z)\right)  + o(1)
	\end{eqnarray*}
\end{Corollary}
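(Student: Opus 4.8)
The plan is to read $W_{P(\varphi)}\equiv\int\!\int W(x,z)\,d\mathbb{P}(x|z,\varphi)\,d\mathbb{P}(z)$ as a function of the finite-dimensional $\varphi$ alone, in which only the conditional law varies with $\varphi$ while the $z$-marginal is held fixed at the true $\mathbb{P}(z)$, exactly as in \cref{wass}, and to take $W_{P}=W_{P(\varphi_{0})}$ as the baseline. First I would invoke \cref{wass} with $Q(x|z)=\mathbb{P}(x|z,\varphi_{0}+hN^{-\frac12})$ and with the reference conditional taken to be the parametric one at $\varphi_{0}$, so that $\mathcal{L}(x,z)=W(x,z)-\int W(x',z)\,d\mathbb{P}(x'|z,\varphi_{0})$ is the conditionally centered version of $W$. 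This produces
\begin{eqnarray*}
W_{P(\varphi_{0}+hN^{-\frac12})}-W_{P}=\int\left(\int\mathcal{L}(x,z)\,d\mathbb{P}(x|z,\varphi_{0}+hN^{-\frac12})\right)d\mathbb{P}(z),
\end{eqnarray*}
and the key structural gain is that the inner integral vanishes at $\varphi_{0}$, since $\int\mathcal{L}(x,z)\,d\mathbb{P}(x|z,\varphi_{0})=0$ by construction of $\mathcal{L}$.

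Next I would linearise the inner conditional integral in $\varphi$. Writing $p(x|z,\varphi)$ for the density of $\mathbb{P}(x|z,\varphi)$ and $\mathfrak{s}(x,z,\varphi)=\frac{\partial}{\partial\varphi}\log p(x|z,\varphi)$ for its score, $1$-differentiability of $\mathbb{P}(\varphi)$ gives $\frac{\partial}{\partial\varphi}d\mathbb{P}(x|z,\varphi)=\mathfrak{s}(x,z,\varphi)\,d\mathbb{P}(x|z,\varphi)$ together with a neighbourhood envelope that licenses differentiation under the integral. Applying the mean value theorem to $\varphi\mapsto\int\mathcal{L}(x,z)\,d\mathbb{P}(x|z,\varphi)$ pointwise in $z$, and using that this map is zero at $\varphi_{0}$, yields the \emph{exact} representation
\begin{eqnarray*}
\int\mathcal{L}(x,z)\,d\mathbb{P}(x|z,\varphi_{0}+hN^{-\frac12})=N^{-\frac12}h\int\mathcal{L}(x,z)\,\mathfrak{s}(x,z,\tilde\varphi(z))\,d\mathbb{P}(x|z,\tilde\varphi(z))
\end{eqnarray*}
for some intermediate $\tilde\varphi(z)$ on the segment joining $\varphi_{0}$ and $\varphi_{0}+hN^{-\frac12}$. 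Since the score integrates to zero conditionally, the additive $x$-free centering constant in $\mathcal{L}$ is immaterial inside the last integral, so it equals the conditional covariance $\mathbb{C}ov_{\tilde\varphi(z)}(W,\mathfrak{s}\mid z)$. Setting $\delta_{W}(z):=\mathbb{C}ov_{\tilde\varphi(z)}(W,\mathfrak{s}\mid z)$ and integrating over $z$ delivers the claimed identity $W_{P(\varphi_{0}+hN^{-\frac12})}-W_{P}=N^{-\frac12}h\int\delta_{W}(z)\,d\mathbb{P}(z)$.

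It remains to record integrability of $\delta_{W}$ and to flag the delicate analytic point. Integrability over $z$ is by Cauchy--Schwarz, $|\delta_{W}(z)|\leq V_{\tilde\varphi(z)}(W\mid z)^{\frac12}V_{\tilde\varphi(z)}(\mathfrak{s}\mid z)^{\frac12}$, with both conditional variances finite and their product $\mathbb{P}(z)$-integrable under the moment bounds in Assumptions I (\textbf{BD-1}, \textbf{BD-2}). The step I expect to be the main obstacle is justifying the interchange of differentiation and integration: one needs a $\mathbb{P}(z)$-almost-sure dominating function for $\partial_{\varphi}\!\left(W(x,z)\,p(x|z,\varphi)\right)$ that is uniform on a neighbourhood of $\varphi_{0}$ and integrable jointly in $(x,z)$, and this is exactly where $1$-differentiability must be combined with the uniform conditional-moment assumptions rather than used on its own.
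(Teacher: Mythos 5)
Your proposal is correct and follows essentially the same route as the paper: both start from \cref{wass} and expand the parametric conditional density to first order in $\varphi$, with the score supplying $\delta_W$. The only difference is one of rigor: your mean-value-theorem formulation yields the claimed identity exactly, with $\delta_{W}(z)=\mathbb{C}ov_{\tilde\varphi(z)}(W,\mathfrak{s}\mid z)$ evaluated at an intermediate point, whereas the paper's Taylor expansion carries an $o(N^{-\frac{1}{2}}h)$ remainder that is dropped without comment, and your Cauchy--Schwarz integrability check and the flagged domination condition for differentiating under the integral are likewise left implicit in the paper.
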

\begin{proof}
	In the parametric case within the class of smooth densities, we can rewrite $dQ(x|z)\equiv dP(x|\varphi+N^{-\frac{1}{2}}c,z)$. Therefore, using a Taylor expansion of around $\varphi_0$ and denoting by $s_\varphi(x,z)$ the derivative of the density of $P$,
	\begin{eqnarray*}
		{dP}(x|\varphi+N^{-\frac{1}{2}}c,z) &=& dP(x|\varphi,z)+  (N^{-\frac{1}{2}}c )s_\varphi(x,z)dx+o(N^{-\frac{1}{2}}c)
	\end{eqnarray*}	
	Evaluating $\int \int \mathcal{L}(x,z)dQ(x|z)\mathbb{P}(z)$ in \cref{wass} gives the result:
	\begin{eqnarray*}
		W_{Q_N}-W_{P}& \equiv & \int \int \mathcal{L}(x,z)(s_\varphi(x,z)N^{-\frac{1}{2}}c +o(N^{-\frac{1}{2}}c))d\mathbb{P}(z)  \\
		&=& N^{-\frac{1}{2}}c\int \int  \mathcal{L}(x,z)(s_\varphi(x,z)dx d\mathbb{P}(z)  + o(1)\\
		&=& N^{-\frac{1}{2}}c\int \int  \mathcal{L}(x,z)(\mathfrak{s}(x,z)dP(x|\varphi,z) d\mathbb{P}(z)  + o(1)\\
		& = & N^{-\frac{1}{2}}c \mathbb{E}\left(Cov_{P}(W,\mathfrak{s}|z)\right)  + o(1)
	\end{eqnarray*}
\end{proof} \vspace{-0.2 in}
\begin{Corollary}  For any $z$ - measurable and integrable function $W(.)$ that satisfies $\textbf{BD-1a}$ and density $q_{N}(x|z)$ which is absolutely continuous with respect to $p(x|z)$ and $\Delta(p{(x|z)},q_{N}(x|z))=O_{p_{z}}(\kappa_{N}^{-1})$ the following statement holds:
	\begin{eqnarray*} 
			W_{Q_N}-W_{P}&=& \int O_{p_{z}}(\kappa_{N}^{-\frac{1}{2}})  d\mathbb{P}(z) 
	\end{eqnarray*}
	\vspace{-0.2 in}\begin{proof} \vspace{-0.2 in}
	\begin{eqnarray*}
		&&W_{Q_N}-W_{P}\\
		& \equiv & \int \int \mathcal{L}(x,z)q_{N}(x|z)dx d\mathbb{P}(z) = \int \int \mathcal{L}(x,z)\left(\frac{q_{N}(x|z)}{p(x|z)}\right)dP(x|z) d\mathbb{P}(z) \\
				&= &  \mathbb{E}Cov_{P}\left(W(x,z),\frac{q_{N}(x|z)}{p(x|z)}|z\right) = \mathbb{E}Cov_{P}\left(W(x,z),\frac{q_{N}(x|z)-p(x|z)}{p(x|z)}|z\right)  \\
		&\leq& \int  \mathbb{V}_{P}(W|z)^{\frac{1}{2}}\left(\int \left(\frac{q_{N}(x|z)-p(x|z)}{p(x|z)}\right)^{2}p(x|z)dx\right)^{\frac{1}{2}}d\mathbb{P}(z) \\
		&=& \int  \mathbb{V}_{P}(W|z)^{\frac{1}{2}}\chi^{2}(q_{N}(x|z),p{(x|z)})^{\frac{1}{2}}d\mathbb{P}(z) \\
						&=& \int O_{p_{z}}(\kappa_{N}^{-\frac{1}{2}}) d\mathbb{P}(z)
	\end{eqnarray*} \normalsize The last line uses that the first term is bounded by assumption $\textbf{PD-1}$ and that the $\chi^{2}$ distance is bounded above by $\Delta$ defined in \cref{def2}.

\end{proof}
\label{disc}
\end{Corollary}\vspace{-0.2 in}
\begin{Lemma}
	For some invertible matrix  $C_{f}=\frac{1}{N_{s}}\sum_{j}^{N_{s}}C_{j}$, denote $\bar{C}_{f}^{-1}:=\mathbb{E}_{F_{(z)}}C_{f}$
	\begin{enumerate}
		\item  $C_{f}^{-1} = \bar{C}_{f}^{-1}+O_{p_{z}}(N_{s}^{-\frac{1}{2}})$. 
		\item  More generally, for some integrable density $g:\Delta(g,f)=O_{p_{z}}(\kappa_{N}^{-1})$, \\$\bar{C}_{g}^{-1}  = \bar{C}_{f}^{-1}+O_{p_{z}}(\kappa_{N}^{-\frac{1}{2}})$.
	\end{enumerate}
\label{lemvar}
\end{Lemma}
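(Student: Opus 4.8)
The plan is to reduce both parts to the elementary resolvent identity for matrix inverses,
\[
A^{-1}-B^{-1}=A^{-1}(B-A)B^{-1},
\]
which holds exactly for any pair of invertible matrices, and then to control the perturbation $B-A$ in the appropriate stochastic order for each part. Throughout I read $\bar{C}_{f}:=\mathbb{E}_{F_{z}}C$ as the population mean matrix under the base conditional measure (the probability limit of the simulated average $C_{f}$), so that $\bar{C}_{f}^{-1}$ is the inverse of a fixed, nonsingular matrix, the nonsingularity being supplied by Assumption \textbf{PD-1}.

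For part (1), first I would observe that $C_{f}=\frac{1}{N_{s}}\sum_{j}C_{j}$ is an average over the $N_{s}$ simulated draws from $F_{z}$; by \textbf{BD-1a} (together with \textbf{BD-1b} to control the exponential factors that appear in the matrices of interest, e.g.\ $e_{j,i}m_{j,i}m_{j,i}'$) the summands have finite second moments, so the classical CLT gives $C_{f}-\bar{C}_{f}=O_{p_{z}}(N_{s}^{-\frac{1}{2}})$ componentwise. Since $C_{f}\to_{p_{z}}\bar{C}_{f}$ and inversion is continuous at the nonsingular limit, the continuous mapping theorem yields $C_{f}^{-1}=O_{p_{z}}(1)$, while $\bar{C}_{f}^{-1}=O(1)$ is deterministic. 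Substituting $A=C_{f}$, $B=\bar{C}_{f}$ in the identity and taking norms,
\[
\|C_{f}^{-1}-\bar{C}_{f}^{-1}\|\leq\|C_{f}^{-1}\|\,\|\bar{C}_{f}-C_{f}\|\,\|\bar{C}_{f}^{-1}\|=O_{p_{z}}(1)\cdot O_{p_{z}}(N_{s}^{-\frac{1}{2}})\cdot O(1),
\]
which is the claimed rate.

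For part (2), the only change is in how the perturbation is bounded. The matrices $\bar{C}_{f}=\int C(x,z)f(x|z)\,dx$ and $\bar{C}_{g}=\int C(x,z)g(x|z)\,dx$ differ only through the densities, and $\sup_{x}\|f-g\|=O_{p_{z}}(\kappa_{N}^{-1})$. Applying \cref{disc} componentwise with $W=C$ and $\Delta_{c}(z)=\int C(x,z)\,dx$ delivers $\bar{C}_{g}-\bar{C}_{f}=O_{p_{z}}(\kappa_{N}^{-1}\Delta_{c}(z))$. Taking $A=\bar{C}_{g}$, $B=\bar{C}_{f}$ in the resolvent identity, and noting that both inverses are $O_{p_{z}}(1)$ (the latter deterministic, the former because $\bar{C}_{g}$ is a vanishing perturbation of the nonsingular $\bar{C}_{f}$, hence eventually nonsingular with bounded inverse), the same norm bound gives $\bar{C}_{g}^{-1}-\bar{C}_{f}^{-1}=O_{p_{z}}(\kappa_{N}^{-1}\Delta_{c}(z))$.

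The hard part will not be the algebra but the uniform nonsingularity step: I must ensure that $C_{f}^{-1}$ (resp.\ $\bar{C}_{g}^{-1}$) is genuinely $O_{p_{z}}(1)$, and does not blow up along subsequences where the sample (resp.\ perturbed) matrix drifts toward singularity. This is exactly what Assumption \textbf{PD-1} buys me: it bounds the smallest eigenvalue of the relevant population moment matrices away from zero, uniformly on $\mathcal{B}_{\delta}(\psi_{0})$ and under both $F_{\varphi}$ and $\mathbb{P}$, so the limiting matrices are well-conditioned. I would make this rigorous by a Weyl-inequality argument on eigenvalues, showing that the perturbations above (of order $N_{s}^{-\frac{1}{2}}$ or $\kappa_{N}^{-1}\Delta_{c}$) are eventually dominated by the spectral gap, before invoking boundedness of the inverses in the two displays.
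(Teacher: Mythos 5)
Your proposal is correct and takes essentially the same route as the paper: a perturbation expansion of the matrix inverse combined with a CLT rate of $O_{p_{z}}(N_{s}^{-\frac{1}{2}})$ for part (1) and \cref{disc} for part (2) — the paper writes the second-order expansion $\bar{C}_{f}^{-1}-\bar{C}_{f}^{-1}(C_{f}-\bar{C}_{f})\bar{C}_{f}^{-1}+O_{p_{z}}(\cdot)$, while you use the equivalent exact resolvent identity and bound the random inverse directly. The only difference is that you make explicit the regularity steps (stochastic boundedness and nonsingularity of the inverses via \textbf{PD-1} and the continuous mapping theorem) that the paper's terse proof leaves implicit.
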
\vspace{-0.1 in}
	\begin{proof} of \cref{lemvar}
		\begin{enumerate}
		\item 	$C_{f}^{-1} =\bar{C}_{f}^{-1}-\bar{C}_{f}^{-1}(C_{f}-\bar{C}_{f})\bar{C}_{f}^{-1}+O_{p_{z}}(N_{s}^{-1})= \bar{C}_{f}^{-1}+O_{p_{z}}(N_{s}^{-\frac{1}{2}})$
		\item 	$\bar{C}_{g}^{-1} =\bar{C}_{f}^{-1}-\bar{C}_{f}^{-1}(\bar{C}_{g}-\bar{C}_{f})\bar{C}_{f}^{-1}+O_{p_{z}}(||g-f||)$. 
		Therefore, 		
		\begin{eqnarray*}
			\bar{C}_{g}^{-1} & = & \bar{C}_{f}^{-1}-\bar{C}_{f}^{-1}(\bar{C}_{g}-\bar{C}_{f})\bar{C}_{f}^{-1}+O_{p_{z}}(||g-f||)\\
			& = & \bar{C}_{f}^{-1}-\bar{C}_{f}^{-1}(\bar{C}_{g}-\bar{C}_{f})\bar{C}_{f}^{-1}+O_{p_{z}}(||g-f||)\\
			& = & \bar{C}_{f}^{-1}+O_{p_{z}}(\kappa_{N}^{-\frac{1}{2}})
		\end{eqnarray*}	 where the last equality applies \cref{disc} elementwise.	
		\end{enumerate}
		\end{proof}

	\subsection{Additional Monte Carlo Results}
	\paragraph{\textit{Unconditional Case}}
	To construct the unconditional moments, I use lagged consumption growth in the case of one instrument, while I add second and third order powers of lagged consumption and interest rate in the case of six instruments. 
	
	In Figure \ref{MCunc} I compare the performance of the CU-GMM and ETEL estimators to our estimator for $\hat\beta$, both in the unrestricted and in the restricted case. 
	Both estimators perform worse than our estimator in all cases.\footnote{Although it does not belong to the GEL class, the ETEL estimator is expected to have similar performance to empirical likelihood as shown by \cite{SusanneSchennah} and thus the CU-GMM estimator which is also member of the GEL family.} Utilizing five more instruments in the case of CU-GMM does not deliver a lower MSE to our estimator, which uses only one instrument in both cases.  
		\begin{figure}[H]
		\includegraphics[scale=0.62]{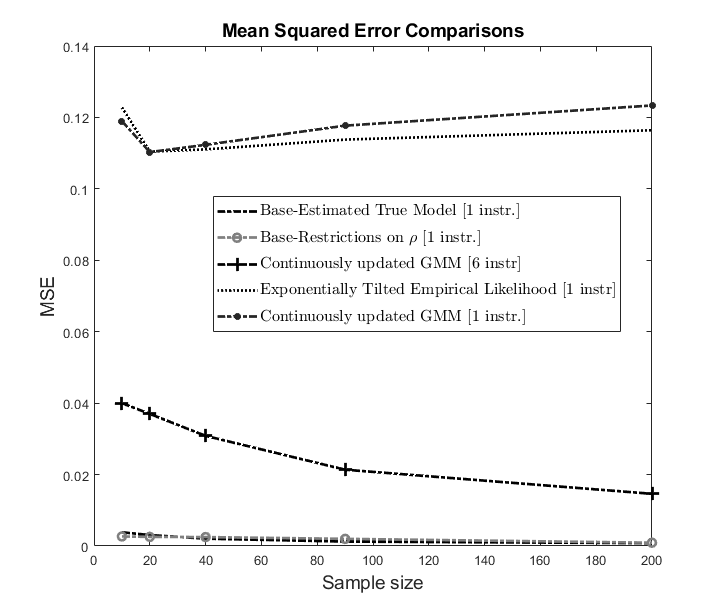}
		\vspace{-0.1 in} \label{MCunc}
		\protect\caption{$\hat\beta$ with different estimators (500 MC replications)}
	\end{figure} 
	In Figure \ref{MCunc2} I present the same results but I focus on the relative difference between incorrect restrictions $(\rho_{CR}=\rho_{R})$ versus using the unrestricted density. In this case the bias - variance trade-off shows up at a relatively small sample size.

	\begin{figure}[H]
		\includegraphics[scale=0.62]{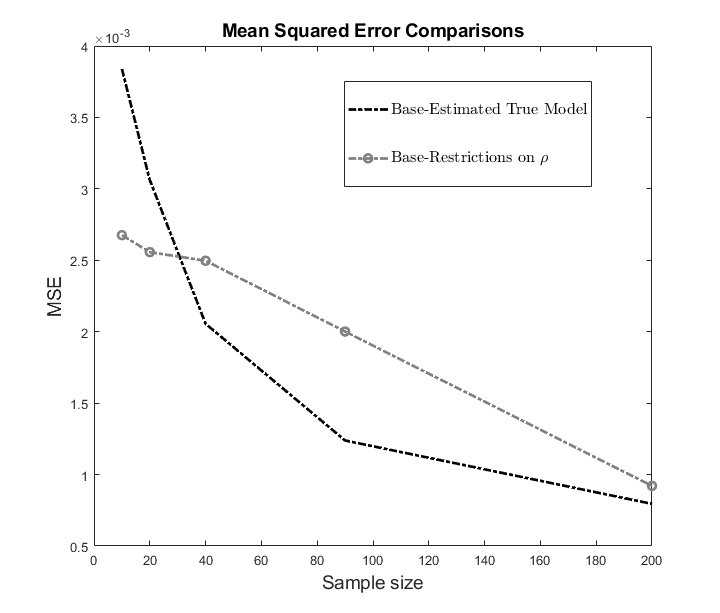}
		\vspace{-0.1 in}  \label{MCunc2}
		\protect\caption{$\hat\beta$ for restricted $\Phi$ vs True model (500 MC replications)}
	\end{figure}

\subsection{Additional Analytical and Empirical Results for Application}

\subsubsection{Basic Elements of Long Run Risks model with no time preference risk}

The representative agent maximizes a constant elasticity of substitution recursion 
\[U_{t}=\left((1-\beta)C^{1-\rho}_{t}+\beta R_{t}(U_{t+1})^{1-\rho}\right)^\frac{1}{1-\rho}\]
where $R_{t}(U_{t+1})$ is the conditional certainty equivalent, $\mathbb{E}_{t}\left(U^{1-\gamma}_{t+1}\right)^{\frac{1}{1-\gamma}}$, and $EIS=\psi=\frac{1}{\rho}$. The inter-temporal marginal rate of substitution is therefore equal to \[M_{t+1}:=\beta\left(\frac{C_{t+1}}{C_{t}}\right)^{-\rho}\left(\frac{U_{t+1}}{R_{t}(U_{t+1})}\right)^{\rho-\gamma}\]

Using Euler's theorem, the value function $U_{t}$, which is homogeneous of degree one, can be written as $U_{t}=\frac{\partial U_{t}}{\partial C_{t}}C_{t}+\mathbb{E}_{t}\frac{\partial U_{t}}{\partial U_{t+1}}U_{t+1}$. Moreover, define wealth as the value of the aggregate consumption stream in equilibrium, $W_{t}=\frac{U_{t}}{\frac{\partial U_{t}}{\partial C_{t}}}$. 

Then, using the definition of $U_{t}$, the wealth to consumption ratio is equal to $\frac{W_{t}}{C_{t}}=  \frac{1}{1-\beta}\left(\frac{U_{t}}{C_{t}}\right)^{1-\rho}=\frac{1}{1-\beta}\left(\frac{U_{t}}{C_{t}}\right)^{\frac{\psi-1}{\psi}}$ and $W_{t}=C_{t}+\mathbb{E}_{t}M_{t+1}W_{t+1}$. Finally, using the definition of $R_{a,t+1}=\frac{W_{t+1}}{W_{t}-C_{t}}$ and the relationship between the wealth to consumption ratio and the continuation value, we have that $R_{a,t+1}=\left(\beta\left(\frac{C_{t+1}}{C_{t}}\right)^{-\rho}\left(\frac{U_{t+1}}{R_{t}(U_{t+1})}\right)^{\rho-1}\right)^{-1}$. Substituting for $\left(\frac{U_{t+1}}{R_{t}(U_{t+1})}\right)$ in $M_{t+1}$ leads to the relation between $R_{a,t+1}$ and $M_{t+1}$  in \eqref{lrr}: \[M_{t+1}:=\beta^{\theta}\left(\frac{C_{t+1}}{C_{t}}\right)^{-\frac{\theta}{\psi}}R^{\theta-1}_{a,t+1}\] 

\subsubsection{}
\textit{Empirical Results}
\begin{table}[H]
	\begin{centering}
		\caption{Robust Confidence Set for Reduced form Parameters}\bigskip
		\begin{tabular}{|c||c||c|c|}
			\hline
			Parameter & $q_{2.5\%}$ & $\text{mode}$  & $ q_{95\%}$ \\ \hline 
			$\rho$    &  0.8678 & 0.9757 & 0.9934                    \\  \hline
			$\chi_{x}$& 0.1321 & 0.1539  & 0.2101\\  \hline
			$\sigma$  &	0.0001 & 0.0008 & 0.0024\\ \hline 		    
			$\rho_{l}$&- &0.9560& - \\ \hline 			
			$\sigma_{l}$&- & 0.0004&- \\  \hline 			
			$\rho_{v,x}$	& 0.8747 & 0.9078& 0.9555 \\ \hline 			
			$\sigma_{v,x}$	& 0.0011 & 0.0078 & 0.0099 \\   \hline
		\end{tabular}
		\par\end{centering}
\end{table}

\end{document}